\def\XS{\xspace}
\DeclareMathAlphabet{\mathb}{OML}{cmm}{b}{it}
\def\sbm#1{\ensuremath{\mathb{#1}}}                
\def\sbmm#1{\ensuremath{\boldsymbol{#1}}}          
\def\scu#1{\ensuremath{\mathcal{#1\XS}}}           
\def\sbl#1{\ensuremath{\mathbb{#1}}}              
\def\Ab{{\sbm{A}}\XS}  \def\ab{{\sbm{a}}\XS}
\def\Bb{{\sbm{B}}\XS}  
\def\Db{{\sbm{D}}\XS}  
  \def\eb{{\sbm{e}}\XS}
\def\Mb{{\sbm{M}}\XS}
  \def\vb{{\sbm{v}}\XS}
  \def\wb{{\sbm{w}}\XS}
  \def\xb{{\sbm{x}}\XS}
  \def\yb{{\sbm{y}}\XS}
\def\Zb{{\sbm{Z}}\XS}  \def\zb{{\sbm{z}}\XS}
\def\Cc{{\scu{C}}\XS}
\def\Fc{{\scu{F}}\XS}
\def\Pc{{\scu{P}}\XS}
\def\Cbb{{\sbl{C}}\XS}  
\def\Ebb{{\sbl{E}}\XS}
\def\Nbb{{\sbl{N}}\XS}  
\def\Pbb{{\sbl{P}}\XS}  
\def\Rbb{{\sbl{R}}\XS}
\def\alphab      {{\sbmm{\alpha}}\XS}
\def\ker{\textrm{Ker}\,}
\def\sgn{{\mathrm{sign}}}							
\def\supp{{\mathrm{supp}}}
\newsavebox{\fminibox}
\newlength{\fminilength}
  \def\+{^\dagger}
\def\nequiv{\not\kern-.05em\equiv}
\def\egal{\kern-.5em=\kern-.5em}        
\def\propt{\kern-.2em\propto\kern-.2em} 
\def\intdouble{\int\kern-0.3em\int}
\def\inttriple{\int\kern-0.3em\int\kern-0.3em\int}
\def\rond#1{\overset{\kern-0.33em~_\circ}{#1}}
\def\rondit[#1]#2{\overset{\kern#1~_\circ}{#2}}
\newcommand{\Id}{\ensuremath{\mathrm{Id}}}
\def\qed{\ifmmode\hbox{\hfill\sqb}\else{\ifhmode\unskip\fi%
\nobreak\hfil
\penalty50\hskip1em\null\nobreak\hfil$\blacksquare$
\parfillskip=0pt\finalhyphendemerits=0\endgraf}\fi}
\def\XS{\xspace}
\DeclareMathAlphabet{\mathb}{OML}{cmm}{b}{it}
\def\sbm#1{\ensuremath{\mathb{#1}}}                
\def\sbmm#1{\ensuremath{\boldsymbol{#1}}}          
\def\scu#1{\ensuremath{\mathcal{#1\XS}}}           
\def\Ab{{\sbm{A}}\XS}  \def\ab{{\sbm{a}}\XS}
\def\Bb{{\sbm{B}}\XS}  
\def\Db{{\sbm{D}}\XS}  
  \def\eb{{\sbm{e}}\XS}
\def\Mb{{\sbm{M}}\XS}
  \def\vb{{\sbm{v}}\XS}
  \def\wb{{\sbm{w}}\XS}
  \def\xb{{\sbm{x}}\XS}
  \def\yb{{\sbm{y}}\XS}
\def\Zb{{\sbm{Z}}\XS}  \def\zb{{\sbm{z}}\XS}
\def\Cc{{\scu{C}}\XS}
\def\Fc{{\scu{F}}\XS}
\def\Pc{{\scu{P}}\XS}
\def\MAP{^{\kern1pt{\rm MAP}\kern-1pt}}
\def\Cbb{{\sbl{C}}\XS}  
\def\Ebb{{\sbl{E}}\XS}
\def\Nbb{{\sbl{N}}\XS}  
\def\Pbb{{\sbl{P}}\XS}  
\def\Rbb{{\sbl{R}}\XS}
\def\alphab      {{\sbmm{\alpha}}\XS}
\def\Psib        {{\sbmm{\Psi}}\XS}
\def\PM{\kern0pt^{\textrm{{\scriptsize PM}}}\kern0pt}
\def\MMAP{\kern1pt^{\textrm{{\tiny MMAP}}}\kern-1pt} 
\def\rem#1{}                    
 \def\btabu{\begin{tabular}}             \def\etabu{\end{tabular}}
\newtheorem{thmchapter}{Theorem}[section]
\newtheorem{definchapter}[thmchapter]{Definition}
\newtheorem{prop}[thmchapter]{Proposition}
\newtheorem{corollary}[thmchapter]{Corollary}
\newtheorem{lemme}[thmchapter]{Lemma}
\newtheorem{remark}[thmchapter]{Remark}
\title{An analysis of blocks sampling strategies in compressed sensing}
\author{J\'er\'emie Bigot$^{(1)}$, Claire Boyer$^{(2,3)}$ and Pierre Weiss$^{(2,3,4,5)}$ \vspace{0.2cm}  \\
\\
$^{(1)}$ DMIA, Institut Sup\'erieur de l'A\'eronautique et de l'Espace, Toulouse, France\\ 
$^{(2)}$ Institut de Math\'ematiques de Toulouse, IMT-UMR5219, Universit\'e de Toulouse, France\\
$^{(3)}$ CNRS, IMT-UMR5219, Toulouse, France \\
$^{(4)}$ Institut des Technologies Avanc\'ees du Vivant, ITAV-USR3505, Toulouse, France\\
$^{(5)}$ CNRS, ITAV-USR3505, Toulouse, France  \\
{\small {jeremie.bigot}@isae.fr}, {\small {claire.boyer}@math.univ-toulouse.fr}, {\small {pierre.armand.weiss}@gmail.com}\\
 \vspace{0.2cm}}
\begin{document}
\maketitle

\begin{abstract}
Compressed sensing is a theory which guarantees the exact recovery of sparse signals from a small number of linear projections. 
The sampling schemes suggested by current compressed sensing theories are often of little practical relevance since they cannot be implemented on real acquisition systems. 
In this paper, we study a new random sampling approach that consists in projecting the signal over blocks of sensing vectors. 
A typical example is the case of blocks made of horizontal lines in the 2D Fourier plane.
We provide theoretical results on the number of blocks that are required for exact sparse signal reconstruction. 
This number depends on two properties named intra and inter-support block coherence. 
We then show through a series of examples including Gaussian measurements, isolated measurements or blocks in time-frequency bases, that the main result is sharp in the sense that the minimum amount of blocks necessary to reconstruct sparse signals cannot be improved up to a multiplicative logarithmic factor.
The proposed results provide a good insight on the possibilities and limits of block compressed sensing in imaging devices such as magnetic resonance imaging, radio-interferometry or ultra-sound imaging.
\end{abstract}
\textbf{Key-words:} Compressed Sensing, blocks of measurements, MRI, exact recovery, $\ell_1$ minimization.

\section{Introduction}

Compressive Sensing is a new sampling theory that guarantees accurate recovery of signals from a small number of linear projections using three ingredients listed below:
\begin{itemize}
 \item \textbf{Sparsity:} the signals to reconstruct should be sparse, meaning that they can be represented as a linear combination of a small number of atoms in a well-chosen basis. A vector $\xb \in \Cbb^n$ is said to be $s$-sparse if its number of non-zero entries is equal to $s$.
 \item \textbf{Nonlinear reconstruction:} a key feature ensuring recovery is the use of non linear reconstruction algorithms. For instance, in the seminal papers \cite{donoho2006compressed,candes2006robust}, it is suggested to reconstruct $\xb$ via the following $\ell_1$-minimization problem:
\begin{align}
\label{pbMin1}
\min_{\zb \in \Cbb^n} \left\| \zb \right\|_1 \qquad \text{such that } \qquad \Ab \zb = \yb,
\end{align}
where $\Ab\in \Cbb^{q\times n}$ ($q\leq n$) is a sensing matrix, $\yb = \Ab \xb \in \Cbb^q$ represents the measurements vector, and $\| \zb \|_1 = \sum_{i=1}^n | z_i |$ for all $\zb = \left( z_1 , \hdots , z_n \right) \in \Cbb^n$. 
 \item \textbf{Incoherence of the sensing matrix:} the matrix $\Ab$ should satisfy an incoherence property described later. If $\Ab$ is perfectly incoherent (e.g. random Gaussian measurements or Fourier coefficients drawn uniformly at random) then it can be shown that only $q=O(s\ln(n))$ measurements are sufficient to perfectly reconstruct the $s$-sparse vector $\xb$.
\end{itemize}

The construction of good sensing matrices $\Ab$ is a keystone for the successful application of compressed sensing. The use of matrices with independent random entries has been popularized in the early papers \cite{candes2006stable,candes2008restricted}. Such sensing matrices have limited practical interest since they can hardly be stored on computers or implemented on practical systems. More recently, it has been shown that partial random circulant matrices \cite{foucart2013mathematical,puy2012universal,rauhut2012restricted} may be used in the compressed sensing context. With this structure a matrix-vector product can be efficiently implemented on a computer by convolving the signal $\xb$ with a random pulse and by subsampling the result. This technique can also be implemented on real systems such as magnetic resonance imaging (MRI) or radio-interferometry \cite{puy2012spread}. However this demands to modify the acquisition device physics, which is often uneasy and costly. 
Another way to proceed consists in drawing $q$ sampling locations among $n$ possible ones, see \cite{candes2006robust,rudelson2008sparse}. This setting, which is the most widespread in applications, is a promising avenue to implement compressed sensing strategies on nearly all existing devices. Its efficiency depends on the \textit{incoherence} between the acquisition and sparsity bases \cite{donoho2001uncertainty,candes2007sparsity}. It is successfully used in radio interferometry \cite{wiaux2009compressed}, digital holography \cite{marim2010compressed} or MRI  \cite{lustig2007sparse} where the measurements are Fourier coefficients. 

To the best of our knowledge, all current compressed sensing theories suggest that the measurements should be drawn independently at random. 
This is impossible for most acquisition devices which have specific acquisition constraints. 
A typical example is MRI, where the samples should lie along continuous curves in the Fourier domain (see e.g. \cite{wright1997MRI,lustig2008fast}). 
As a result, most current implementations of compressed sensing do not comply with theory.
For instance, in the seminal work on MRI \cite{lustig2007sparse}, the authors propose to sample parallel lines of the Fourier domain (see Figure \ref{fig:schema} and \ref{fig:lustig}).
\begin{figure}
\begin{center}
\btabu{@{}ccc}
\includegraphics[height=4cm]{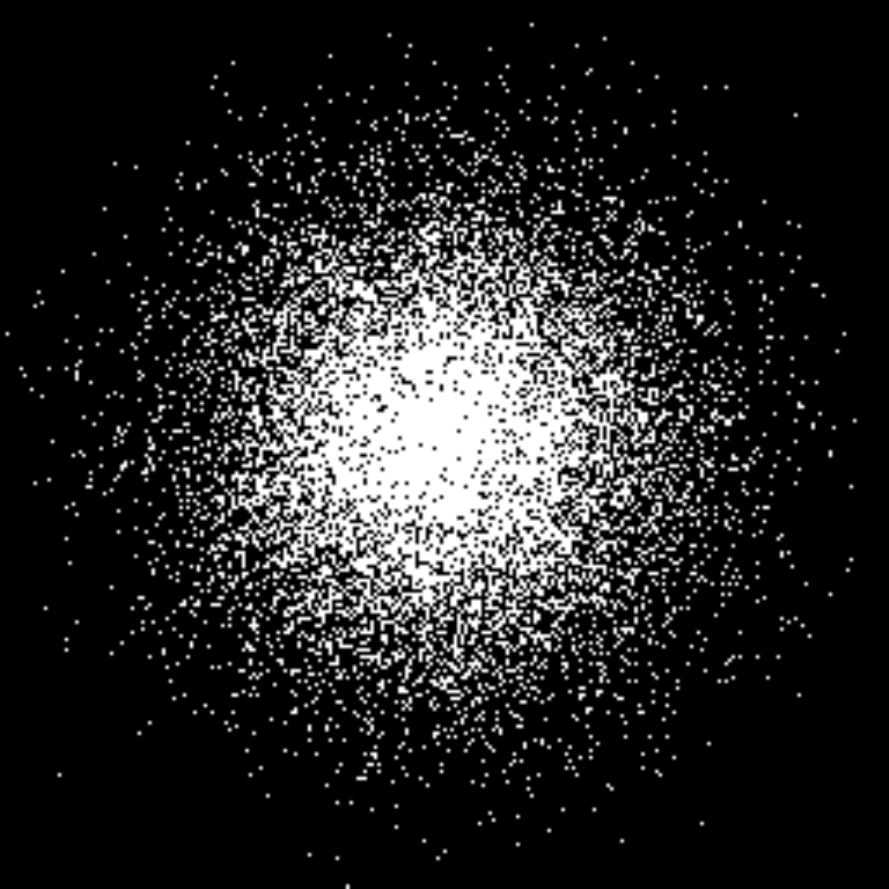} &
\includegraphics[height=4cm]{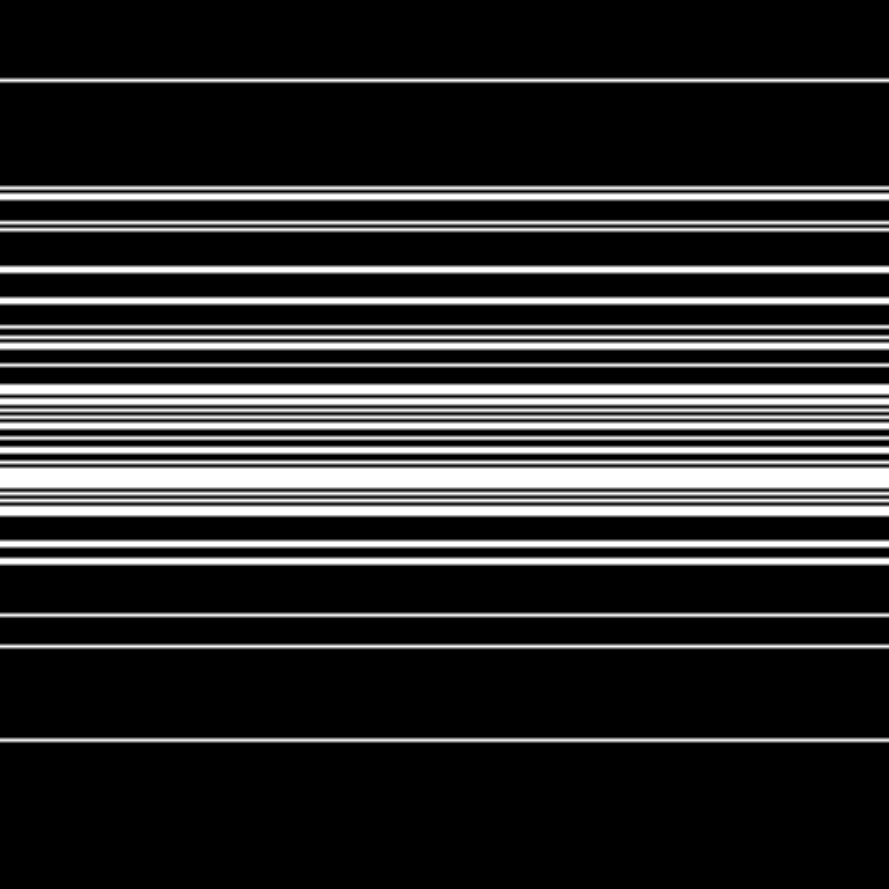} &
\includegraphics[height=4cm]{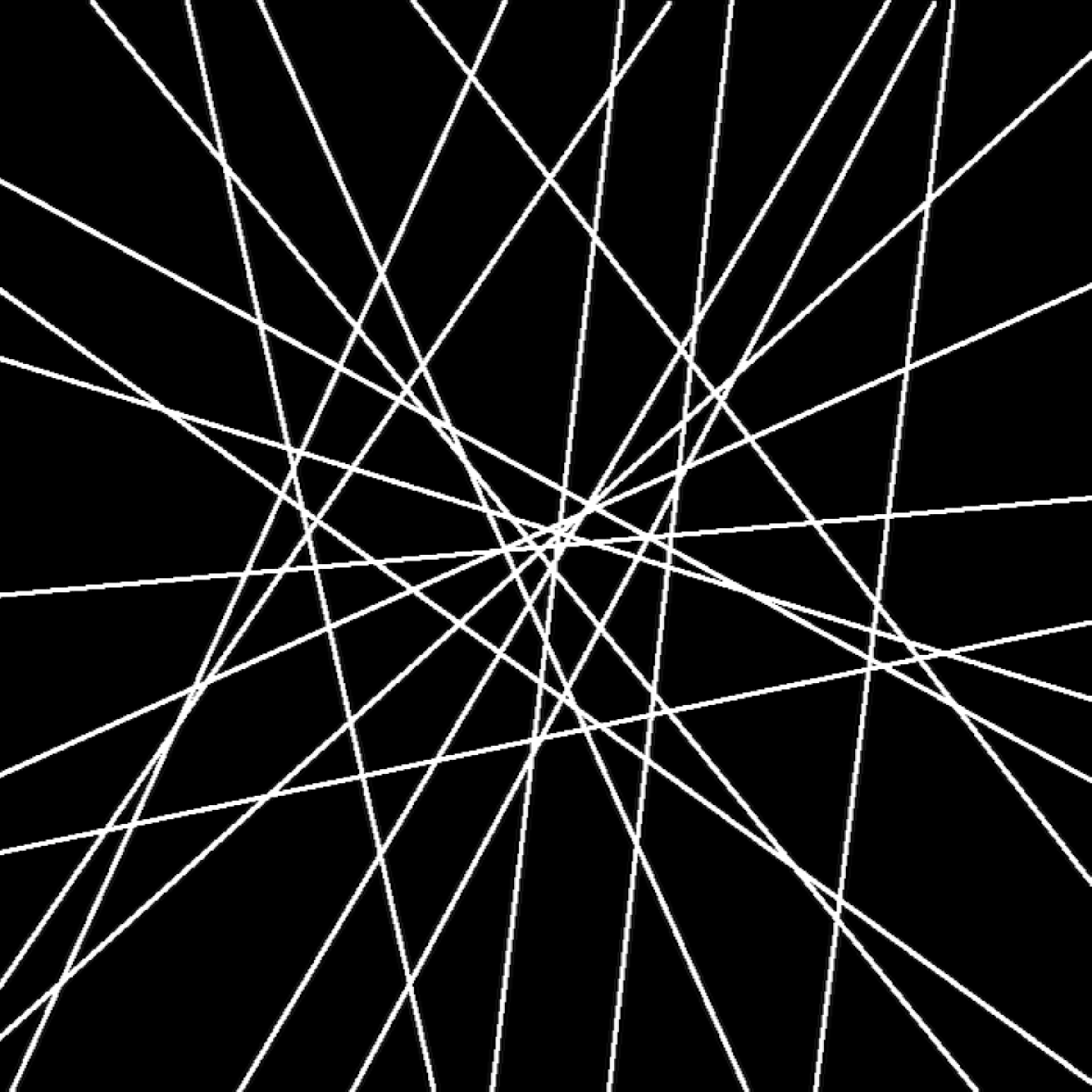} \\
{\small (a)}&{\small (b)} & {\small (c)}
\etabu
\caption{\label{fig:schema}{\bf An example of MRI sampling schemes in the \textit{k-space} (the 2D Fourier plane where low frequencies are centered) } 
(a): Isolated measurements drawn from a probability distribution with radial distribution. 
(b): Sampling scheme in the case of non-overlapping blocks of measurements that correspond to horizontal lines in the 2D Fourier domain. 
(c): Sampling scheme in the case of overlapping blocks of measurements that correspond to straight lines.
}
\end{center}
\end{figure}

\paragraph{Contributions}

In this paper, we aim at bridging the gap between theory and practice. 
We consider a sensing matrix $\Ab$ that is constructed by stacking blocks of measurements and not just isolated measurements. 
In the proposed formalism, the blocks can be nearly arbitrary random matrices. 
For instance, our main result covers the case of blocks made of groups of rows of a deterministic sensing matrix (e.g. lines in the Fourier domain) or blocks with random entries (e.g. Gaussian blocks).
We study the problem of exact non-uniform sparse recovery in a noise-free setting. 
This sampling strategy raises various questions.
How many blocks of measurements are needed to ensure exact reconstruction?
Is the required number of blocks compatible with faster acquisition? 

Our first contribution is to extend the standard compressed sensing theorems to the case of blocks of measurements. 
We then show that our result is sharp in a few practical examples and extends the best currently known results in compressed sensing.
We also prove that in many cases, imposing a block structure has a dramatic effect on the recovery guarantees since it strongly impoverishes the variety of admissible sampling patterns.
Overall, we believe that the presented results give a good theoretical basis to the use of block compressed sensing and show the limits to this setting. This work also provides some insight on many currently used sampling patterns in MRI, echography, computed tomography scanners,...

\paragraph{Related work}

After submitting the first version of this paper, the authors of \cite{polak2012performance} attracted our attention to the fact that their work dealt with a very similar setting. We therefore make a comparison between the results in Section \ref{subsubsec:FourierWavelet}.

\paragraph{Outline of the paper}

The remaining of the paper is organized as follows. In Section \ref{sec:preliminaries}, we first describe the notation and the main assumptions necessary to derive a general theory on blocks of measurements.
We present the main result of this paper about blocks-sampling acquisition in Section \ref{sec:MainResults}.
In Section \ref{sec:tightness}, we discuss the sharpness of our results. First, we show that our approach provides the same guarantees that existing results when using isolated measurements (either Gaussian or randomly extracted from deterministic transforms). 
We conclude on a pathological example to show sharpness in the case of blocks sampled from separable transforms.

\section{Preliminaries}\label{sec:preliminaries}

\subsection{Notation}

Let $S = \left( S_1 , \hdots , S_s\right)$ be a subset of $\{1, \hdots , n\}$ of cardinality $s$.
We denote by $P_S \in \Cbb^{n\times s}$ the matrix with columns $\left( \eb_i \right)_{i\in S}$ where $\eb_i$ denotes the $i$-th vector of the canonical basis of $\Cbb^n$. 
For given $\Mb \in \Cbb^{n \times n}$ and $\vb \in \Cbb^n$, we also define  $\Mb_S = \Mb P_S$, and $\vb_S =P_S^* \vb $. 

\subsection{Main assumptions}

Recall that we consider the following $\ell_1$-minimization problem:
\begin{align}
\label{pb:min}
\min_{\zb \in \Cbb^n} \| \zb \|_1 \qquad \text{s.t.} \quad \yb= \Ab\zb,
\end{align}
where $\Ab$ is the sensing matrix, $\yb =  \Ab \xb \in \Cbb^q$ is the measurements vector, $\xb \in \Cbb^{n}$ is the unknown vector to be recovered. In this paper, we assume that the sensing matrix $\Ab$ can be written as
\begin{align}
\label{eq:samplingMatrixGeneral}
 \Ab = \frac{1}{\sqrt{m}} \begin{pmatrix}
\displaystyle {\Bb_{1}} \\
\vdots \\
\displaystyle {\Bb_{m}} 
\end{pmatrix},
\end{align}
where $\Bb_1, \hdots , \Bb_m$ are i.i.d. copies of a random matrix $\Bb$,  satisfying 
\begin{align}
\label{condIsotropy}
\Ebb\left(  \Bb^* \Bb \right) = \Id,
\end{align} 
where $\Id$ is the $n\times n $ identity matrix. This condition is the extension of the isotropy property described in \cite{candes2011probabilistic} in a blocks-constrained acquisition setting.

In most cases studied in this paper, the random matrix $\Bb$ is assumed to be of fixed size $p \times n$ with $p \in \Nbb^*$.
This assumption is however not necessary. 
The number of blocks of measurements is denoted $m$, while the overall number of measurements is denoted $q$. When $\Bb$ has a fixed size $p\times n$, $q=mp$.

The following quantities will be shown to play a key role to ensure sparse recovery in the sequel.
\begin{definchapter}
\label{def:quantities}
Let $S \subset \{ 1 ,\hdots , n \}$ be a set of cardinality $s$. We denote by
$\left( \mu_i(S) \right)_{1\leq i \leq 3}$ the smallest positive reals such that the following bounds hold either deterministically or stochastically (in a sense discussed later)   
\begin{align}  
\notag
\left\| \Bb_S^* \Bb_S \right\|_{2\rightarrow 2} \leq \mu_1(S), \qquad 
 \sqrt{s}   \max_{i \in S^c} \left\| \Bb_{S}^* \Bb\eb_i   \right\|_{2} \leq \mu_2(S) ,
\\
\label{ineq:conditions}
  s \max_{i \in S^c} \left\| \Ebb \left[ \Bb_{S}^* \left( \Bb \eb_i \right) \left( \Bb \eb_i \right)^* \Bb_S   \right] \right\|_{2\rightarrow 2} \leq \mu_3(S).
\end{align}
Define
$$ \gamma(S) := \max_{1\leq i \leq 3} \mu_i(S).$$
\end{definchapter}

The quantities introduced in Definition \ref{def:quantities} can be interpreted as follows.
The number $\mu_1(S)$ can be seen as an \textit{intra-support} block coherence, whereas $\mu_2(S)$ and $\mu_3(S)$ are related to the \textit{inter-support} block coherence, that is the coherence between blocks restricted to the support of the signal and blocks restricted to the complementary of this support. Note that the factors $\sqrt{s}$ and $s$ involved in the definition of $\mu_2(S)$ and $\mu_3(S)$ ensure homogeneity between all of these quantities.

\subsection{Application examples}

The number of applications of the proposed setting is very large. For instance, it encompasses those proposed in \cite{candes2011probabilistic}.
Let us provide a few examples of new applications below.

\subsubsection{Partition of orthogonal transforms}\label{subsubsec:partition}

Let $\Ab_0 \in \Cbb^{n\times n}$ denote an orthogonal transform. 
Blocks can be constructed by partitioning the rows $\left(  \ab_i^* \right)_{1\leq i \leq n}$ from $\Ab_0$:
$$ 
\Bb_j = \left(  \ab_i^* \right)_{i\in I_j} \quad \text{for} \quad I_j \subset \{1, \hdots , n \} \quad \text{s.t.} \quad \bigsqcup_{j=1}^M I_j = \{1, \hdots , n \},
$$
where $\bigsqcup$ stands for the disjoint union. This case is the one studied in \cite{polak2012performance}.

Let $\Pi = \left( \pi_1 , \hdots , \pi_M \right)$ be a discrete probability distribution on the set of integers $\left\lbrace 1, \hdots , M \right\rbrace$.
A random sensing matrix $\Ab$ can be constructed by stacking $m$ i.i.d. copies of the random matrix $\Bb$ defined by $\Pbb(\Bb=\Bb_k/\sqrt{\pi_k})=\pi_k$ for all $k\in \{ 1, \hdots , M \}$.
Note that the normalization by $1/\sqrt{\pi_k}$ ensures that the isotropy condition $\Ebb\left[  \Bb^*\Bb\right]=\Id_n$ is verified.

\subsubsection{Overlapping blocks issued from orthogonal transforms}\label{subsubsec:overlap}

In the last example, we concentrated on partitions, i.e. non-overlapping blocks of measurements. 
The case of overlapping blocks can also be handled. 
To do so, define the blocks $\left( \Bb_j \right)_{1\leq j \leq M}$ as follows:
$ \Bb_j = \left( \frac{1}{\sqrt{\alpha_i}} \ab_i^* \right)_{i \in I_j}, 
$
where $\displaystyle \bigcup_{j=1}^M I_j = \{1, \hdots  , n\}$, and $\alpha_i$ denotes the multiplicity of the row $\ab_i^*$, i.e. the number of appearances $\alpha_i = |\{j, i \in I_j \}|$ of this row in different blocks.
This renormalization is sufficient to ensure $\Ebb \left[ \Bb^* \Bb \right] = \Id_n$ where $\Bb_k$ is defined similarly to the previous example. 
See Appendix \ref{app:overlap} for an illustration of this setting in the case of 2D Fourier measurements.

\subsubsection{Blocks issued from tight or continuous frames}\label{subsubsec:frames}

Until now, we have concentrated on projections over a fixed set of $n$ vectors. 
This is not necessary and the projection set can be redundant and even infinite. 
A typical example is the Fourier transform with a continuous frequency spectrum. 
This example is discussed in more details in \cite{foucart2013mathematical,candes2011probabilistic}.

\subsubsection{Random blocks}\label{subsubsec:gaussian}

In the previous examples, the blocks were predefined and extracted from deterministic matrices or systems.
The proposed theory also applies to random blocks.
For instance, one could consider blocks with i.i.d. Gaussian entries since these blocks satisfy the isotropy condition \eqref{condIsotropy}. 
This example is of little practical relevance since stacking random Gaussian matrices produces a random Gaussian matrix that can be analyzed with standard compressed sensing approaches. It however presents a theoretical interest in order to show the sharpness of our main result.
Another example with potential interest is that of blocks generated randomly using random walks over the acquisition space \cite{chauffert2014}.

\section{Main result\label{sec:MainResults}}

Our main result reads as follows.
\begin{thmchapter}
\label{thm:recovery}
Let $S \subset \{ 1 ,\hdots , n \}$ be a set of indices of cardinality $s$  and suppose that $\xb \in \Cbb^{n}$ is an $s$-sparse vector supported on $S$.  Fix $\varepsilon \in (0,1)$. Suppose that the sampling matrix $\Ab$ is constructed as in \eqref{eq:samplingMatrixGeneral}, and that the isotropy condition \eqref{condIsotropy} holds.  Suppose that the bounds \eqref{ineq:conditions} hold deterministically. If the number of blocks $m$ satisfies the following inequality
\begin{eqnarray*}
m &\geq & c {\gamma(S)} \log \left( 4n \right) \log\left (12 \varepsilon^{-1}\right),
\end{eqnarray*}
then $\xb$ is the unique solution of \eqref{pb:min}  with probability at least $1-\varepsilon$. The constant $c$ can be taken equal to $ 3 \times 534$.
\end{thmchapter}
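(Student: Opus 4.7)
The strategy is the classical one pioneered by Fuchs and refined by Cand\`es--Plan, Gross and others: establish sufficient optimality conditions in the form of an (inexact) dual certificate, then construct such a certificate via the \emph{golfing scheme}. The role of the quantities $\mu_1,\mu_2,\mu_3$ from Definition~\ref{def:quantities} is precisely to make matrix/vector Bernstein inequalities applicable to the block-structured sum $\Ab^*\Ab = \frac{1}{m}\sum_{k=1}^{m}\Bb_k^*\Bb_k$.

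First, I would invoke (and if necessary reprove) the standard \emph{inexact duality lemma}: if $\left\|\Ab_S^*\Ab_S - \Id_s\right\|_{2\to 2}\leq 1/2$ and if there exists $\vb \in \ran(\Ab^*)$ with $\left\|\vb_S - \sgn(\xb_S)\right\|_2 \leq 1/4$ and $\left\|\vb_{S^c}\right\|_\infty \leq 1/4$, then $\xb$ is the unique minimizer of \eqref{pb:min}. This reduces the problem to two probabilistic tasks: (i) controlling the conditioning of $\Ab_S^*\Ab_S$ and (ii) building a certificate $\vb$.

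For task (i), I write $\Ab_S^*\Ab_S - \Id_s = \frac{1}{m}\sum_{k=1}^{m}\left(\Bb_{k,S}^*\Bb_{k,S}-\Id_s\right)$, a sum of i.i.d. centered self-adjoint matrices. The isotropy condition \eqref{condIsotropy} implies $\Ebb[\Bb_S^*\Bb_S]=\Id_s$; the summands have operator norm bounded by $\mu_1(S)+1$ by hypothesis, and their matrix variance is controlled similarly by $\mu_1(S)$. Matrix Bernstein then yields the desired conditioning with probability at least $1-\varepsilon/3$ once $m \gtrsim \mu_1(S)\log(s)$.

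For task (ii), I split the $m$ blocks into $L \asymp \log(n)$ disjoint batches of size $m/L$, each yielding an empirical operator $\Psib_\ell := \frac{L}{m}\sum_{k \in \mathrm{batch}\,\ell}\Bb_k^*\Bb_k$, unbiased for $\Id$. Starting from $\wb_0 := \sgn(\xb_S)$ and $\vb_0:=0$, iterate
\begin{equation*}
\vb_\ell := \vb_{\ell-1} + \Psib_\ell P_S \wb_{\ell-1}, \qquad \wb_\ell := \sgn(\xb_S) - P_S^*\vb_\ell,
\end{equation*}
so that $\vb_L \in \ran(\Ab^*)$ by construction. At each step, two concentration estimates must hold: a matrix-Bernstein bound gives $\left\|\wb_\ell\right\|_2 \leq \tfrac12 \left\|\wb_{\ell-1}\right\|_2$ (geometric contraction, uses $\mu_1(S)$), and a vector-Bernstein bound on $P_{S^c}^*\Psib_\ell P_S \wb_{\ell-1}$ gives control of $\left\|P_{S^c}^*\vb_\ell\right\|_\infty$ in terms of $\left\|\wb_{\ell-1}\right\|_2$ and $\left\|\wb_{\ell-1}\right\|_\infty$. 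The two off-support parameters $\mu_2(S)$ and $\mu_3(S)$ are exactly the quantities controlling the uniform bound on summands and their variance in this vector Bernstein step, respectively, when one conditions on the previous batches. A union bound over the $L$ steps and the $n-s$ coordinates of $S^c$ then closes the argument once $m/L \gtrsim \gamma(S)\log(n)\log(L/\varepsilon)$, which reassembles into the claimed bound $m \gtrsim \gamma(S)\log(4n)\log(12/\varepsilon)$.

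\textbf{Main difficulty.} The block analogues of matrix and vector Bernstein are standard, so the technical heart of the proof is verifying that the three parameters $\mu_i(S)$ in Definition~\ref{def:quantities} are \emph{exactly} the right quantities arising in the Bernstein variance and uniform-bound terms for the block-structured summands $\Bb_k^*\Bb_k$. In the isolated-measurement case of \cite{candes2011probabilistic} these reduce to the usual coherence and incoherence-with-sign parameters, but in the block case the variance involves $\Ebb\bigcro{\Bb_S^*(\Bb\eb_i)(\Bb\eb_i)^*\Bb_S}$, which cannot be simplified further in general --- hence the appearance of $\mu_3(S)$ as a separate quantity. Keeping track of the interaction between the golfing iterates and the conditional distribution of the remaining batches, and tracking the absolute constants through matrix Bernstein to reach the explicit value $c = 3 \times 534$, are the places where care is most required.
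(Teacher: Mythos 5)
Your overall architecture is exactly the paper's: the inexact duality lemma of Cand\`es--Plan reduces everything to (i) conditioning of $\Ab_S^*\Ab_S$ and (ii) construction of a certificate by the golfing scheme, with $\mu_1(S)$ feeding the on-support Bernstein bounds and $\mu_2(S),\mu_3(S)$ supplying, respectively, the uniform bound and the variance in the off-support concentration step. That identification of the roles of the three coherences is correct and is the conceptual core of the result. Two concrete points, however, do not survive scrutiny.

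First, the duality lemma as you state it is incomplete: with only $\bigl\|\Ab_S^*\Ab_S-\Id_s\bigr\|_{2\to2}\leq 1/2$ and the two certificate conditions, the standard argument does not close. One needs, for $\hb\in\ker\Ab$, the chain $\hb_S=-(\Ab_S^*\Ab_S)^{-1}\Ab_S^*\Ab_{S^c}\hb_{S^c}$, hence a bound $\|\hb_S\|_2\leq 2\max_{i\in S^c}\|\Ab_S^*\Ab\eb_i\|_2\,\|\hb_{S^c}\|_1$, and therefore the extra hypothesis $\max_{i\in S^c}\|\Ab_S^*\Ab\eb_i\|_2\leq 1$. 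This is a separate probabilistic event that the paper verifies with a dedicated vector Bernstein estimate (controlled again by $\mu_1(S)$ and $\mu_2(S)$, costing $m\gtrsim\gamma(S)\log(n/\varepsilon)$); your proof never establishes it.

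Second, your parameter accounting does not deliver the stated bound. With $L\asymp\log n$ equal batches each required to satisfy $m/L\gtrsim\gamma(S)\log(n)\log(L/\varepsilon)$, the total is $m\gtrsim\gamma(S)\log^2(n)\log(\log(n)/\varepsilon)$, which does not ``reassemble'' into $c\,\gamma(S)\log(4n)\log(12\varepsilon^{-1})$. The paper avoids the extra $\log$ factor by taking $L\asymp\log s$ and, crucially, by making the batches \emph{non-uniform}: only the first two batches are large ($m_\ell\gtrsim\gamma(S)\log(4n)\log(\varepsilon^{-1})$) with correspondingly tight contraction parameters $r_\ell\asymp 1/\sqrt{\log n}$, while batches $3,\dots,L$ need only $m_\ell\gtrsim\gamma(S)\log(L/\varepsilon)$ because their looser tolerances $t_\ell\asymp\log(4n)/\sqrt{s}$ are absorbed by the factor $\prod_{j<\ell}r_j'\lesssim 1/\log n$ already accumulated. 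Relatedly, using matrix Bernstein for the contraction $\|\wb_\ell\|_2\leq\tfrac12\|\wb_{\ell-1}\|_2$ is wasteful: since $\wb_{\ell-1}$ is independent of batch $\ell$, a vector Bernstein inequality applied to $(\Id_s-\tfrac{m}{m_\ell}\Ab_S^{(\ell)*}\Ab_S^{(\ell)})\wb_{\ell-1}$ avoids the dimensional factor $s$ in the failure probability, which is what permits the cheap later batches. Without these adjustments your argument proves a valid but strictly weaker theorem.
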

The proof of Theorem \ref{thm:recovery} is detailed in Section \ref{sec:proof1}.  It is based on the so-called golfing scheme introduced in \cite{gross2011recovering} for matrix completion, and adapted by \cite{candes2011probabilistic} for compressed sensing from isolated measurements. Note that Theorem \ref{thm:recovery} is a non uniform result in the sense that reconstruction holds for a given support $S$ and not for all $s$-sparse signals. It is likely that uniform results could be derived by using the so-called Restricted Isometry Property. However, this strong property is usually harder to prove and leads to narrower classes of admissible matrices and to larger number of required measurements. 

\begin{remark}[The case of stochastic bounds] In Definition \ref{def:quantities}, we say that the bounds \emph{deterministically hold} if the inequalities \eqref{ineq:conditions} are satisfied almost surely. This assumption is convenient to simplify the proof of Theorem \ref{thm:recovery}. 
Obviously, it is not satisfied in the setting where the entries of $\Bb$ are i.i.d. Gaussian variables. 
To encompass such cases, the bounds in Definition \ref{def:quantities} could \emph{stochastically hold}, meaning that the inequalities \eqref{ineq:conditions} are satisfied with large probability. 
The proof of the main result can be modified by conditioning the deviation inequalities in the Lemmas of Appendix \ref{sec:proof1} to the event that the bounds in Definition \ref{def:quantities} hold. 
Therefore, even though we do not provide a detailed proof, the lower bound on the required number of blocks in Theorem \ref{thm:recovery} remains accurate. Hence, we will propose in Section \ref{sec:Gauss} some estimates of the quantities \eqref{ineq:conditions} in the case of Gaussian measurements.
\end{remark}

The lower bound on the number $m$ of blocks of measurements in Theorem \ref{thm:recovery} depends on $\gamma(S)$ and thus on the support $S$ of the vector $\xb$ to reconstruct. 
In the usual compressed sensing framework, the matrix $\Ab$ is constructed by stacking realizations of a random vector $\ab$.  
The best known results state that $O(s\mu \log(n))$ isolated measurements are sufficient to reconstruct $\xb$ with high probability. 
The coherence $\mu$ is the smallest number such that $\|\ab\|_\infty^2\leq \mu$.
The quantity $\gamma(S)$ in Theorem \ref{thm:recovery} therefore replaces the standard factor $s\mu$. 
The coherence $\mu$ is usually much simpler to evaluate than $\gamma(S)$ which depends on three properties of the random matrix $\Bb$: the intra-support coherence $\mu_1$ and the inter-support coherences $\mu_2$ and $\mu_3$. 
As will be seen in Section \ref{sec:tightness}, it is important to keep all those quantities in order to obtain tight reconstruction results. Nevertheless, a rough upper bound of $\gamma(S)$, reminiscent of the coherence, can be used as shown in Proposition \ref{prop:gammaSs}.
\begin{prop}
\label{prop:gammaSs}
Let $S$ be a subset of $\{ 1, \hdots , n \}$ of cardinality $s$. 
Assume that  the following inequality holds either deterministically or stochastically
$$
\|\Bb ^* \Bb\|_{1\rightarrow \infty}~\leq~\mu_4
$$
with $\displaystyle \| \Bb ^* \Bb \|_{1\rightarrow \infty} = \sup_{\| \vb \|_1 \leq 1} \| \Bb ^* \Bb \vb \|_{\infty}$.
Then
\begin{align} 
\label{gamma}
\gamma(S) \leq s \mu_4.
\end{align}
\end{prop}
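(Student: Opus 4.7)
The plan is to bound each of the three quantities $\mu_1(S)$, $\mu_2(S)$, $\mu_3(S)$ from Definition \ref{def:quantities} separately by $s\mu_4$, and then take the maximum. The key observation is that $\|\Bb^*\Bb\|_{1\to\infty}=\max_{i,j}|(\Bb^*\Bb)_{i,j}|$, so the hypothesis is exactly the entrywise estimate $|(\Bb^*\Bb)_{i,j}|\leq\mu_4$ for all $i,j$, holding either deterministically or stochastically according to the regime. Notice that Jensen's inequality applied to the isotropy condition $\Ebb[\Bb^*\Bb]=\Id$ forces $\mu_4\geq 1$, a fact which will become relevant when comparing candidate bounds.

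The bounds on $\mu_1(S)$ and $\mu_2(S)$ will be routine. Since $\Bb_S^*\Bb_S$ is a positive semidefinite principal submatrix of $\Bb^*\Bb$, I would dominate its operator norm by its trace:
\[
\|\Bb_S^*\Bb_S\|_{2\to 2} \;\leq\; \tr(\Bb_S^*\Bb_S) \;=\; \sum_{j\in S}(\Bb^*\Bb)_{j,j} \;\leq\; s\mu_4,
\]
which yields $\mu_1(S)\leq s\mu_4$. For $\mu_2(S)$, I would note that $\Bb_S^*\Bb\eb_i = P_S^*(\Bb^*\Bb)\eb_i$ is the restriction to $S$ of the $i$-th column of $\Bb^*\Bb$, so its $\ell_2$-norm is at most $\sqrt{s}\,\mu_4$, and multiplication by the leading $\sqrt{s}$ gives $\mu_2(S)\leq s\mu_4$.

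The core of the argument is the bound on $\mu_3(S)$, for which the goal is to show $\|\Ebb[\Bb_S^*(\Bb\eb_i)(\Bb\eb_i)^*\Bb_S]\|_{2\to 2}\leq \mu_4$ for every $i\in S^c$. For any unit vector $\ub\in\Cbb^s$, I would introduce the zero-padded version $\vb=P_S\ub\in\Cbb^n$ and write
\[
\ub^*\Ebb\bigl[\Bb_S^*(\Bb\eb_i)(\Bb\eb_i)^*\Bb_S\bigr]\ub
= \Ebb\bigl[|\langle\Bb\eb_i,\Bb\vb\rangle|^2\bigr]
\leq \Ebb\bigl[\|\Bb\eb_i\|_2^2\,\|\Bb\vb\|_2^2\bigr]
\leq \mu_4\,\Ebb\bigl[\|\Bb\vb\|_2^2\bigr]
= \mu_4\,\vb^*\Ebb[\Bb^*\Bb]\vb = \mu_4\|\vb\|_2^2 = \mu_4,
\]
combining Cauchy--Schwarz, the pointwise bound $\|\Bb\eb_i\|_2^2=(\Bb^*\Bb)_{i,i}\leq\mu_4$ and the isotropy identity. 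Hence $\mu_3(S)\leq s\mu_4$ and the proposition follows.

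The delicate point lies precisely in this last step. A naive application of the entrywise bound $|(\Bb^*\Bb)_{j,i}|\leq \mu_4$ to each coordinate of the vector $\Bb_S^*\Bb\eb_i$ would produce $\|\Ebb[\cdot]\|_{2\to 2}\leq s\mu_4^2$ and, after the prefactor $s$, the weaker estimate $\mu_3(S)\leq s^2\mu_4^2$. Since $\mu_4\geq 1$, this is strictly worse than the target $s\mu_4$. The sharper estimate relies on combining Cauchy--Schwarz with the isotropy relation in such a way that the almost-sure bound is applied to only one of the two Gram factors while the other is averaged out exactly against $\Ebb[\Bb^*\Bb]=\Id$.
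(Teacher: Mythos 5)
Your proposal is correct and follows essentially the same route as the paper: bound each $\mu_i(S)$ by $s\mu_4$ separately, with the key step for $\mu_3(S)$ being to apply the almost-sure bound $\|\Bb\eb_i\|_2^2\leq\mu_4$ to one Gram factor only and average the other against the isotropy identity $\Ebb[\Bb^*\Bb]=\Id$ (the paper phrases this as the Loewner domination $\Bb_S^*(\Bb\eb_i)(\Bb\eb_i)^*\Bb_S\preceq\|\Bb\eb_i\|_2^2\,\Bb_S^*\Bb_S$, you phrase it via quadratic forms and Cauchy--Schwarz, which is the same estimate). The only cosmetic difference is your trace bound for $\mu_1(S)$ where the paper uses $\|\cdot\|_{2\to2}\leq\|\cdot\|_{\infty\to\infty}\leq s\,\|\Bb^*\Bb\|_{1\to\infty}$; both are valid and yield the same constant.
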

The proof of Proposition \ref{prop:gammaSs} is given in Appendix \ref{proof:gamma}.
The bound given in Proposition \ref{prop:gammaSs} is an upper bound on $\gamma(S)$ that should not be considered as optimal. For instance, for Gaussian measurements, it is important to precisely evaluate the three quantities $(\mu_i(S))_{1\leq i \leq 3}$. 

\begin{remark}[Noisy setting] 
In this paper, we concentrate on a noiseless setting. It is likely that noise can be accounted for mimicking the proofs in \cite{candes2011probabilistic} for instance.
\end{remark}

\section{Sharpness of the main result}
\label{sec:tightness}

In this section, we discuss the sharpness of the lower bound given by Theorem \ref{thm:recovery} by comparing it to the best known results in compressed sensing.

\subsection{The case of isolated measurements}\label{subsec:isolmeas}

First, let us show that our result matches the standard setting where the blocks are made of only one row, that is $p=1$.
This is the standard compressed sensing framework considered e.g. by \cite{candes2006robust,foucart2013mathematical,candes2011probabilistic}.  Consider that $\Ab_0 = \left( \ab_i^* \right)_{1\leq i \leq n}$ is a deterministic matrix, and that the sensing matrix $\Ab$ is constructed by drawing $m$ rows of $\Ab_0$ according to some probability distribution $\Pc = \left( p_1, \hdots , p_n\right)$, i.e. one can write $\Ab$ as follows:
$$ \Ab  = \begin{pmatrix}
\frac{\ab_{J_1}^*}{\sqrt{p_{J_1}}} \\ \vdots \\ \frac{\ab_{J_m}^*}{\sqrt{p_{J_m}}}
\end{pmatrix},
$$
where the $\left( J_j \right)_{1\leq j \leq m}$'s are i.i.d. random variables taking their value in $\{1, \hdots , n\}$ with probability $\Pc$. According to Proposition \ref{prop:gammaSs}, for a support $S$ of cardinality $s$ the following upper bound holds:
$$ \gamma(S) \leq s \max_{1 \leq j \leq M} \frac{\|\ab_j \ab_j^* \|_{1\rightarrow \infty}}{p_j}.
$$
Therefore, according to Theorem \ref{thm:recovery}, it is sufficient that 
\begin{align}  
\label{ineq:oneRowResult1}
 q \geq c s \max_{1 \leq j \leq M} \frac{\|\ab_j \ab_j^* \|_{1\rightarrow \infty}}{p_j} \log \left( 4n \right) \log\left (12 \varepsilon^{-1}\right). 
\end{align}
to obtain perfect reconstruction with probability $1-\varepsilon$.
Noting that $\|\ab_j \|_\infty^2 = \|\ab_j \ab_j^* \|_{1\rightarrow \infty}$ ,  for all $ j \in \left\lbrace 1, \hdots , n \right\rbrace$,
it follows that Condition \eqref{ineq:oneRowResult1} is the same (up to a multiplicative constant) to that of \cite{candes2011probabilistic}. 

In addition, choosing $\Pc^\star$ in order to minimize the right-hand side of \eqref{ineq:oneRowResult1} leads to 
$$p^\star_j= \frac{\|\ab_j \ab_j^* \|_{1\rightarrow \infty}}{\sum_{k=1}^n \|\ab_k \ab_k^*\|_{1\rightarrow \infty}}, \qquad \forall k \in \left\lbrace 1 , \hdots , n \right\rbrace,
$$
which in turn leads to the following required number of measurements:
\begin{align}  
 q &\geq 
 c s \sum_{k=1}^n \|\ab_k^*\|_\infty^2 \log \left( 4n \right) \log\left (12 \varepsilon^{-1}\right).
\end{align}
Contrarily to common belief, the probability distribution minimizing the required number of measurements is not the uniform one, but the one depending on the $\ell_\infty$-norm of the considered row. 
Let us highlight this fact. Consider that $\Ab_0 = \begin{pmatrix}
1 & 0 \\
0 & \Fc_{n-1}
\end{pmatrix}$, where $\Fc_{n-1}$ denotes the 1D Fourier matrix of size $(n-1)\times (n-1)$. If a uniform drawing distribution is chosen, the right hand side of \eqref{ineq:oneRowResult1} is $O(sn\ln^2(n))$. This shows that uniform random sampling is not interesting for this sensing matrix. 
Note that the coherence $\|\Ab_0\|_{1\rightarrow\infty}^2$ of $\Ab_0$ is equal to $1$, which is the worst possible case for orthogonal matrices. Nevertheless, if the optimal drawing distribution is chosen, i.e.\
$$ p_j^\star = \left\{
\begin{array}{ll}
\frac{1}{2} & \text{if} \; j =1 \\
\frac{1}{2(n-1)} & \text{otherwise}
\end{array}
\right.
$$ 
then, the right hand side of \eqref{ineq:oneRowResult1} becomes $O(2s\ln^2(n))$. 
Using this sampling strategy, compressed sensing therefore remains relevant. 
Furthermore, note that the latter bound could be easily reduced by a factor 2 by systematically sampling the location associated to the first row of $\Ab_0$, and uniformly picking the $q-1$ remaining isolated measurements. 
Similar remarks were formulated in \cite{krahmer2013stable} which promote non-uniform sampling strategies in compressed sensing.


\subsection{The case of Gaussian measurements \label{sec:Gauss}}

We suppose that the entries of $\Bb \in \Rbb^{p\times n}$ are i.i.d. Gaussian random variables with zero-mean and variance $1/p$. This assumption on the variance ensures that the isotropy condition \eqref{condIsotropy} is satisfied.
The matrix $\Ab$ constructed by concatenating such blocks is also a Gaussian random matrix with i.i.d. entries and does not differ from an acquisition setting based on isolated measurements. 
Therefore, if Theorem \ref{thm:recovery} is sharp, one can expect that $q=O(s\log(n))$ measurements are enough to perfectly reconstruct $\xb$. In what follows, we show that this is indeed the case.
\begin{prop}
\label{prop:gaussian}
Assume that the entries of $\Bb \in \Rbb^{p\times n}$ are i.i.d. Gaussian random variables with zero-mean and variance $1/p$. Then, $\gamma(S) = O\left( \frac{s \log(s) }{p}\right)$. Therefore, $O\left( \frac{s \log(s) \log(n)}{p} \right)$ Gaussian blocks are sufficient to ensure perfect reconstruction with high probability.
\end{prop}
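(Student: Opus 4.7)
The claim on the number of blocks is an immediate consequence of Theorem \ref{thm:recovery} once we establish $\gamma(S) = O(s\log(s)/p)$. Since $\gamma(S) = \max_{1\leq i \leq 3} \mu_i(S)$, the plan is to estimate each $\mu_i(S)$ separately using the Gaussian structure of $\Bb$. The bounds will hold stochastically rather than deterministically, so the variant of Theorem \ref{thm:recovery} mentioned in the remark following it must be invoked; in particular, all concentration estimates must be performed with a failure probability small enough that a union bound over the $m$ independent copies $\Bb_1,\dots,\Bb_m$ still yields an overall failure probability less than $\varepsilon$.

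I would begin with $\mu_3(S)$, which is exact and deterministic. For any $i \in S^c$, the column $\Bb\eb_i$ is independent of $\Bb_S$ and satisfies $\Ebb[(\Bb\eb_i)(\Bb\eb_i)^*] = \Id/p$. Combined with the tower property and the isotropy condition $\Ebb[\Bb_S^*\Bb_S] = \Id$, this gives
$$\Ebb\cro{\Bb_S^*(\Bb\eb_i)(\Bb\eb_i)^*\Bb_S} = \frac{1}{p}\,\Ebb\cro{\Bb_S^*\Bb_S} = \frac{1}{p}\,\Id,$$
so that $\mu_3(S) \leq s/p$. Next, for $\mu_1(S) = \norm{\Bb_S^*\Bb_S}_{2\to 2}$, the matrix $\Bb_S$ is a rescaled $p\times s$ Gaussian matrix, so the Davidson--Szarek inequality yields $\norm{\Bb_S}_{2\to 2} \leq 1 + \sqrt{s/p} + t/\sqrt{p}$ with probability at least $1 - 2e^{-t^2/2}$. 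Squaring and picking $t$ of order $\sqrt{\log(n)}$ gives $\mu_1(S) = O\pth{1 + s/p + \log(n)/p}$ with probability at least $1 - n^{-c'}$.

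The delicate quantity is $\mu_2(S) = \sqrt{s}\,\max_{i\in S^c}\norm{\Bb_S^*\Bb\eb_i}_2$. The key observation is that for fixed $i \in S^c$, conditionally on $\Bb_S$ the vector $\Bb_S^*\Bb\eb_i$ is centered Gaussian with covariance $\Bb_S^*\Bb_S/p$, so $\norm{\Bb_S^*\Bb\eb_i}_2^2$ is a Gaussian quadratic form with matrix $\Bb_S^*\Bb_S/p$ and expectation $\tr(\Bb_S^*\Bb_S)/p$. A Hanson--Wright-type deviation inequality, combined with a union bound over $i \in S^c$, gives
$$\norm{\Bb_S^*\Bb\eb_i}_2^2 \leq \frac{1}{p}\Bigpth{\tr(\Bb_S^*\Bb_S) + C\,\bigpth{\norm{\Bb_S^*\Bb_S}_F\sqrt{\log n} + \norm{\Bb_S^*\Bb_S}_{2\to 2}\log n}}$$
with high probability. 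Using the bound on $\mu_1(S)$ from the previous step, the inequality $\norm{\Bb_S^*\Bb_S}_F \leq \sqrt{s}\,\norm{\Bb_S^*\Bb_S}_{2\to 2}$, and the concentration of $\tr(\Bb_S^*\Bb_S)$ around its mean $s$, one obtains $\mu_2(S)^2 = O(s(s + \log n)/p)$. Taking the maximum of the three estimates and absorbing the logarithmic factors into $\log(s)$ in the regime of interest gives $\gamma(S) = O(s\log(s)/p)$, and Theorem \ref{thm:recovery} then yields the announced bound on the number of blocks.

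The main obstacle is the control of $\mu_2(S)$: a naive submultiplicative bound $\norm{\Bb_S^*\Bb\eb_i}_2 \leq \norm{\Bb_S}_{2\to 2}\norm{\Bb\eb_i}_2$ gives the right order only through a factor of $\sqrt{s}$ that is too loose. One must exploit the conditional Gaussianity of $\Bb_S^*\Bb\eb_i$ for $i\in S^c$ to obtain concentration around its mean $s/p$, and handle the coupling with the randomness of $\Bb_S$ via the prior high-probability bound on $\norm{\Bb_S^*\Bb_S}_{2\to 2}$. Managing all logarithmic factors consistently, and ensuring that the union bound over $m$ blocks is cheap enough, is where most of the technical work lies.
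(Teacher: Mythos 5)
Your estimates of $\mu_1(S)$ and $\mu_3(S)$ match the paper's (it uses a Ledoux--Rider deviation bound where you use Davidson--Szarek, and the same conditioning argument giving $\mu_3(S)=s/p$ exactly), and your appeal to the stochastic variant of Theorem \ref{thm:recovery} is the intended one. The gap is the final step for $\mu_2(S)$. Your own conditional computation shows $\Ebb\bigl[\norm{\Bb_S^*\Bb\eb_i}_2^2\mid\Bb_S\bigr]=\tr(\Bb_S^*\Bb_S)/p\approx s/p$, so the \emph{main term} of your Hanson--Wright estimate already forces $\norm{\Bb_S^*\Bb\eb_i}_2$ to be of order $\sqrt{s/p}$, hence $\mu_2(S)\asymp\sqrt{s}\cdot\sqrt{s/p}=s/\sqrt{p}$. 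Declaring that the logarithmic factors can be ``absorbed into $\log(s)$'' to conclude $\gamma(S)=O(s\log(s)/p)$ is a non sequitur: $s/\sqrt{p}=O(s\log(s)/p)$ only when $\sqrt{p}=O(\log s)$, i.e.\ for block lengths $p=O(\log^2 s)$, which is not the regime of the proposition. As written, your chain of estimates only yields $\gamma(S)=O(s/\sqrt{p})$ up to logarithms, i.e.\ $q=mp=O(s\sqrt{p}\,\log n)$ measurements rather than $O(s\log(s)\log(n))$. (A secondary issue: even the intermediate claim $\mu_2(S)^2=O(s(s+\log n)/p)$ does not follow from your display, since the cross term $s\norm{\Bb_S^*\Bb_S}_F\sqrt{\log n}/p\lesssim s^{5/2}\sqrt{\log n}/p^2$ is not dominated by $s^2/p$ when $s>p$.)

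For comparison, the paper takes exactly the route you dismiss as ``too loose'': it writes $\norm{\Bb_S^*\Bb\eb_i}_2\le\norm{\Bb_S}_{2\to2}\norm{\Bb\eb_i}_2$, bounds the first factor by $\sqrt{s/p}$, and invokes chi-squared concentration to claim $\max_{i\in S^c}\norm{\Bb\eb_i}_2^2\lesssim\delta\log(s)/p$, arriving at $\mu_2(S)\lesssim s\sqrt{\log s}/p$. If you want to close your proof along those lines, be aware that this last estimate sits in tension with your (correct) observation that $\norm{\Bb\eb_i}_2^2$ concentrates around its mean $1$, not around $\log(s)/p$; simply importing the paper's bound will not resolve the discrepancy your calculation exposes.
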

This is similar to an acquisition based on isolated Gaussian measurements and this is optimal up to a logarithmic factor, see \cite{donoho2006compressed}.


\subsection{The case of separable transforms \label{sec:tightnessSep}}

In this section, we consider $d$-dimensional deterministic transforms obtained as Kronecker products of orthogonal one-dimensional transforms. 
This setting is widespread in applications. 
Indeed, separable transforms include $d$-dimensional Fourier transforms met in astronomy \cite{bobin2008compressed} or products of Fourier and wavelet transforms met in MRI \cite{lustig2008compressed} or radio-interferometry \cite{wiaux2009compressed}.
A specific scenario encountered in many settings is that of blocks made of lines in the acquisition space.  
For instance, parallel lines in the 3D Fourier space are used in \cite{lustig2007sparse}. The authors propose to undersample the 2D $k_x$-$k_y$ plane and sample continuously along the orthogonal direction $k_z$ (see Figure \ref{fig:lustig}).
\begin{figure}
\btabu{@{}cc}
\includegraphics[height=6cm]{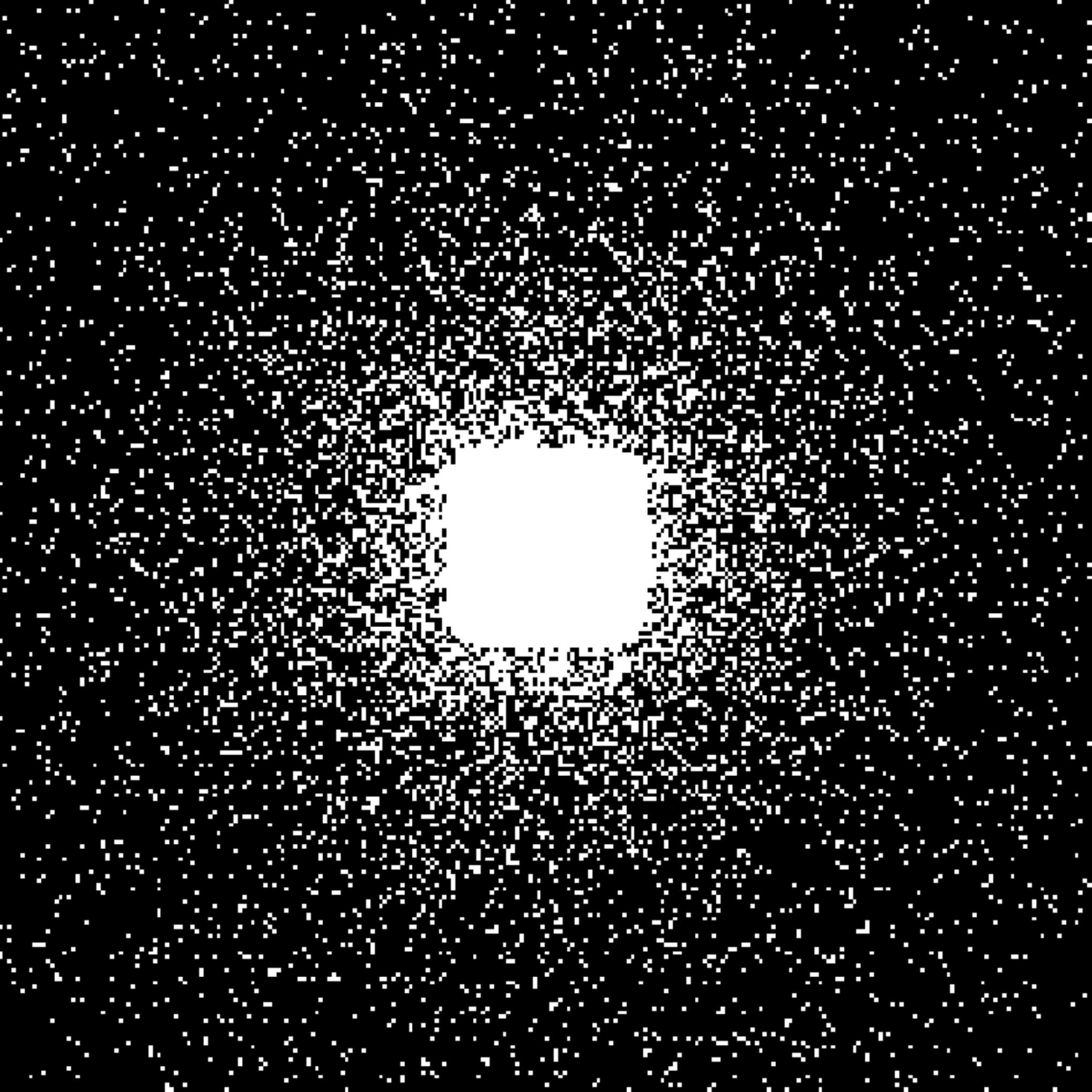} &
\includegraphics[height=6cm]{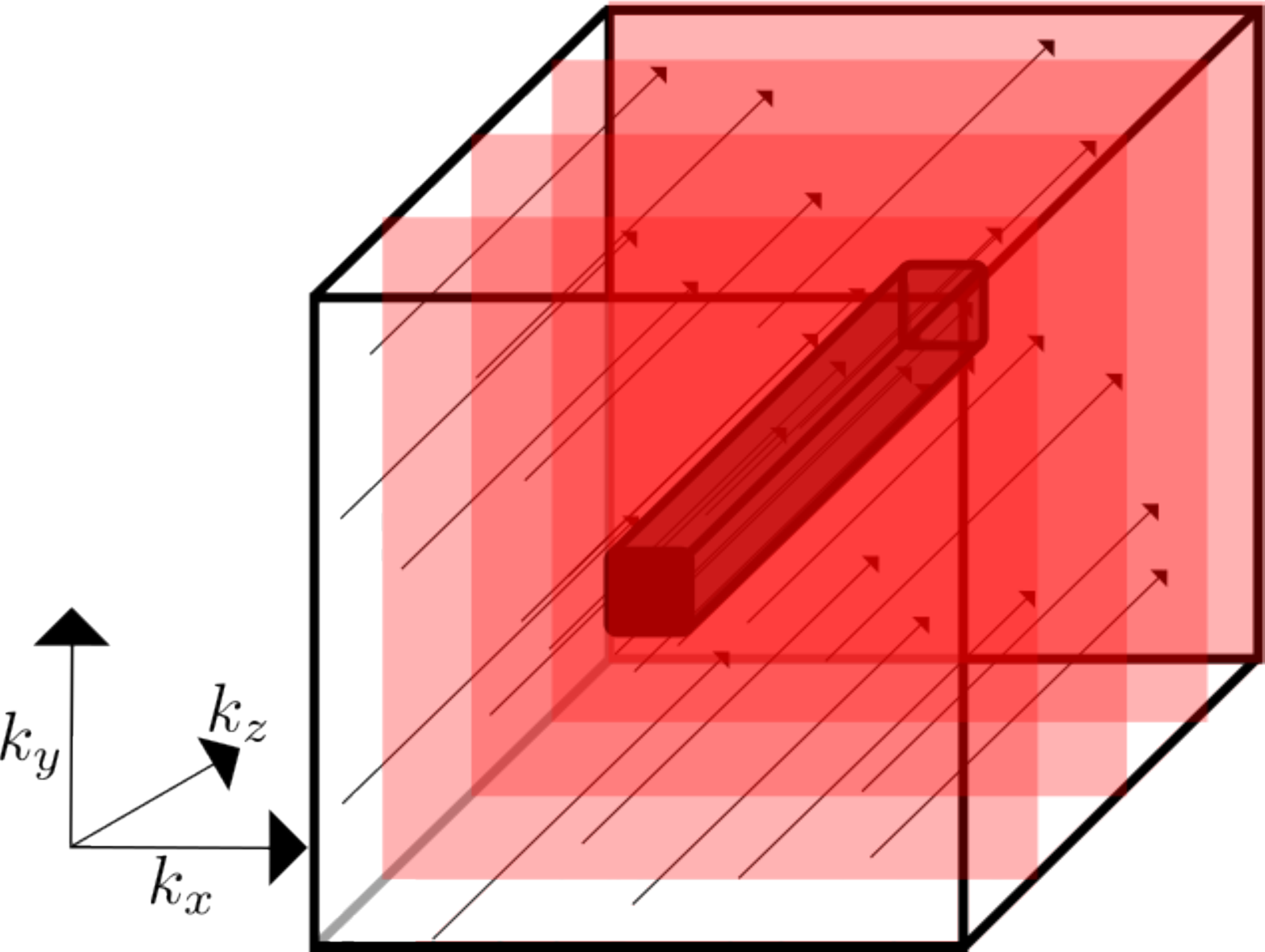} \\
{\small (a)}&{\small (b)}
\etabu
\caption{\label{fig:lustig} Example of sampling pattern used in MRI \cite{lustig2007sparse}. (a) Visualization in the $k_x$-$k_y$ plane. (b) Visualization in 3D.}
\end{figure}

The remaining of this Section is as follows.
We first introduce the notation. 
We then provide theoretical results about the minimal amount of blocks necessary to reconstruct all $s$-sparse vectors. 
Next, we show that Theorem \ref{thm:recovery} is sharp in this setting since the amount of blocks required to reconstruct $s$-sparse vectors coincides with the minimal amount. Finally, we perform a comparison with the results in \cite{polak2012performance}.

\subsubsection{Preliminaries}
\label{subsubsec:formalism}

Let $\Psib \in \Cbb^{\sqrt{n}\times \sqrt{n}}$ denote an arbitrary orthogonal transform, with $\sqrt{n} \in \Nbb$.
Let 
\begin{equation*}
 \Ab_0= \Psib \otimes \Psib = \begin{bmatrix}
                        \Psib_{1,1} \Psib & \hdots & \Psib_{1,\sqrt{n}} \Psib \\ 
		        \Psib_{2,1} \Psib & \hdots & \Psib_{2,\sqrt{n}} \Psib \\ 
			\vdots & \ddots & \vdots \\
		        \Psib_{\sqrt{n},1} \Psib & \hdots & \Psib_{\sqrt{n},\sqrt{n}} \Psib \\ 	
                       \end{bmatrix} \in \Cbb^{n\times n},
\end{equation*}
where $\otimes$ denote the Kronecker product. 
Note that $\Ab_0$ is also orthogonal.
We define groups of measurements from $\Ab_0$ as follows:
\begin{align}
 \Bb_k &= \Psib_{k,:} \otimes \Psib \label{eq:blocktensor}\\
     & = \begin{bmatrix} \Psib_{k,1} \Psib, \hdots, \Psib_{k,\sqrt{n}} \Psib\end{bmatrix} \in \Cbb^{\sqrt{n}\times n}.
\end{align}
For instance, if $\Psib$ is the 1D discrete Fourier transform, this strategy consists in constructing $\sqrt{n}$ blocks as horizontal discrete lines of the discrete Fourier plane. This is similar to the blocks used in \cite{lustig2007sparse}.
Similarly to paragraph \ref{subsubsec:partition}, a sensing matrix $\Ab$ can be constructed by drawing $m$ i.i.d. blocks with distribution $\Pi$. 
Letting
$ K = (k_1 , \hdots , k_m )
\in \{1,\hdots,\sqrt{n}\}^m
$
denote the drawn blocks indexes, $\Ab$ reads:
\begin{align}
 \Ab&= \frac{1}{\sqrt{m}}\begin{bmatrix}
     \frac{\Bb_{k_1}}{\sqrt{\pi_{k_1}}} \\ \vdots \\ \frac{\Bb_{k_m}}{\sqrt{\pi_{k_m}}}
    \end{bmatrix} \label{sampMatrixspe} \\
&= \left(\frac{\Db(\pi)^{-1/2}}{\sqrt{m}} \cdot \Psib_{K,:}\right)  \otimes \Psib \nonumber \\
&= \widetilde{\Psib}_{K,:} \otimes \Psib \nonumber \\
\end{align}
where $\Db(\pi):=\mathrm{diag}(\pi_{k_1},\hdots,\pi_{k_m})$ and $\widetilde{\Psib}_{K,:} :=\frac{\Db(\pi)^{-1/2}}{\sqrt{m}} \cdot \Psib_{K,:}$.
By combining the results in Theorem \ref{thm:recovery} and Proposition \ref{prop:gammaSs}, we easily get the following reconstruction guarantees.
\begin{prop}
\label{prop:optimalPi}
Let $S \subset \{1, \hdots , n \}$ be the support of cardinality $s$ of the signal $\xb \in \Cbb^n$ to reconstruct. 
Under the above hypotheses, if 
\begin{align} 
\label{eq:nbmesnec1}
m \geq c s \max_{1 \leq j \leq M} \frac{\|\Bb_j^* \Bb_j \|_{1\rightarrow \infty}}{\pi_j}  \log \left( 4n \right) \log\left (12 \varepsilon^{-1}\right),
\end{align}
then the vector $\xb$ is the unique solution of \eqref{pb:min}  with probability at least $1-\varepsilon$.
\end{prop}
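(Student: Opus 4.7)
The plan is to reduce the proposition to a direct combination of Theorem \ref{thm:recovery} and Proposition \ref{prop:gammaSs}. First, I would recall the construction of the random block $\Bb$ associated to the sensing matrix in \eqref{sampMatrixspe}: by the partition construction described in Section \ref{subsubsec:partition}, $\Bb$ takes the value $\Bb_j/\sqrt{\pi_j}$ with probability $\pi_j$ for each $j \in \{1,\dots,M\}$, and this discrete distribution ensures the isotropy condition \eqref{condIsotropy} holds (which I would also briefly verify for completeness).

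Next, I would estimate the quantity $\mu_4$ appearing in Proposition \ref{prop:gammaSs}. Since $\Bb^*\Bb$ takes the value $\Bb_j^*\Bb_j/\pi_j$ with probability $\pi_j$, one has the almost-sure (hence deterministic) bound
\begin{equation*}
\|\Bb^*\Bb\|_{1\to\infty} \leq \max_{1\leq j \leq M} \frac{\|\Bb_j^*\Bb_j\|_{1\to\infty}}{\pi_j} =: \mu_4.
\end{equation*}
Proposition \ref{prop:gammaSs} then yields
\begin{equation*}
\gamma(S) \leq s\mu_4 = s \max_{1\leq j \leq M} \frac{\|\Bb_j^*\Bb_j\|_{1\to\infty}}{\pi_j}.
\end{equation*}

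Finally, I would substitute this upper bound into the condition $m \geq c\gamma(S)\log(4n)\log(12\varepsilon^{-1})$ of Theorem \ref{thm:recovery}. The hypothesis \eqref{eq:nbmesnec1} implies this condition, so Theorem \ref{thm:recovery} guarantees that $\xb$ is the unique solution of \eqref{pb:min} with probability at least $1-\varepsilon$, which is exactly the claim.

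There is essentially no obstacle here: the proposition is a straightforward corollary and the proof is a one-line substitution once the definition of $\Bb$ in this separable setting is made explicit. The only mildly subtle point worth emphasizing is that the bound $\mu_4$ is taken as a maximum over block indices $j$ rather than over realizations of $\Bb$, which is legitimate precisely because the support of $\Bb$ is the finite family $\{\Bb_j/\sqrt{\pi_j}\}_{j=1}^M$, so the stochastic bound of Proposition \ref{prop:gammaSs} collapses to the deterministic one stated above.
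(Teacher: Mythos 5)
Your proposal is correct and follows exactly the route the paper intends: the paper itself introduces Proposition \ref{prop:optimalPi} with the single remark that it follows ``by combining the results in Theorem \ref{thm:recovery} and Proposition \ref{prop:gammaSs}'', which is precisely your substitution of the deterministic bound $\|\Bb^*\Bb\|_{1\to\infty}\leq \max_{j}\|\Bb_j^*\Bb_j\|_{1\to\infty}/\pi_j$ into $\gamma(S)\leq s\mu_4$. Your added remarks on verifying isotropy and on why the stochastic bound collapses to a deterministic one over the finite support of $\Bb$ are sound and, if anything, slightly more explicit than the paper.
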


Using the above result we also obtain the following Corollary.
\begin{corollary}\label{cor:optimalpi}
The drawing probability distribution $\Pi^\star$ minimizing the right hand side of Inequality \eqref{eq:nbmesnec1} on the required number of measurements is defined by
\begin{align}
\pi_j^\star &= \frac{\left\| \Bb_j^* \Bb_j \right\|_{1\rightarrow \infty}}{\sum_{k=1}^M \left\| \Bb_k^* \Bb_k \right\|_{1\rightarrow \infty}} , \qquad \forall j \in \left\lbrace 1 , \hdots , M , \right\rbrace.
\end{align}
For this particular choice of $\Pi^\star$, the right hand side of Inequality \eqref{eq:nbmesnec1} can be written as follows
\begin{align}
m \geq c s  \sum_{j=1}^M \|\Bb_j^* \Bb_j \|_{1\rightarrow \infty} \log \left( 4n \right) \log\left (12 \varepsilon^{-1}\right) .
\end{align}
\end{corollary}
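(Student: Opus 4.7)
The statement reduces to an elementary minimax problem. Let me set $a_j := \|\Bb_j^*\Bb_j\|_{1\rightarrow\infty}$ and $A := \sum_{j=1}^M a_j$. Since in inequality \eqref{eq:nbmesnec1} only the factor $\displaystyle\max_{1\leq j \leq M} a_j/\pi_j$ depends on $\Pi$, choosing $\Pi$ optimally amounts to solving
\begin{equation*}
\min_{\Pi \in \Delta_M}\; \max_{1\leq j \leq M} \frac{a_j}{\pi_j},
\end{equation*}
where $\Delta_M = \{\Pi \in \Rbb_+^M\, :\, \sum_{j=1}^M \pi_j = 1\}$ is the probability simplex on $\{1,\ldots,M\}$. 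Without loss of generality I may assume $a_j>0$ for all $j$ (otherwise the corresponding block plays no role and can be discarded).

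The plan is to (i) evaluate the objective at the candidate $\pi_j^\star := a_j/A$, and (ii) prove optimality by a short contradiction argument. For (i), direct substitution yields $a_j/\pi_j^\star = A$ for every $j$, so the maximum equals $A$, and in particular is constant in $j$. For (ii), suppose some $\widetilde{\Pi}\in\Delta_M$ achieved a strictly smaller value, i.e.\ $a_j/\widetilde{\pi}_j < A$ for all $j$. Then $\widetilde{\pi}_j > a_j/A$ for every $j$, and summing over $j$ would give $1 = \sum_j \widetilde{\pi}_j > \sum_j a_j/A = 1$, a contradiction. Hence $\Pi^\star$ is optimal and the minimum value of the minimax is exactly $A = \sum_{j=1}^M \|\Bb_j^*\Bb_j\|_{1\rightarrow\infty}$.

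Substituting this value back into \eqref{eq:nbmesnec1} yields the announced bound
\begin{equation*}
m \;\geq\; c\, s \sum_{j=1}^M \|\Bb_j^*\Bb_j\|_{1\rightarrow\infty}\, \log(4n)\,\log(12\varepsilon^{-1}).
\end{equation*}
There is no real obstacle here: the argument is a one-line equalisation-of-ratios principle and no probabilistic ingredient beyond Proposition~\ref{prop:optimalPi} is needed.
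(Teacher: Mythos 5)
Your argument is correct and is exactly the elementary equalisation-of-ratios step that the paper leaves implicit: the Corollary is stated without proof as an immediate consequence of Proposition~\ref{prop:optimalPi}, and the same unproved optimisation is invoked again at the end of Paragraph~\ref{subsec:isolmeas} and in the proof of Proposition~\ref{prop:2DFour2}. Your candidate-plus-contradiction verification that $\min_{\Pi}\max_j \|\Bb_j^*\Bb_j\|_{1\rightarrow\infty}/\pi_j = \sum_k \|\Bb_k^*\Bb_k\|_{1\rightarrow\infty}$ fills in that omitted step correctly.
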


The sharpness of the bounds on the required number of measurements in Proposition \ref{cor:optimalpi} will be discussed in the following paragraph. 

\subsubsection{The limits of separable transforms}\label{subsubsec:limits}

Considering a 2D discrete Fourier transform and a dictionary of blocks made of horizontal lines in the discrete Fourier domain, one could hope to only require $m=O(s/p\log(n))$ blocks of measurements to perfectly recover all $s$-sparse vectors. Indeed, it is known since \cite{candes2006robust} that $O(s\log(n))$ isolated measurements uniformly drawn at random are sufficient to achieve this. 
In this paragraph, we show that this expectation cannot be satisfied since at least $2s$ blocks are necessary to reconstruct all $s$-sparse vectors.
It means that this specific block structure is inadequate to obtain strong reconstruction guarantees.
This result also shows that Theorem \ref{cor:optimalpi} is nearly optimal.

In order to prove those results, we first recall the following useful lemma.
We define a decoder as any mapping $\Delta : \Cbb^q \rightarrow \Cbb^n$. Note that $\Delta$ is not necessarily a linear mapping. 
\begin{lemme}\cite[Lemma 3.1]{cohen2009compressed}
\label{lem:cohen}
Set $\Sigma_s$ to be the set of $s$-sparse vectors in $\Cbb^n$.
If $\Ab$ is any $m\times n$ matrix, then the following propositions are equivalent:
\begin{enumerate}
\item There is a decoder $\Delta$ such that $\Delta(\Ab \xb ) = \xb$, for all  $s$-sparse $\xb$ in $\Cbb^n$.
\item $\Sigma_{2s} \cap \ker \Ab = \{0\}$.
\item For any set $T\subset \{ 1 ,\hdots , n \}$ of cardinality $2s$, the matrix $\Ab_T$ has rank $2s$.
\end{enumerate}
\end{lemme}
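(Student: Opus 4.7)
The plan is to prove the equivalence by the cyclic chain (i) $\Rightarrow$ (ii) $\Rightarrow$ (iii) $\Rightarrow$ (i), all of which are standard but short arguments.

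For (i) $\Rightarrow$ (ii), I would argue by contradiction. Suppose there exists a nonzero $\zb \in \Sigma_{2s} \cap \ker \Ab$. The key observation is that any $2s$-sparse vector $\zb$ can be written as $\zb = \xb_1 - \xb_2$, where $\xb_1$ and $\xb_2$ are both $s$-sparse: it suffices to split the support of $\zb$ (of cardinality at most $2s$) into two disjoint pieces of cardinality at most $s$ and assign the nonzero entries of $\zb$ accordingly. Since $\Ab \zb = 0$, we have $\Ab \xb_1 = \Ab \xb_2$, hence $\Delta(\Ab \xb_1) = \Delta(\Ab \xb_2)$. But by hypothesis (i), $\Delta(\Ab \xb_i) = \xb_i$ for $i = 1, 2$, forcing $\xb_1 = \xb_2$, i.e.\ $\zb = 0$, a contradiction.

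For (ii) $\Rightarrow$ (iii), fix $T \subset \{1, \ldots, n\}$ with $|T| = 2s$. The matrix $\Ab_T \in \Cbb^{q \times 2s}$ has rank $2s$ if and only if its kernel is trivial. Take $\uv \in \Cbb^{2s}$ with $\Ab_T \uv = 0$; then the vector $\zb = P_T \uv \in \Cbb^n$ is supported on $T$, hence $2s$-sparse, and satisfies $\Ab \zb = \Ab_T \uv = 0$. Assumption (ii) gives $\zb = 0$, so $\uv = 0$, and $\Ab_T$ has full column rank $2s$. Conversely (this direction is only needed if we were to prove (iii) $\Rightarrow$ (ii) directly, but it is useful to note here), any $\zb \in \Sigma_{2s} \cap \ker \Ab$ is supported on some set $T$ of size $2s$ and lies in $\ker \Ab_T$; if $\Ab_T$ has rank $2s$ this forces $\zb = 0$.

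For (iii) $\Rightarrow$ (i), define a decoder $\Delta : \Cbb^q \rightarrow \Cbb^n$ as follows: for $\yb \in \Cbb^q$, set $\Delta(\yb)$ to be any $s$-sparse vector $\xb$ satisfying $\Ab \xb = \yb$ if such a vector exists, and $0$ otherwise (the map need not be linear or even measurable, only well-defined). The only thing to check is that whenever $\yb = \Ab \xb$ for some $s$-sparse $\xb$, such a preimage is unique. If $\Ab \xb = \Ab \xb'$ with $\xb, \xb'$ both $s$-sparse, then $\xb - \xb' \in \Sigma_{2s}$ and lies in $\ker \Ab$; by the already-established implication (iii) $\Rightarrow$ (ii) (obtained by contraposition of the converse direction discussed in the previous paragraph), this forces $\xb = \xb'$. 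Hence $\Delta(\Ab \xb) = \xb$ for every $s$-sparse $\xb$, which is (i).

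None of the three implications is really an obstacle, since the argument is purely combinatorial and linear-algebraic. The only subtle point worth stating explicitly is the decomposition $\zb = \xb_1 - \xb_2$ with $\xb_1, \xb_2$ each $s$-sparse used in (i) $\Rightarrow$ (ii): this is where the threshold $2s$ (rather than $s$) enters, and it is the conceptual heart of the lemma.
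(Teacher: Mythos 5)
Your proof is correct; note that the paper itself gives no proof of this lemma, citing it directly from Cohen--Dahmen--DeVore, and your cyclic argument (i)\,$\Rightarrow$\,(ii)\,$\Rightarrow$\,(iii)\,$\Rightarrow$\,(i), with the support-splitting decomposition $\zb=\xb_1-\xb_2$ as the key step, is exactly the standard argument from that reference. The only implicit convention worth flagging is $2s\leq n$, without which statement (iii) degenerates; otherwise nothing is missing.
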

Looking at (iii)  of Lemma \ref{lem:cohen}, since the rank of ${ \Ab_T} $ is smaller than $\min(2s,m)$, we deduce that $m\geq 2s$ is a necessary condition to have a decoder. Therefore, if the number of isolated measurements is less than $2s$ with $s$ the degree of sparsity of $\xb$, we cannot reconstruct $\xb$. This property is an important step to prove Proposition \ref{prop:tensorlimit}.

\begin{prop}\label{prop:tensorlimit}
Assume that the sensing matrix $\Ab$ has the special block structure described in \eqref{sampMatrixspe}. If $m< \min(2s,\sqrt{n})$, then there exists no decoder $\Delta$ such that $ \Delta(\Ab \xb ) = \xb$ for all $s$-sparse vector $\xb\in \Cbb^n$.
In other words, the minimal number $m$ of distinct blocks required to identify every $s$-sparse vectors is necessarily larger than $\min(2s,\sqrt{n})$. 
\end{prop}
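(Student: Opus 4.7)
The plan is to verify that condition (ii) of Lemma \ref{lem:cohen} fails, i.e.\ to exhibit explicitly a nonzero element of $\Sigma_{2s} \cap \ker(\Ab)$. The construction leverages the Kronecker product structure $\Ab = \widetilde{\Psib}_{K,:} \otimes \Psib$ to reduce the question to a kernel problem for the $m\times \sqrt{n}$ matrix $\widetilde{\Psib}_{K,:}$, which has more columns than rows by hypothesis.

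First, I would use the mixed product identity $(\widetilde{\Psib}_{K,:} \otimes \Psib)(\ub \otimes \vb) = (\widetilde{\Psib}_{K,:}\ub) \otimes (\Psib \vb)$, valid for any $\ub,\vb \in \Cbb^{\sqrt{n}}$. It implies that for every $\ub \in \ker(\widetilde{\Psib}_{K,:})$ and every $\vb \in \Cbb^{\sqrt{n}}$, the tensor $\ub\otimes\vb$ belongs to $\ker(\Ab)$, and one has the exact sparsity identity $\|\ub\otimes\vb\|_0=\|\ub\|_0\,\|\vb\|_0$. Choosing $\vb = \eb_1$ then yields a candidate $\xb = \ub \otimes \eb_1 \in \ker(\Ab)$ whose sparsity coincides with $\|\ub\|_0$.

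Second, I would produce a sparse nonzero $\ub \in \ker(\widetilde{\Psib}_{K,:})$. The diagonal rescaling by $\Db(\pi)^{-1/2}/\sqrt{m}$ appearing in the definition of $\widetilde{\Psib}_{K,:}$ is invertible, so the kernel is unchanged if one replaces $\widetilde{\Psib}_{K,:}$ by $\Psib_{K,:}$. Pick any index set $T\subseteq\{1,\hdots,\sqrt{n}\}$ with $|T|=\min(2s,\sqrt{n})$ and form the submatrix $(\Psib_{K,:})_T$. Its shape is $m\times |T|$, and the assumption $m<\min(2s,\sqrt{n})=|T|$ guarantees a strictly wider-than-tall matrix, hence a nontrivial kernel. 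Any nonzero element of this kernel, extended by zeros on $T^c$, provides a nonzero $\ub\in\ker(\Psib_{K,:})$ with $\|\ub\|_0\leq|T|\leq 2s$.

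Combining the two steps, $\xb=\ub\otimes\eb_1$ is a nonzero $2s$-sparse vector in $\ker(\Ab)$, so Lemma \ref{lem:cohen} rules out the existence of a decoder reconstructing every $s$-sparse signal. There is no real technical obstacle: the whole argument is an observation about tensor products. The presence of $\sqrt{n}$ in $\min(2s,\sqrt{n})$ simply records the fact that the outer factor $\widetilde{\Psib}_{K,:}$ lives in $\Cbb^{m\times \sqrt{n}}$, so the transverse direction saturates as soon as $m\geq \sqrt{n}$, which is precisely where the bottleneck changes regime.
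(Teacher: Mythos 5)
Your proof is correct and follows essentially the same route as the paper: both exploit the Kronecker structure via vectors of the form $\ub\otimes\eb_1$ to reduce the question to the $m\times\sqrt{n}$ matrix $\widetilde{\Psib}_{K,:}$ and then invoke Lemma \ref{lem:cohen}. The only (cosmetic) difference is that you directly exhibit a nonzero $\min(2s,\sqrt{n})$-sparse kernel vector, which handles the regimes $2s\leq\sqrt{n}$ and $2s>\sqrt{n}$ uniformly, whereas the paper argues via injectivity on differences and splits the proof into two parts.
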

Proposition \ref{prop:tensorlimit} shows that there is no hope to reconstruct all $s$-sparse vectors with less than $m=O(s)$ blocks of measurements, using sensing matrices $\Ab$ made of blocks such as \eqref{eq:blocktensor}. Moreover, since the blocks are of length $p=\sqrt{n}$, it follows that whenever $s\geq \frac{\sqrt{n}}{2}$, the full matrix $\Ab_0$ should be used to identify every $s$-sparse $\xb$.
Let us illustrate this result on a practical example. 
Set $\Ab_0$ to be the 2D Fourier matrix, i.e. the Kronecker product of two 1D Fourier matrices. Consider that the dictionary of blocks is made of horizontal lines. Now consider a vector $\xb \in \Rbb^{32\times 32}$ to be $10$-sparse in the spatial domain and only supported on the first column as illustrated in Figure \ref{fig:pathoSignal}(a). 
Due to this specific signal structure, the Fourier coefficients of $\xb$ are constant along horizontal lines, see Figure \ref{fig:pathoSignal}(b). 
Therefore, for this type of signal, the information captured by a block of measurements (i.e. a horizontal line) is as informative as one isolated measurement. 
Clearly, at least $O(s)$ blocks are therefore required to reconstruct all $s$-sparse vectors supported on a vertical line of the 2D Fourier plane.
Using Corollary \ref{cor:optimalpi}, one can derive the following result.
\begin{prop}
\label{prop:2DFour2}
Let $\Ab_0 \in \Cbb^{n \times n}$ denote the 2D discrete Fourier matrix and consider a partition in $M=\sqrt{n}$ blocks that consist of lines in the 2D Fourier domain.
Assume that  $\xb \in \Cbb^n$ is $s$-sparse.
The drawing probability minimizing the right hand side of \eqref{eq:nbmesnec1} is given by 
$$ 
\pi_j^\star = \frac{1}{\sqrt{n}}, \quad \forall j \in  \left\lbrace 1,\hdots,\sqrt{n} \right\rbrace
$$
and for this particular choice, the number $m$ of blocks of measurements sufficient to reconstruct $\xb$ with probability $1-\varepsilon$ is
$$ m \geq c s  \log \left( 4n \right) \log\left (12 \varepsilon^{-1}\right).
$$
\end{prop}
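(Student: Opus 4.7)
The plan is to compute the quantity $\|\Bb_j^* \Bb_j\|_{1\to \infty}$ explicitly for each block $\Bb_j$ when $\Psib$ is the 1D discrete Fourier matrix, and then to simply feed the result into Corollary \ref{cor:optimalpi}. The two ingredients that make the computation short are the Kronecker product structure of the blocks and the fact that a unitary DFT matrix has entries of constant modulus.

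First, I would observe that since $\Psib \in \Cbb^{\sqrt{n}\times \sqrt{n}}$ is the unitary 1D DFT, every entry satisfies $|\Psib_{i,j}| = n^{-1/4}$. By \eqref{eq:blocktensor}, the block $\Bb_k = \Psib_{k,:}\otimes \Psib$ lives in $\Cbb^{\sqrt{n}\times n}$ and, using the mixed-product property of the Kronecker product, one gets
\begin{equation*}
\Bb_k^* \Bb_k = (\Psib_{k,:}^*\Psib_{k,:}) \otimes (\Psib^* \Psib) = (\Psib_{k,:}^*\Psib_{k,:}) \otimes \Id_{\sqrt{n}}.
\end{equation*}
The entries of the rank-one matrix $\Psib_{k,:}^*\Psib_{k,:}$ are $\overline{\Psib_{k,a}}\Psib_{k,b}$, hence of constant modulus $n^{-1/2}$. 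Since Kronecker-multiplying by $\Id_{\sqrt{n}}$ only multiplies each entry by $0$ or $1$, it follows that $\|\Bb_k^* \Bb_k\|_{1\to\infty} = n^{-1/2}$ for every $k \in \{1,\dots,\sqrt{n}\}$.

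Next, I would plug this equality into Corollary \ref{cor:optimalpi}. Since $\|\Bb_j^* \Bb_j\|_{1\to\infty}$ does not depend on $j$, the optimal distribution is the uniform one:
\begin{equation*}
\pi_j^\star = \frac{n^{-1/2}}{\sum_{k=1}^{\sqrt{n}} n^{-1/2}} = \frac{1}{\sqrt{n}}, \qquad \forall j \in \{1,\dots,\sqrt{n}\}.
\end{equation*}
Feeding this choice back into the second part of Corollary \ref{cor:optimalpi} yields the bound
\begin{equation*}
m \geq c s \sum_{j=1}^{\sqrt{n}} \|\Bb_j^* \Bb_j\|_{1\to\infty} \log(4n)\log(12\varepsilon^{-1}) = c s \, \sqrt{n}\cdot n^{-1/2}\log(4n)\log(12\varepsilon^{-1}),
\end{equation*}
which simplifies exactly to the announced inequality.

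There is no real obstacle here; the only thing to double-check is the arithmetic on the modulus of a DFT entry and the correct application of the Kronecker mixed-product rule. Conceptually, the proof just says that Fourier blocks are maximally flat and all equally informative in the $\ell_1\to\ell_\infty$ sense, so Corollary \ref{cor:optimalpi} immediately gives a uniform optimal law and the $O(s\log n)$ block count, matching the lower bound $m \geq 2s$ of Proposition \ref{prop:tensorlimit} up to a logarithmic factor.
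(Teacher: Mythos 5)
Your proposal is correct and follows essentially the same route as the paper: compute $\|\Bb_k^*\Bb_k\|_{1\to\infty}=n^{-1/2}$ and feed it into Corollary \ref{cor:optimalpi}. The only difference is cosmetic — the paper asserts this norm directly from the explicit entries of the 2D Fourier block, whereas you derive it via the Kronecker mixed-product identity $\Bb_k^*\Bb_k=(\Psib_{k,:}^*\Psib_{k,:})\otimes\Id_{\sqrt{n}}$, which is a slightly more transparent justification of the same computation.
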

This result is disappointing but optimal up to a logarithmic factor, due to Proposition \ref{prop:tensorlimit}.
We refer to Appendix \ref{comput2DFourier} for the proof. This Proposition indicates that $O(s\log(n))$ blocks are sufficient to reconstruct $\xb$ which is similar to the minimal number given in Proposition \ref{prop:tensorlimit} up to a logarithmic factor.

\begin{figure}
\begin{center}
\btabu{@{}cc}
\hspace{-2cm}
\includegraphics[height=7cm]{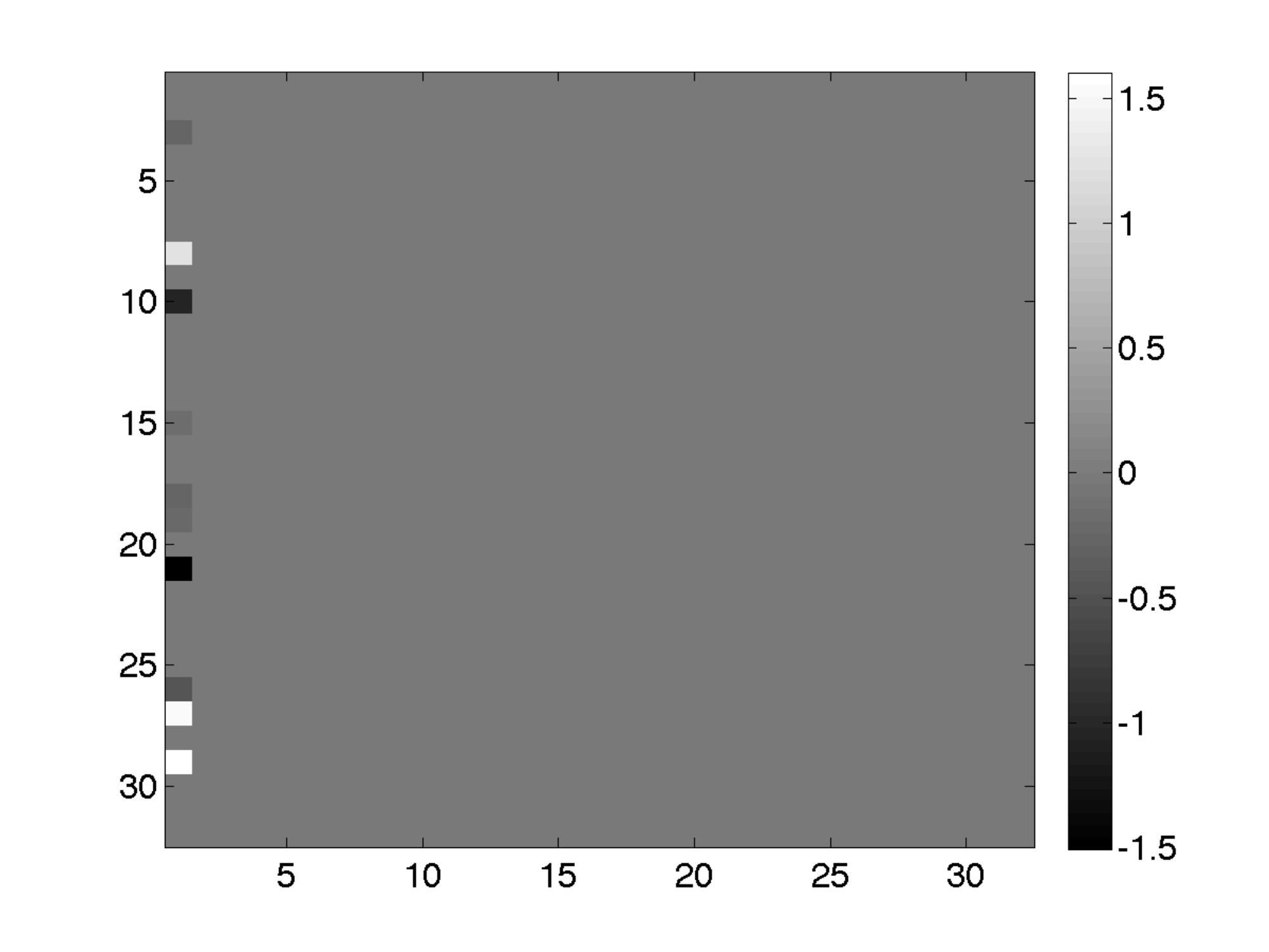} &
\includegraphics[height=7cm]{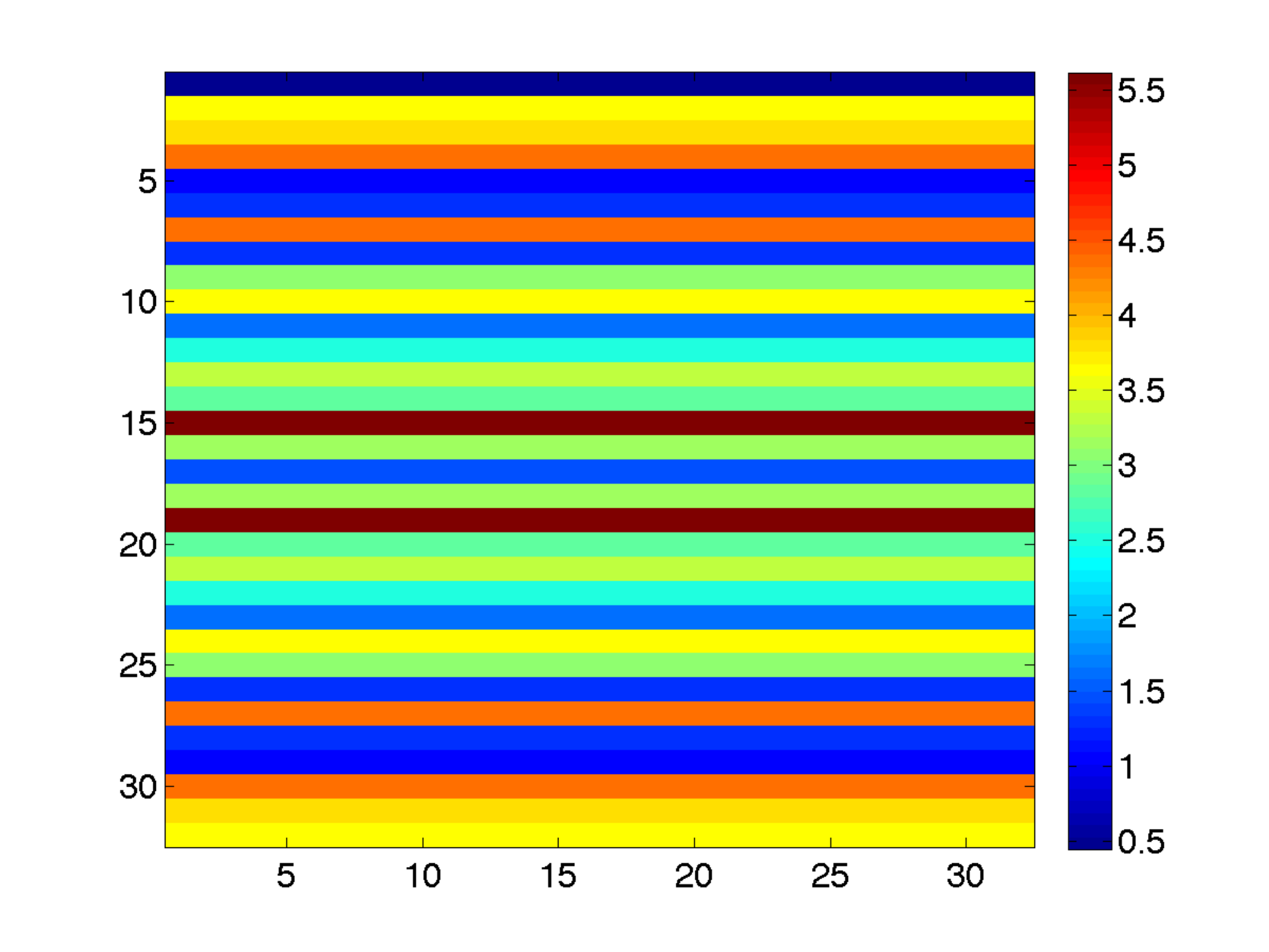} \\
\hspace{-2cm}
{\small (a)}&{\small (b)}
\etabu
\caption{\label{fig:pathoSignal} A pathological case where $n = 32 \times 32$ (a): The signal is $s$-sparse for $s=10$ and its support is concentrated on its first column. (b) Its 2D Fourier transform is constant along horizontal lines in the Fourier plane.   }
\end{center}
\end{figure}


\subsubsection{Relation to previous work\label{subsubsec:FourierWavelet}}

To the best of our knowledge, the only existing compressed sensing results based on blocks of measurements appeared in \cite{polak2012performance}. 
In this paragraph, we outline the differences between both approaches.

First, in our work, no assumption on the sign pattern of the non-zero signal entries is required. 
Furthermore, while the result in \cite{polak2012performance} only covers the case described in Paragraph \ref{subsubsec:partition} (i.e. partitions of orthogonal transforms), our work covers the case of overlapping blocks of measurements (see Paragraph \ref{subsubsec:overlap}), subsampled tight or continuous frames (see Paragraph \ref{subsubsec:frames}), and it can also be extended to the case of randomly generated blocks (see Paragraph \ref{subsubsec:gaussian}). 
Last but not least, the work \cite{polak2012performance} only deals with \emph{uniform} sampling densities which is well known to be of little interest when dealing with partially coherent matrices (see e.g. end of Paragraph \ref{subsec:isolmeas} for an edifying example).

Apart from those contextual differences, the comparison between the results in \cite{polak2012performance} and the ones in this paper is not straightforward. 
The criterion in \cite{polak2012performance} that controls the overall number of measurements $q$ depends on the following quantity:
$$ 
\Upsilon (\Ab_0 , S , \Bb) := \| \overline{\Bb_{S}} \|_{2 \rightarrow 1},
$$ 
where $\overline{\Bb_{S}}$ stands for the block restricted to the columns in $S$ with renormalized rows. 
The total number of measurements required in the approach \cite{polak2012performance} is 
\begin{align}\label{eq:PDG}
q_{PDG} &\geq C \Upsilon(\Ab_0 , S, \Bb) \max_{i,j} |\Ab_0(i,j) |^3 n^{3/2} \log(n)
\end{align}
which should be compared to our result
\begin{align}\label{eq:measnumber}
q &\geq  c p{\gamma(S)} \log \left( 4n \right) \log\left (12 \varepsilon^{-1}\right).
\end{align}
As shown in the previous paragraphs, the number \eqref{eq:measnumber} is sharp in various settings of interest, while \eqref{eq:PDG} is usually hard to explicitly compute or too large in the case of patially incoherent transforms. It therefore seems that our results should be preferred over those of \cite{polak2012performance}.

\section{Outlook}
We have introduced new sensing matrices that are constructed by stacking random blocks of measurements. 
Such matrices play an important role in applications since they can be implemented easily on many imaging devices. 
We have derived theorems that guarantee exact reconstruction using these matrices via $\ell_1$-minimization algorithms and outlined the crucial role of two properties: the \textit{extra} and \textit{intra} support block-coherences introduced in Definition \ref{def:quantities}. 
We have showed that our main result (Theorem \ref{thm:recovery}) is sharp in a few settings of practical interest, suggesting that it cannot be improved in the general case up to logarithmic factors.

Apart from those positive results, this work also reveals some limits of block sampling approaches. 
First, it seems hard to evaluate the \textit{extra} and \textit{intra} support block-coherences - except in a few particular cases - both analytically and numerically.
This evaluation is however central to derive optimal sampling approaches.
More importantly, we have showed in Paragraph \ref{subsubsec:limits} that not much could be expected from this approach in the specific setting where separable transforms and blocks consisting of lines of the acquisition space are used. 
Despite the peculiarity of such a dictionary, we believe that this result might be an indicator of a more general weakness of block sampling approaches. 
Since the best known compressed sensing strategies heavily rely on randomness (e.g. Gaussian measurements or uniform drawings of Fourier atoms), one may wonder whether the more rigid sampling patterns generated by block sampling approaches have a chance to provide decent results.
It is therefore legitimate to ask the following question: is it reasonable to use variable density sampling with pre-defined blocks of measurements in compressed sensing? 

Numerical experiments indicate that the answer to this question is positive. 
For instance, it is readily seen in Figure \ref{fig:reconstruction} (a,b,c) and (j,k,l), that block sampling strategies can produce comparable results to acquisitions based on isolated measurements. 
The first potential explanation to this phenomenon is that $\gamma(S)$ is low for the dictionaries chosen in those experiments.
However, even acquisitions based on horizontal lines in the Fourier domain (see Figure \ref{fig:reconstruction} (d,e,f)) produce rather good reconstruction results while Proposition \ref{prop:2DFour2} seems to indicate that this strategy is doomed.

This last observation suggests that a key feature is missing in our study to fully understand the potential of block sampling in applications.
Recent papers \cite{adcock2013breaking,adcock2014quest} highlight the central role of \textit{structured sparsity} to explain the practical success of compressed sensing. A very promising perspective is therefore to couple the ideas of structured sparsity in \cite{adcock2013breaking,adcock2014quest} and the ideas of block sampling proposed in this paper to finely understand the results in Figure \ref{fig:reconstruction} and perhaps design new optimal and applicable sampling strategies.


\begin{figure}
\begin{center}
\btabu{@{}ccc}
\includegraphics[height=4.5cm]{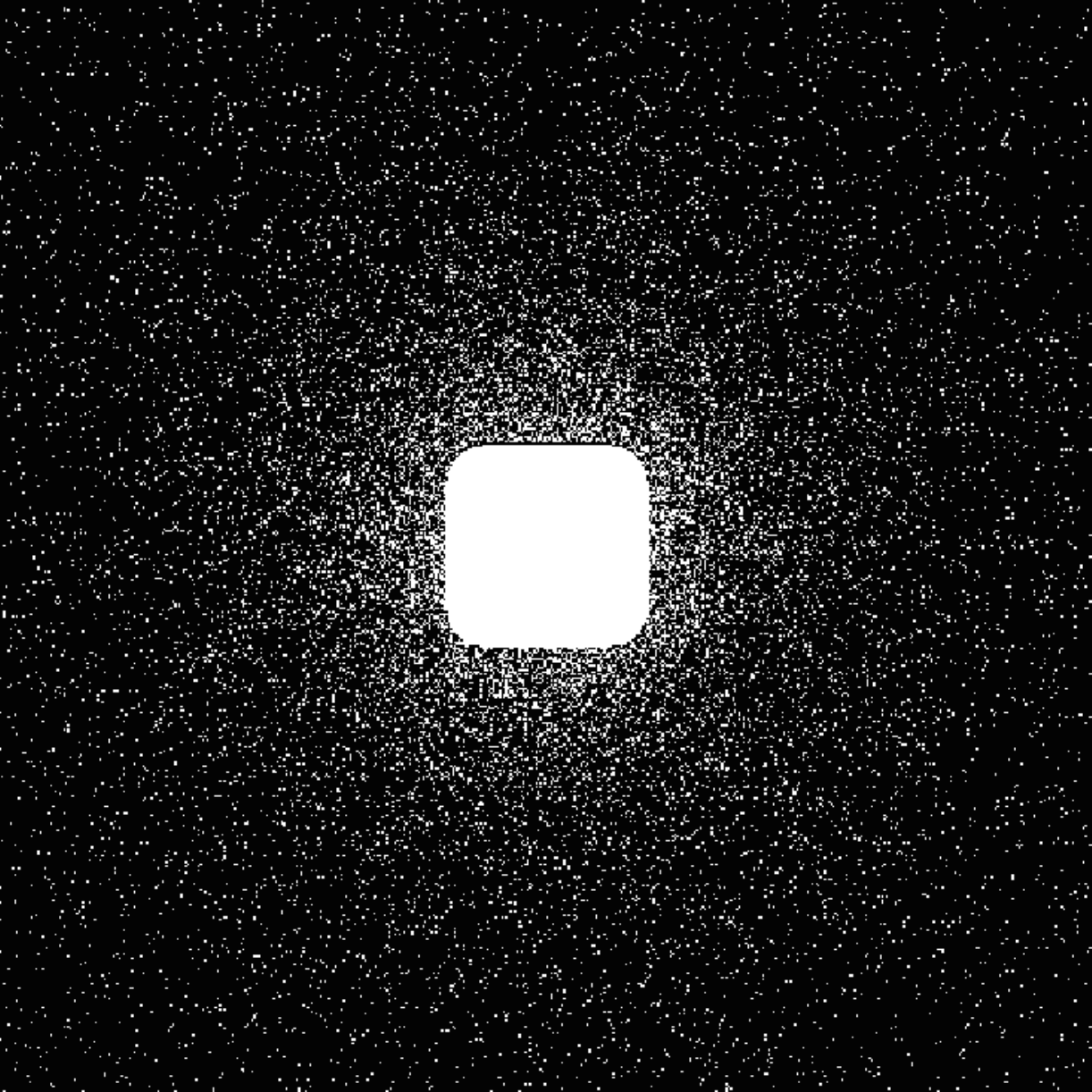} &
\includegraphics[height=4.5cm]{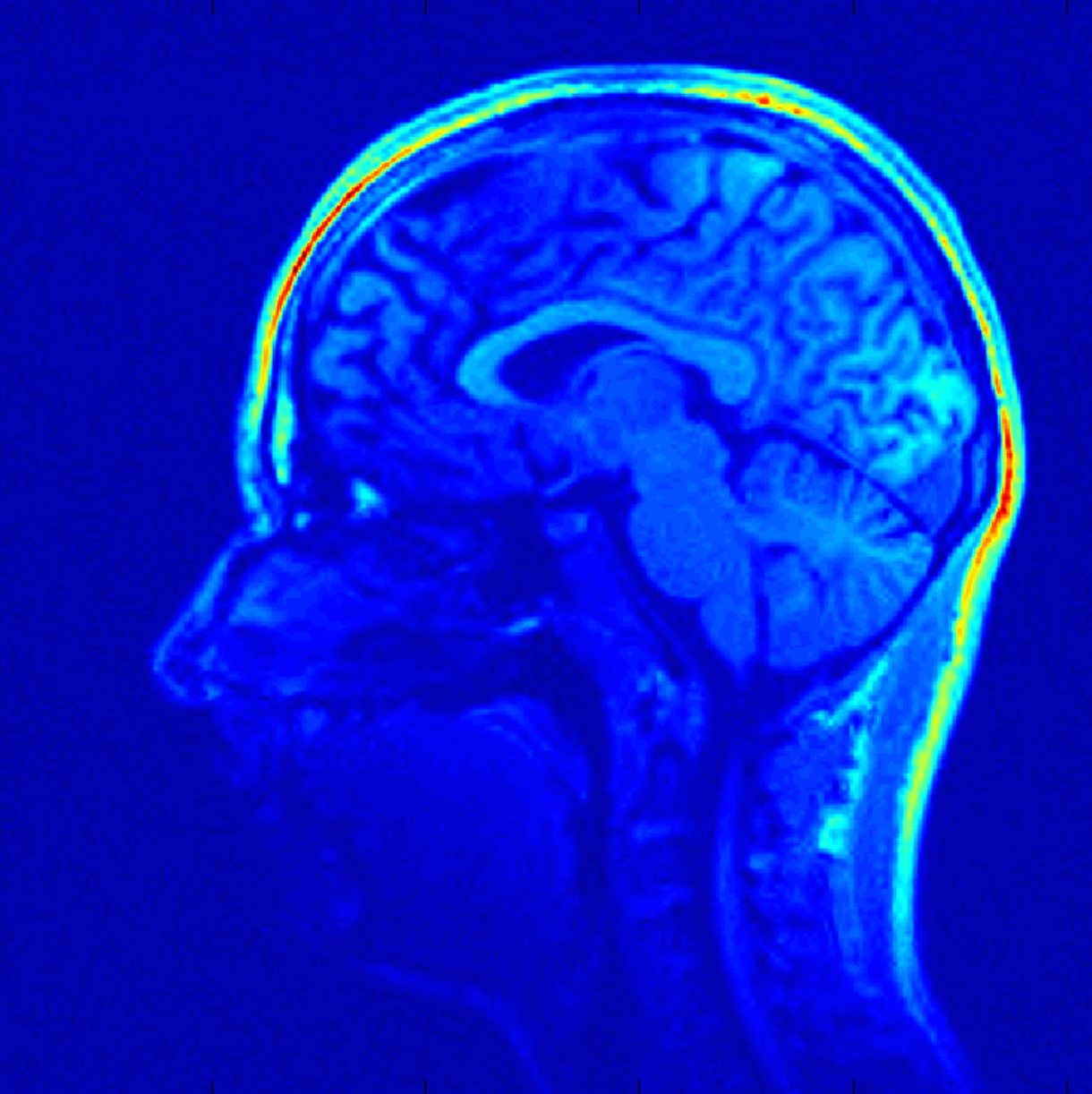} &
\includegraphics[height=4.5cm]{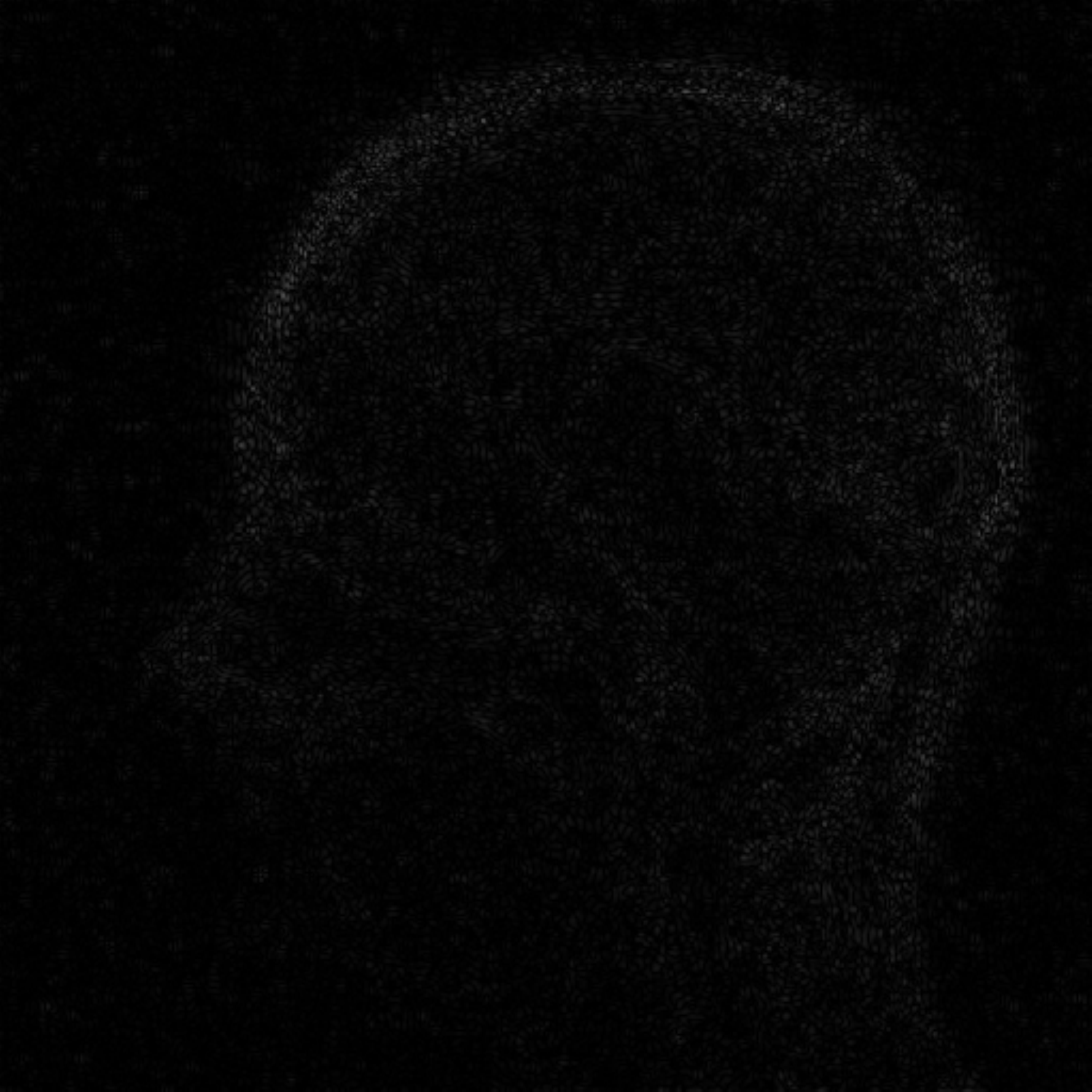} \\
{\small (a)}&{\small (b) PSNR = 40 dB} & {\small (c)} \\
\includegraphics[height=4.5cm]{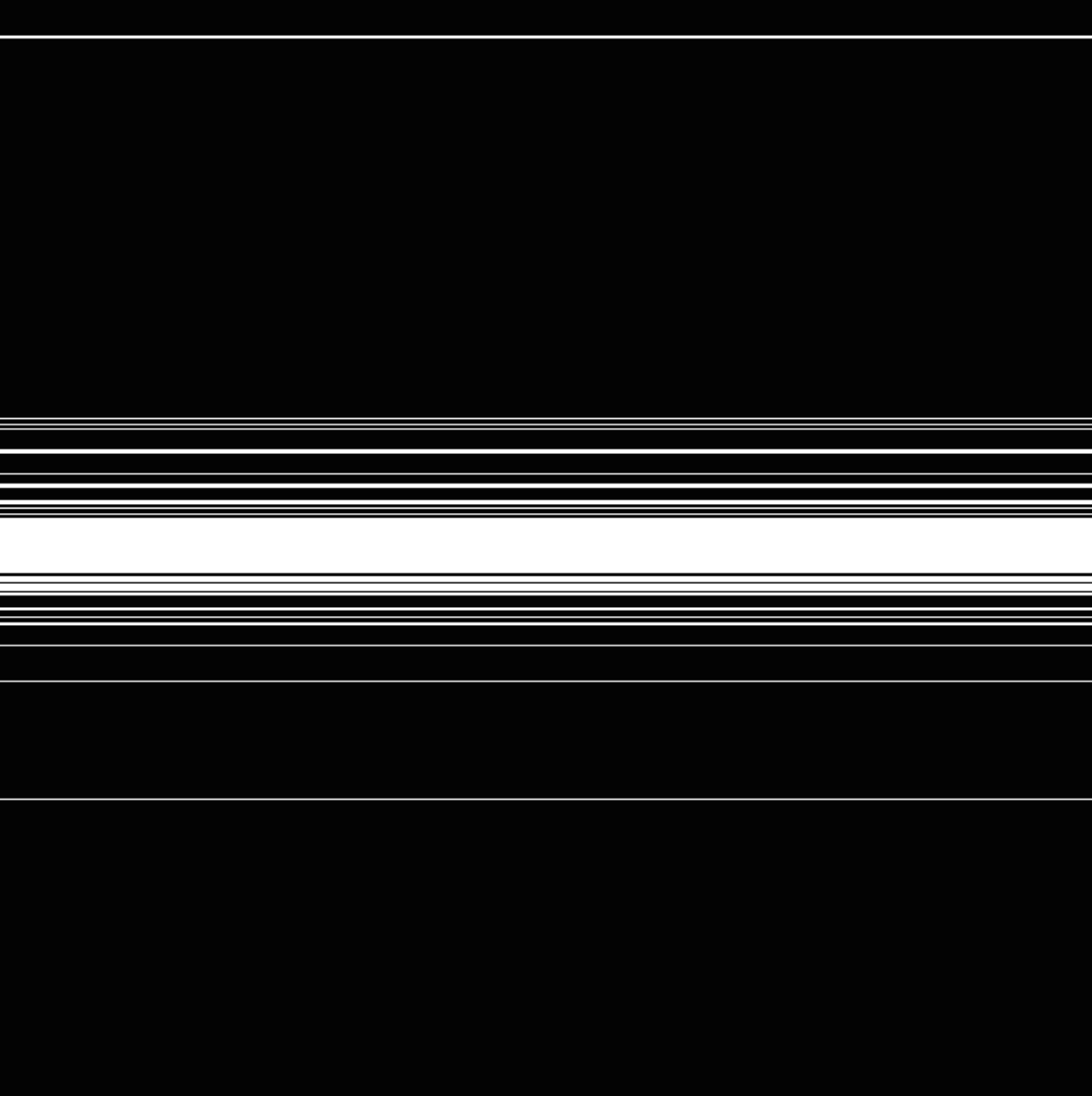} &
\includegraphics[height=4.5cm]{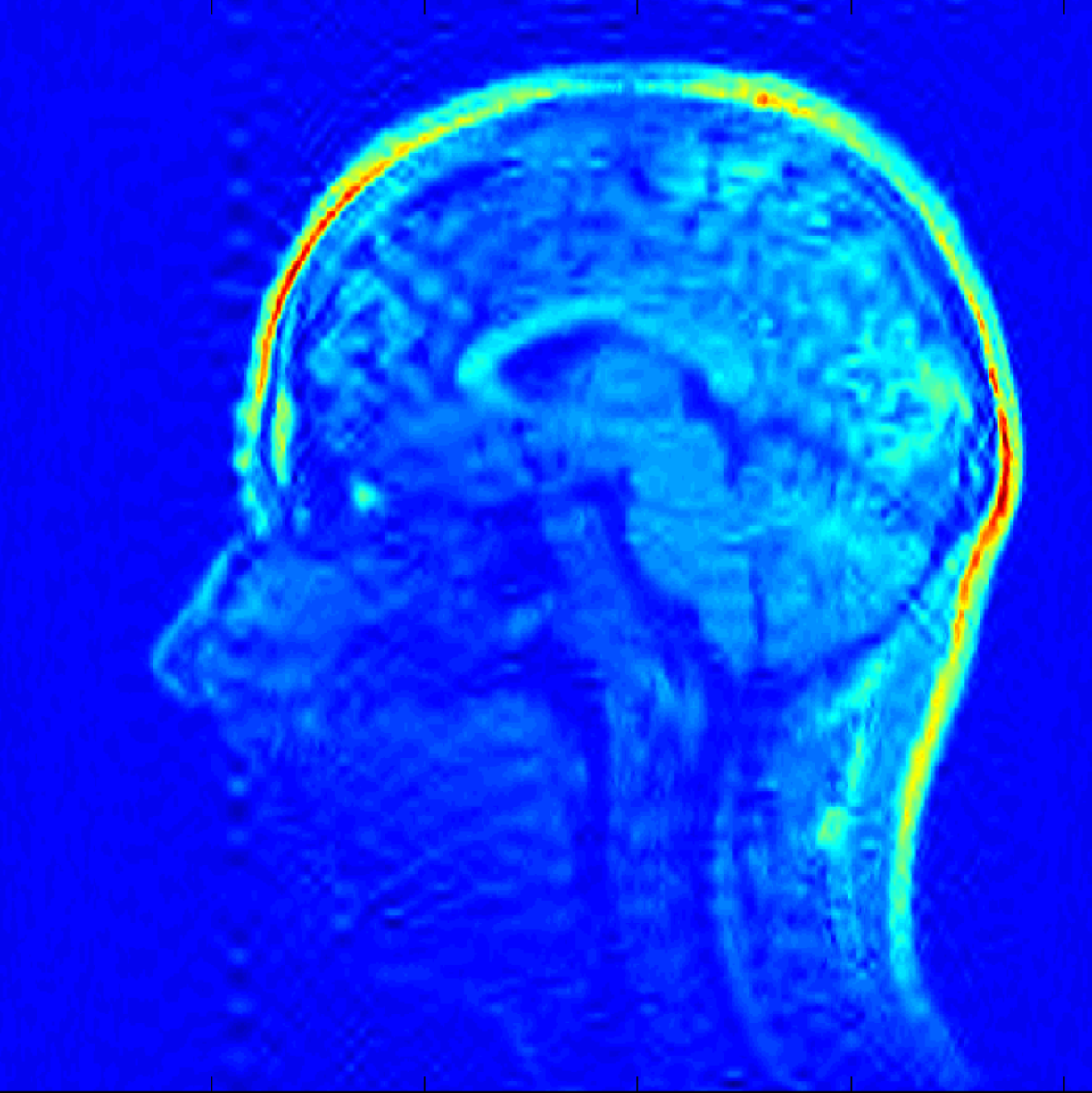} &
\includegraphics[height=4.5cm]{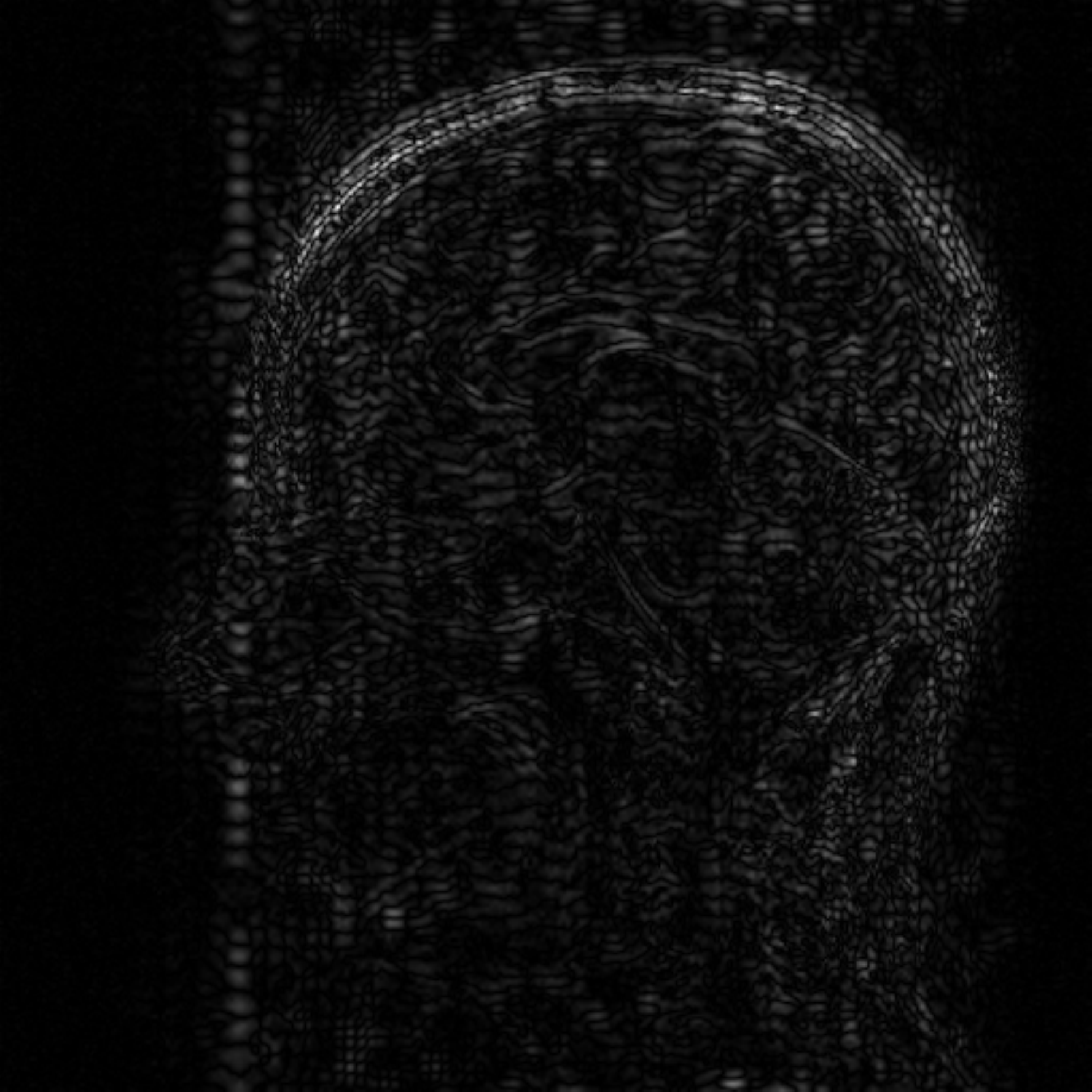} \\
{\small (d)}&{\small (e) PSNR = 32.79 dB} & {\small (f)}\\
\includegraphics[height=4.5cm]{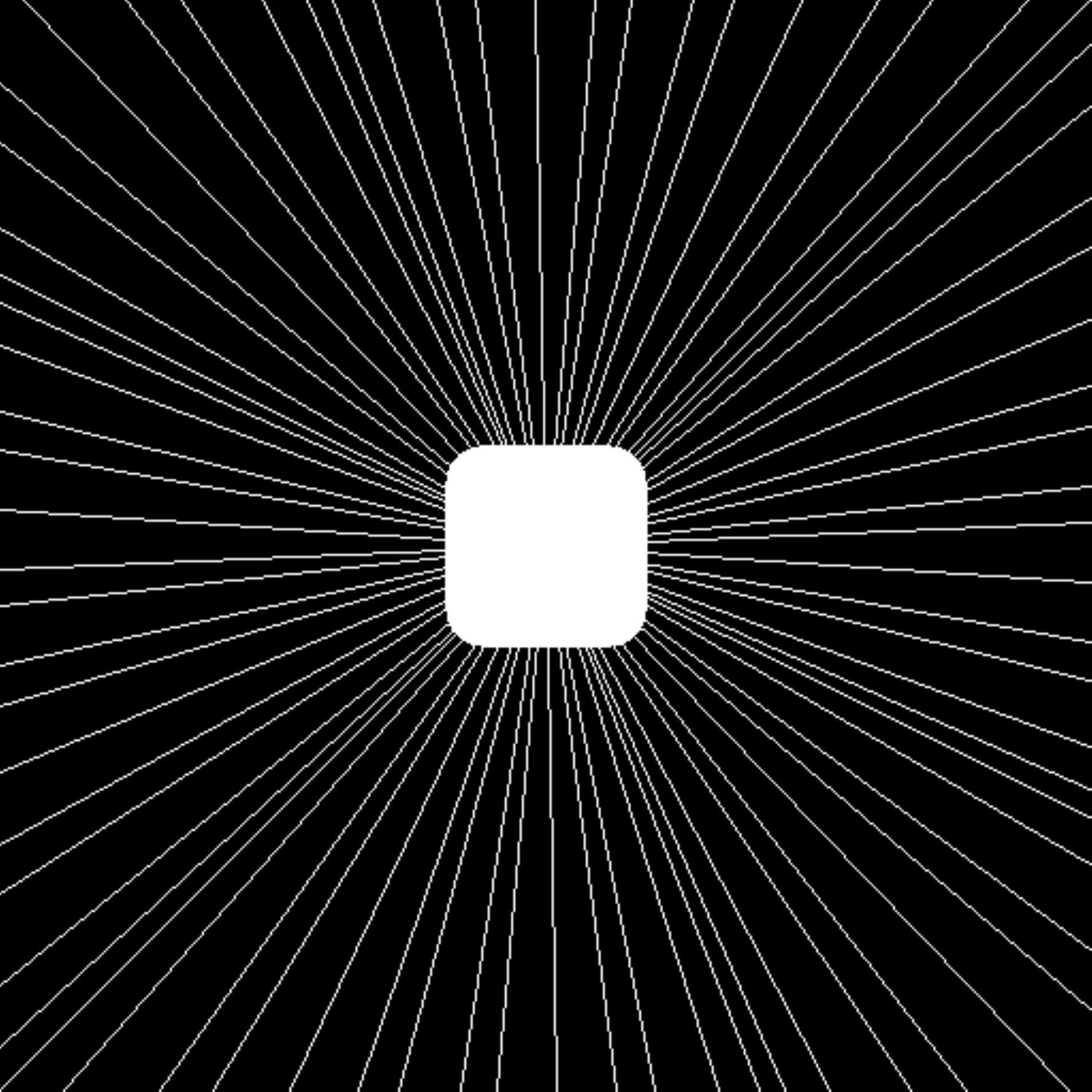}&
\includegraphics[height=4.5cm]{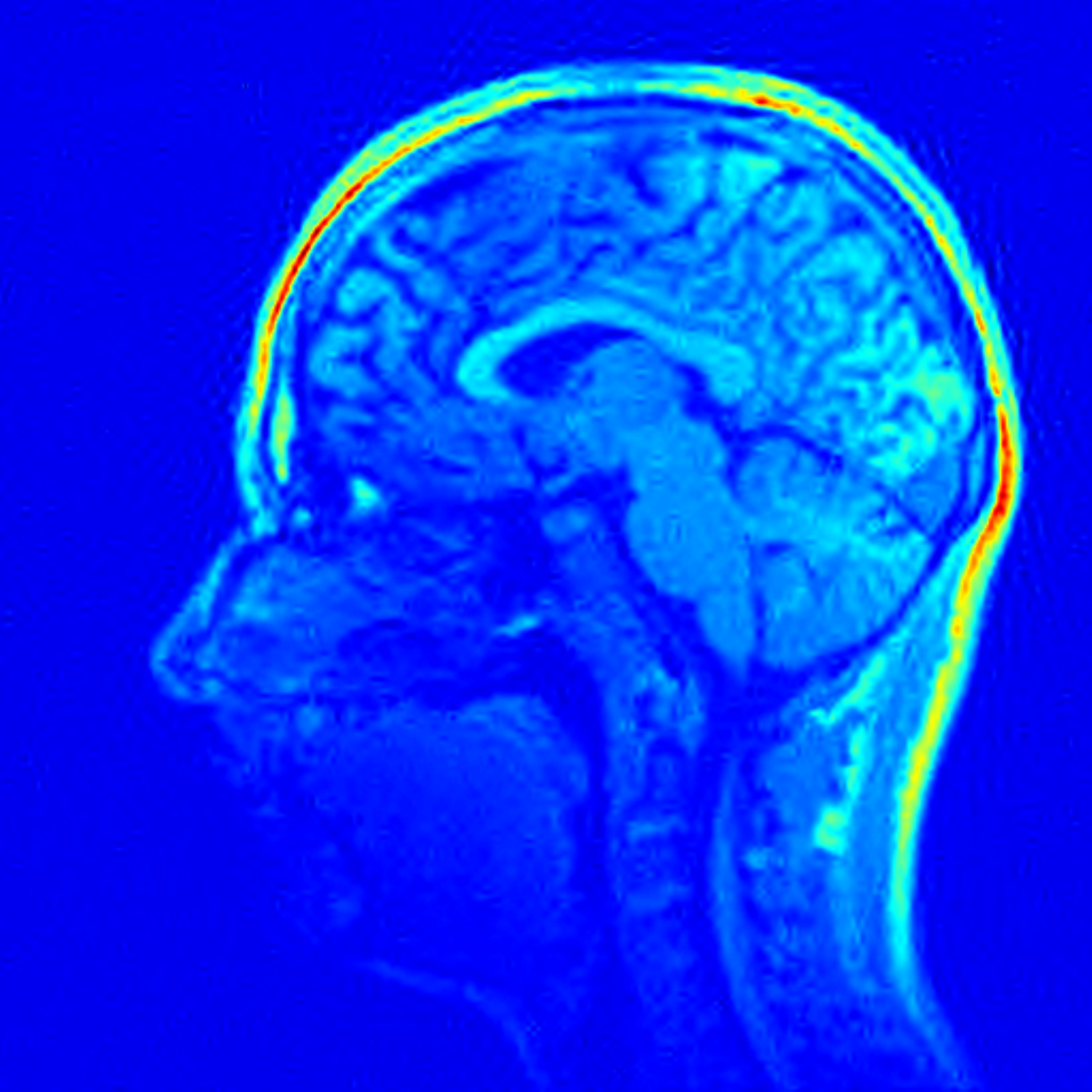}&
\includegraphics[height=4.5cm]{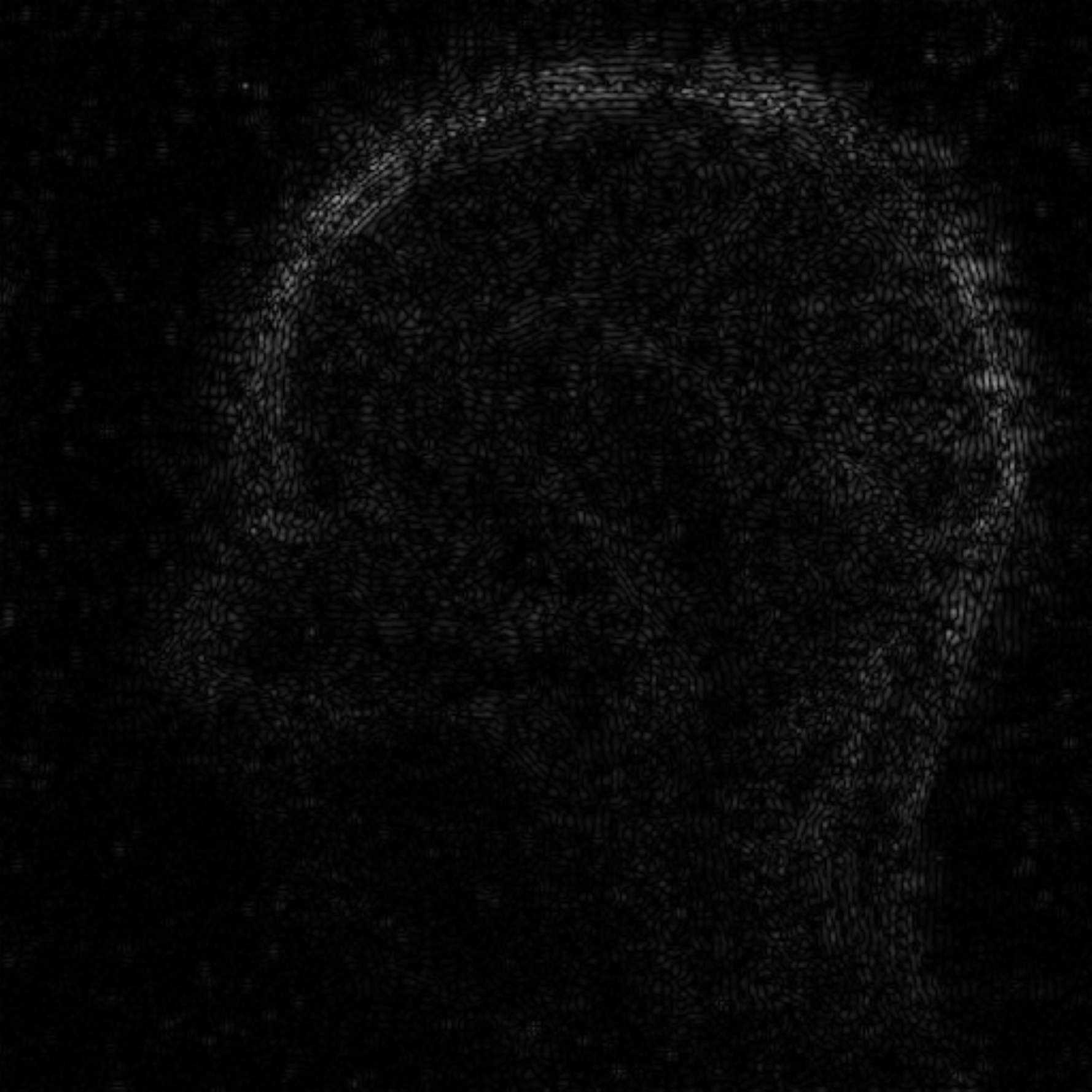} \\
{\small (g)}&{\small (h) PSNR = 36.34 dB } & {\small (i)} \\
\includegraphics[height=4.5cm]{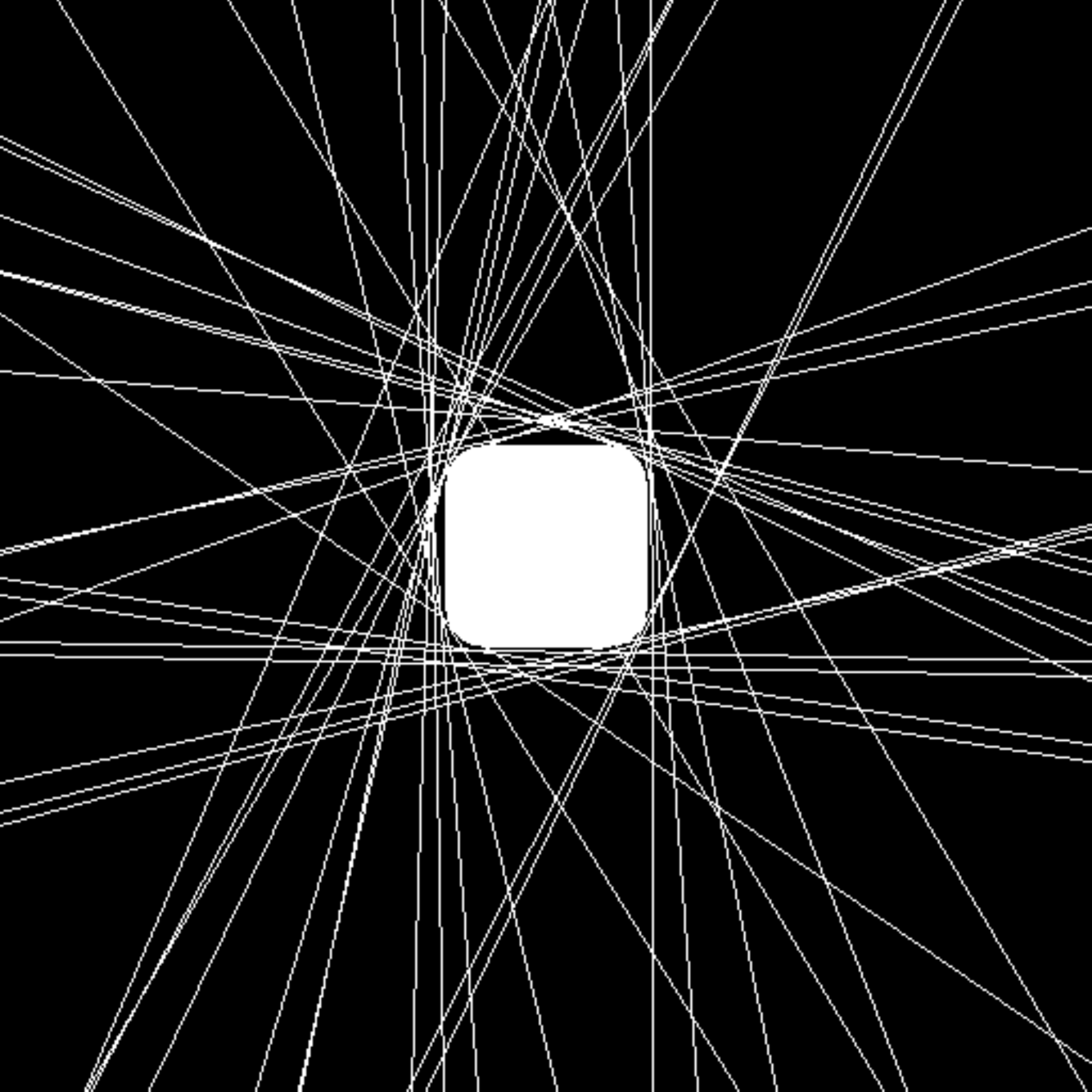} &
\includegraphics[height=4.5cm]{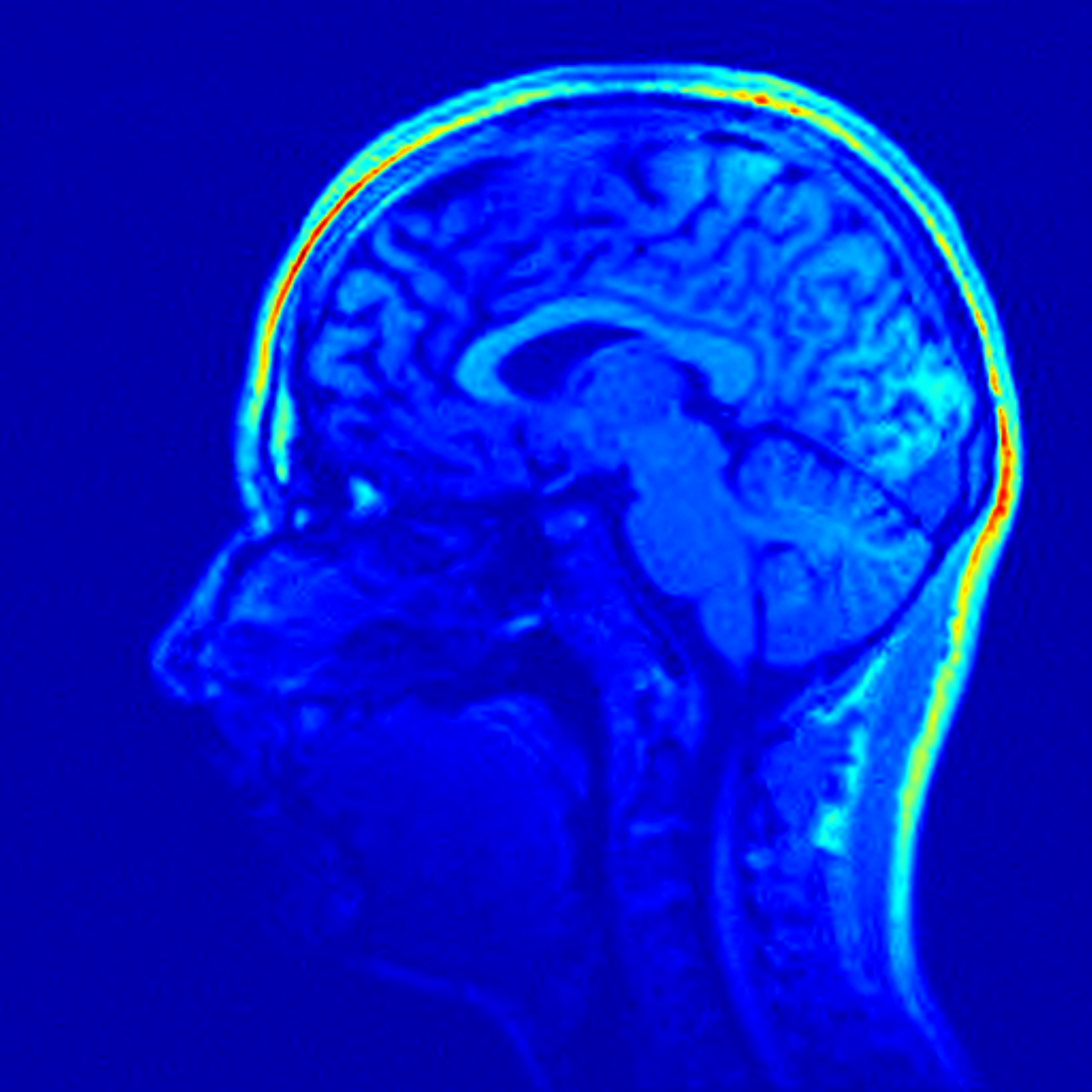} &
\includegraphics[height=4.5cm]{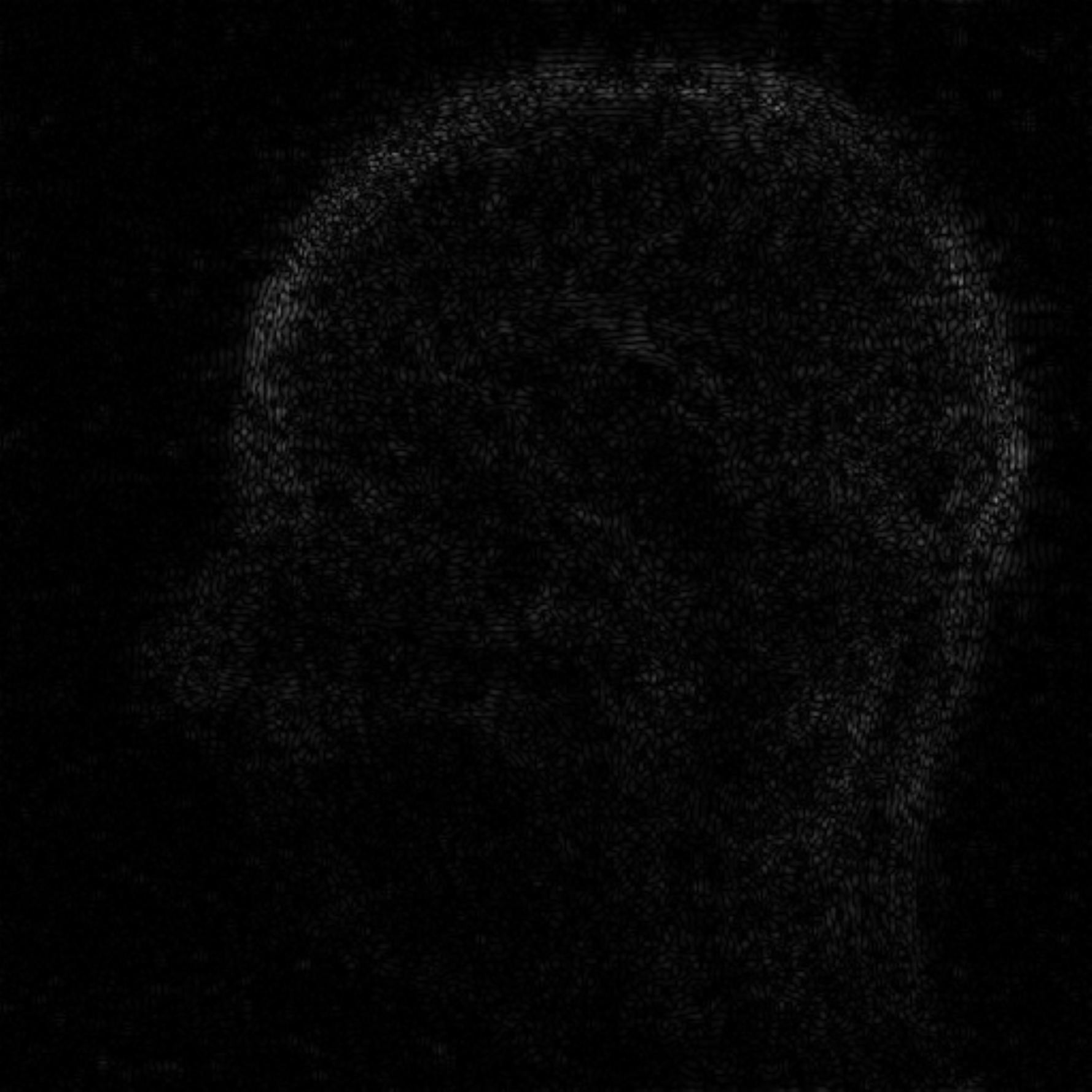} \\
{\small (j)}&{\small (k) PSNR = 38.99 dB} & {\small (l)}
\etabu
\caption{\label{fig:reconstruction} Reconstruction results using different sampling strategies. Each sampling pattern contains $10\%$ of the total number of possible measurements. From top to bottom: measurements drawn independently at random with a radial distribution - horizontal lines in the Fourier domain - deterministic radial sampling - heuristic method proposed in \cite{boyer2014algorithm}. From left to right: sampling scheme - corresponding reconstruction - difference with the reference (the same colormap is used in every experiment).}
\end{center}
\end{figure}

\clearpage
\appendix

\section{Bernstein's inequalities}

\begin{thmchapter}[Scalar Bernstein Inequality]
\label{theo:scalBern}
Let $x_1, \hdots , x_m$ be independent random variables such that $|x_\ell |\leq K$ almost surely for every $\ell \in \{ 1, \hdots , m \}$. Assume that $\Ebb | x_\ell |^2 \leq \sigma^2_\ell$ for $\ell \in \{ 1, \hdots , m \}$. Then for all $t>0$,
\begin{align*}
\Pbb \left( \left| \sum_{\ell=1}^m x_\ell \right| \geq t  \right) \leq 2 \exp\left(  -\frac{t^2/2}{\sigma^2 + Kt/3}\right),
\end{align*}
with $\sigma^2 \geq \sum_{\ell=1}^m \sigma_\ell^2$.
\end{thmchapter}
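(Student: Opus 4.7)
The plan is to follow the classical Chernoff--Bernstein argument based on exponential moments, assuming as usual the implicit centering $\Ebb x_\ell = 0$ (otherwise we center first, which only shifts the event). First I would reduce the two-sided tail to a one-sided one: writing
$$\Pbb\bigl(\bigl|\textstyle\sum_\ell x_\ell\bigr|\geq t\bigr)\leq \Pbb\bigl(\textstyle\sum_\ell x_\ell\geq t\bigr)+\Pbb\bigl(\textstyle\sum_\ell (-x_\ell)\geq t\bigr),$$
so it suffices to prove each term is bounded by $\exp\bigl(-\tfrac{t^2/2}{\sigma^2+Kt/3}\bigr)$, which explains the factor $2$ in the statement.

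Next, for any $\theta>0$ with $\theta K<3$, I apply Markov's inequality to the exponential: $\Pbb(\sum_\ell x_\ell\geq t)\leq e^{-\theta t}\,\Ebb\exp\bigl(\theta\sum_\ell x_\ell\bigr)=e^{-\theta t}\prod_\ell \Ebb e^{\theta x_\ell}$ by independence. The core step is to control each factor $\Ebb e^{\theta x_\ell}$. Expanding the exponential as a power series and using $\Ebb x_\ell=0$ together with $|x_\ell|^k\leq K^{k-2}|x_\ell|^2$ almost surely, I obtain
\begin{align*}
\Ebb e^{\theta x_\ell}
 &\leq 1+\sum_{k\geq 2}\frac{\theta^k\,\Ebb|x_\ell|^k}{k!}
 \leq 1+\frac{\theta^2\sigma_\ell^2}{2}\sum_{k\geq 2}\frac{2(\theta K)^{k-2}}{k!}.
\end{align*}
Using $\sum_{k\geq 2}\frac{2(\theta K)^{k-2}}{k!}\leq \sum_{j\geq 0}(\theta K/3)^j=\frac{1}{1-\theta K/3}$ (valid whenever $\theta K<3$) and then $1+u\leq e^u$, this yields
$$\Ebb e^{\theta x_\ell}\leq \exp\!\left(\frac{\theta^2\sigma_\ell^2/2}{1-\theta K/3}\right).$$

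Multiplying over $\ell$ and using $\sum_\ell \sigma_\ell^2\leq \sigma^2$ gives
$$\Pbb\bigl(\textstyle\sum_\ell x_\ell\geq t\bigr)\leq \exp\!\left(-\theta t+\frac{\theta^2\sigma^2/2}{1-\theta K/3}\right).$$
Finally, the explicit minimizer $\theta^\star=\dfrac{t}{\sigma^2+Kt/3}$ (which automatically satisfies $\theta^\star K<3$) plugged into the exponent produces exactly $-\dfrac{t^2/2}{\sigma^2+Kt/3}$, giving the desired bound. Combining with the two-sided reduction delivers the factor $2$ and completes the proof.

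The routine obstacle is the combinatorial estimate $\sum_{k\geq 2}\frac{2(\theta K)^{k-2}}{k!}\leq \frac{1}{1-\theta K/3}$: this is where the distinctive constant $1/3$ in Bernstein's bound appears, and it must be handled carefully (using $k!\geq 2\cdot 3^{k-2}$ for $k\geq 2$) to keep the constants sharp. Everything else is a textbook Chernoff optimization.
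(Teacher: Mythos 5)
Your proof is correct, and it is the standard Chernoff--Bernstein argument: the paper itself states this theorem as a classical auxiliary result in its appendix and provides no proof, so there is nothing to diverge from. All the key steps check out — the moment bound $\Ebb|x_\ell|^k\leq K^{k-2}\sigma_\ell^2$, the estimate $k!\geq 2\cdot 3^{k-2}$ giving the geometric series $\frac{1}{1-\theta K/3}$, and the substitution $\theta^\star=\frac{t}{\sigma^2+Kt/3}$ which indeed yields the exponent $-\frac{t^2/2}{\sigma^2+Kt/3}$ while satisfying $\theta^\star K<3$. You were also right to flag the implicit assumption $\Ebb x_\ell=0$: the paper's statement omits it, but without centering the inequality is false (take $x_\ell\equiv K$), and the paper's own applications (e.g.\ Lemma \ref{lem:distortionInf2}) only invoke the result for centered real and imaginary parts, consistent with your reading.
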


\begin{thmchapter}[Rectangular Matrix Bernstein Inequality]
 \label{theo:MatBern}
\cite[Theorem 1.6]{tropp2012user}

 Let $(\Zb_k)_{1 \leq k \leq m}$ be a finite sequence of rectangular independent random matrices of dimension $d_1\times d_2$. Suppose that $\Zb_k$ is such that $\Ebb \Zb_k = 0$ and $\| \Zb_k \|_{2\rightarrow 2} \leq K$ a.s.\ for some constant $K > 0$ that is independent of $k$. Define 
$$ \sigma^2 \geq \max\left( \left\|\sum_{k = 1}^{m} \Ebb \Zb_k \Zb_k^* \right\|_{2\rightarrow 2} , \left\|\sum_{k = 1}^{m} \Ebb \Zb_k^* \Zb_k \right\|_{2\rightarrow 2}\right).
$$
Then, for any $t > 0$, we have that
$$ \Pbb \left( \left\|\sum_{k = 1}^{m} \Zb_k  \right\|_{2\rightarrow 2} \geq t \right) \leq \left( d_1 + d_2 \right) \exp\left( -\frac{t^2/2}{\sigma^2 + Kt/3} \right)
$$

\end{thmchapter}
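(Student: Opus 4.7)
The plan is to prove this via the matrix Laplace transform method, reducing first to the Hermitian case via dilation. For any rectangular matrix $\Zb$ of size $d_1 \times d_2$, introduce the Hermitian dilation
$$
\widetilde{\Zb} = \begin{pmatrix} 0 & \Zb \\ \Zb^* & 0 \end{pmatrix} \in \Cbb^{(d_1+d_2) \times (d_1+d_2)},
$$
which satisfies $\lambda_{\max}(\widetilde{\Zb}) = \norm{\Zb}_{2\to 2}$ and $\widetilde{\Zb}^2 = \mathrm{diag}(\Zb \Zb^*, \Zb^* \Zb)$. Since $\norm{\sum_k \Zb_k}_{2\to 2} = \lambda_{\max}(\sum_k \widetilde{\Zb}_k)$, the proof reduces to bounding the top eigenvalue of a sum of zero-mean, uniformly bounded, independent Hermitian random matrices whose matrix variance is controlled by $\sigma^2$.

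Next, I would apply the matrix Chernoff-style inequality: for any $\theta > 0$,
$$
\Pbb\left(\lambda_{\max}\Bigl(\sum_k \widetilde{\Zb}_k\Bigr) \geq t \right) \leq e^{-\theta t}\, \Ebb \tr \exp\Bigl(\theta \sum_k \widetilde{\Zb}_k\Bigr).
$$
The key technical ingredient, which I expect to be the main obstacle, is Lieb's concavity theorem (that $\Ab \mapsto \tr \exp(\Hb + \log \Ab)$ is concave on positive definite cone). Applied iteratively through a tower of conditional expectations, Lieb's inequality gives subadditivity of the matrix cumulant generating function:
$$
\Ebb \tr \exp\Bigl(\sum_k \theta \widetilde{\Zb}_k\Bigr) \leq \tr \exp\Bigl(\sum_k \log \Ebb\, e^{\theta \widetilde{\Zb}_k}\Bigr).
$$
Without Lieb's theorem, one could only invoke the weaker Golden--Thompson inequality, which is restricted to two matrices and loses a logarithmic factor in the dimension when iterated, preventing the sharp bound.

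The next step is a pointwise control of each matrix MGF. Since $\Ebb \widetilde{\Zb}_k = 0$ and $\norm{\widetilde{\Zb}_k}_{2\to 2} \leq K$ almost surely, a Taylor expansion together with the scalar inequality $e^{\theta x} - 1 - \theta x \leq x^2 \cdot (e^{\theta K} - 1 - \theta K)/K^2$ valid for $|x| \leq K$, lifted to operators through the spectral calculus, yields
$$
\Ebb\, e^{\theta \widetilde{\Zb}_k} \preceq \exp\!\Bigl( g(\theta)\, \Ebb\, \widetilde{\Zb}_k^2 \Bigr), \qquad g(\theta) = \frac{e^{\theta K} - 1 - \theta K}{K^2}.
$$
Combining with the previous inequality, using monotonicity of $\tr \exp$ and the variance hypothesis $\bignorm{\sum_k \Ebb\, \widetilde{\Zb}_k^2}_{2\to 2} \leq \sigma^2$ (which follows from the block-diagonal form of $\widetilde{\Zb}_k^2$ and the two hypotheses on $\Ebb\, \Zb_k \Zb_k^*$ and $\Ebb\, \Zb_k^* \Zb_k$), and finally the trivial bound $\tr \exp(\Mb) \leq (d_1+d_2) \exp(\lambda_{\max}(\Mb))$, one obtains
$$
\Pbb\left(\lambda_{\max}\Bigl(\sum_k \widetilde{\Zb}_k\Bigr) \geq t \right) \leq (d_1+d_2) \exp\!\Bigl( -\theta t + g(\theta) \sigma^2 \Bigr).
$$

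To finish, I would optimize over $\theta > 0$. Choosing $\theta = K^{-1} \log(1 + Kt/\sigma^2)$ and simplifying with the elementary inequality $(1+u)\log(1+u) - u \geq u^2/(2 + 2u/3)$ for $u \geq 0$ produces the Bernstein exponent $-t^2/\bigl(2(\sigma^2 + Kt/3)\bigr)$, yielding the stated bound. The whole argument is essentially Tropp's proof and the difficulty is concentrated in Lieb's concavity theorem; everything else is a careful but routine adaptation of the scalar Bernstein derivation to the operator setting.
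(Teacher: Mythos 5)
Your outline is correct and is precisely the argument of the cited reference: the paper does not reprove this result but simply invokes \cite[Theorem 1.6]{tropp2012user}, whose proof proceeds exactly as you describe (Hermitian dilation, matrix Laplace transform, Lieb's concavity theorem for subadditivity of the matrix cumulant generating function, the moment generating function bound with $g(\theta)=(e^{\theta K}-1-\theta K)/K^2$, and optimization in $\theta$). No gaps; this is the standard and essentially the same route as the source the paper relies on.
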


\begin{thmchapter}[Vector Bernstein Inequality (V1)] 
\label{theo:VectBern2}
\cite[Theorem 2.6]{candes2011probabilistic}
Let $\left( \yb_k \right)_{1\leq k \leq m}$ be a finite sequence of  independent and identically distributed random vectors of dimension $n$. Suppose that  $\Ebb \yb_{1} = 0$ and $\| \yb_{1} \|_{2} \leq K$ a.s.\ for some constant $K > 0$ and set $\sigma^2 \geq \sum_k \Ebb \| \yb_k \|_2^2$.  Let $Z = \left\| \sum_{k = 1}^{m} \yb_k \right\|_2 $. Then, for any $0< t \leq \sigma^2/K $, we have that
$$
 \Pbb \left( Z \geq  t \right) \leq  \exp\left( -\frac{\left( t / \sigma - 1 \right)^2}{ 4} \right) \leq \exp \left( - \frac{ t^2}{8\sigma^2 } +  \frac{1}{4}\right),
$$
where $ \Ebb Z^2  =  \sum_{k = 1}^{m}  \Ebb \| \yb_{k} \|_{2}^{2} = m   \Ebb \| \yb_{1} \|_{2}^{2}$.
\end{thmchapter}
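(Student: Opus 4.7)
Let $Z = \bignorm{\sum_{k=1}^m \yb_k}_2$. The plan is to first bound $\Ebb Z$ from above by $\sigma$, then to prove a Gaussian-type concentration of $Z$ around its mean with variance proxy $\sigma^2$, and finally to combine the two. The mean bound is direct: since the $\yb_k$ are independent and centered, all cross terms in $\Ebb\bignorm{\sum_k \yb_k}_2^2$ vanish, yielding $\Ebb Z^2 = \sum_{k=1}^m \Ebb\|\yb_k\|_2^2 = m\,\Ebb\|\yb_1\|_2^2 \leq \sigma^2$, and Jensen's inequality gives $\Ebb Z \leq \sqrt{\Ebb Z^2} \leq \sigma$.

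The core of the argument is establishing
$$ \Pbb\bigpth{Z \geq \Ebb Z + s} \;\leq\; \exp\bigpth{-\tfrac{s^2}{4\sigma^2}} $$
in the sub-Gaussian range $s \lesssim \sigma^2/K$. To obtain this I would rewrite $Z$ in dual form as $Z = \sup_{\|\ub\|_2 \leq 1}\scal{\ub,\sum_k \yb_k}$, identifying it as the supremum of a centered empirical process indexed by the Euclidean unit ball. Each scalar summand $\scal{\ub,\yb_k}$ is bounded in absolute value by $K$ and has variance at most $\Ebb\|\yb_k\|_2^2$, so the weak variance of the process is controlled by $\sum_k \Ebb\|\yb_k\|_2^2 \leq \sigma^2$. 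A Talagrand/Bousquet-type inequality for the supremum of a bounded empirical process (equivalently, a Pinelis-type inequality for norms of sums of independent Hilbert-space-valued random variables) then delivers the desired sub-Gaussian tail.

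Combining the two ingredients, for $0 < t \leq \sigma^2/K$ we apply the concentration bound with $s = t - \sigma \leq t - \Ebb Z$ to obtain
$$ \Pbb(Z \geq t) \;\leq\; \exp\bigpth{-\tfrac{(t-\sigma)^2}{4\sigma^2}} \;=\; \exp\bigpth{-\tfrac{(t/\sigma - 1)^2}{4}}, $$
which is the first bound of the statement. The weaker form follows from the elementary inequality $(a-1)^2 \geq a^2/2 - 1$ (equivalent to $(a-2)^2 \geq 0$), applied with $a = t/\sigma$, and the fact that $\Ebb Z^2 = m\,\Ebb\|\yb_1\|_2^2$ as computed above.

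The chief obstacle in this plan is the variance-sensitive concentration used in the second paragraph. A naive bounded-differences (McDiarmid) argument for $Z$ only produces an exponent of order $s^2/(mK^2)$, which is too weak whenever $\sigma^2 \ll mK^2$. Extracting the correct $\sigma^2$ scaling requires a Talagrand- or Pinelis-type concentration inequality, whose careful proof is technically deeper than the rest of the argument. The restriction $t \leq \sigma^2/K$ in the theorem is exactly what keeps us inside the sub-Gaussian regime of such inequalities and rules out the slower sub-exponential Bernstein tail that takes over at very large deviations.
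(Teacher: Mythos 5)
The paper does not prove this statement at all: it is quoted directly from \cite{candes2011probabilistic} (Theorem 2.6 there, itself adapted from Gross), so there is no in-paper argument to compare yours to. Your outline reproduces the standard proof route from those sources: bound $\Ebb Z\le\sqrt{\Ebb Z^{2}}\le\sigma$ using independence and Jensen, concentrate $Z$ around its mean, substitute $s=t-\sigma$, and pass to the weaker exponent via $(a-1)^{2}\ge a^{2}/2-1$. The first and last of these steps are correct as written.

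The genuine gap is the middle step, which you flag but understate. The bound $\Pbb\left(Z\ge\Ebb Z+s\right)\le\exp\left(-s^{2}/(4\sigma^{2})\right)$, with that constant and on the whole range needed, is essentially the entire content of the theorem, and it is not a routine consequence of ``a Talagrand/Bousquet-type inequality''. Bousquet's inequality applied to $Z=\sup_{\|\ub\|_{2}\le 1}\sum_{k}\langle\ub,\yb_{k}\rangle$ carries the variance proxy $\sigma^{2}+2K\,\Ebb Z$, hence yields a tail of the form $\exp\left(-s^{2}/(2\sigma^{2}+4K\sigma+2Ks/3)\right)$; the $4K\sigma$ term is not absorbed into $4\sigma^{2}$ when $K\gtrsim\sigma$, which is precisely why the companion statement (V2), taken from Foucart and Rauhut, retains a $2K\sqrt{\Ebb Z^{2}}$ term in its denominator. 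Obtaining the clean $4\sigma^{2}$ requires the sharper Hilbert-space concentration result of Ledoux--Talagrand/Pinelis, and that is the citation whose exact constant and range of validity must be pinned down; writing the admissible range as ``$s\lesssim\sigma^{2}/K$'' leaves open whether it covers all of $s\le\sigma^{2}/K-\sigma$, and the constant $4$ in the exponent is exactly what is being claimed. Finally, the substitution $s=t-\sigma$ is only legitimate when $t\ge\sigma$; for $0<t<\sigma$ your argument produces nothing (and the displayed inequality is only sensible there under the convention that the exponent is truncated at zero), so you should either restrict to $t\ge\sigma$ or dispose of the small-$t$ case explicitly.
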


\begin{thmchapter}[Vector Bernstein Inequality (V2)] 
\label{theo:VectBern}
\cite[Corollary 8.44]{foucart2013mathematical}
Let $\left( \yb_k \right)_{1\leq k \leq m}$ be a finite sequence of  independent and indentically distributed random vectors of dimension $n$. Suppose that  $\Ebb \yb_{1} = 0$ and $\| \yb_{1} \|_{2} \leq K$ a.s.\ for some constant $K > 0$. Let $Z = \left\| \sum_{k = 1}^{m} \yb_k \right\|_2 $. Then, for any $t > 0$, we have that
$$
 \Pbb \left( Z \geq \sqrt{\Ebb Z^2} + t \right) \leq  \exp\left( -\frac{t^2/2}{ \Ebb Z^2 +  2K\sqrt{\Ebb Z^2}  +  Kt/3} \right),
$$
where $ \Ebb Z^2  =  \sum_{k = 1}^{m}  \Ebb \| \yb_{k} \|_{2}^{2} = m   \Ebb \| \yb_{1} \|_{2}^{2}$.
Note that the previous inequality still holds by replacing $\Ebb Z^2$ by $\sigma^2$ where $\sigma^2 \geq \Ebb Z^2$.
%
%
\end{thmchapter}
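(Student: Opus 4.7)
My plan is to view $Z$ as the supremum of a centered empirical process indexed by the dual unit ball and to invoke Bousquet's form of Talagrand's concentration inequality for such suprema. The first step is the duality rewriting
\[
Z = \Bignorm{\sum_{k=1}^{m} \yb_k}_{2} \;=\; \sup_{\|\ub\|_{2}\leq 1} \sum_{k=1}^{m} \langle \ub, \yb_k\rangle \;=:\; \sup_{f \in \Fc} \sum_{k=1}^{m} f(\yb_k),
\]
where $\Fc = \{ f_{\ub}(\yb) = \langle \ub, \yb\rangle : \|\ub\|_2 \leq 1 \}$. Each $f \in \Fc$ satisfies $\Ebb f(\yb_k) = 0$ by hypothesis, is bounded by $K$ almost surely via Cauchy--Schwarz together with $\|\yb_k\|_2 \leq K$, and has variance $\Ebb f(\yb_k)^2 \leq \Ebb \|\yb_k\|_2^2$.

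The second step is to identify the parameters entering Bousquet's bound. The weak variance satisfies
\[
\sigma_*^2 \;:=\; \sup_{f\in\Fc} \sum_{k=1}^m \Ebb f(\yb_k)^2 \;=\; m\,\lambda_{\max}\bigl(\Ebb \yb_1 \yb_1^*\bigr) \;\leq\; m\,\mathrm{tr}\bigl(\Ebb \yb_1 \yb_1^*\bigr) \;=\; \sum_{k=1}^m \Ebb \|\yb_k\|_2^2 \;=\; \Ebb Z^2,
\]
where the last equality follows from expanding $\|\sum_k \yb_k\|_2^2$ and using independence together with centering to kill the cross terms. Note that the uniform envelope is simply $K$ and the mean is $\Ebb Z$, which by Jensen's inequality is bounded by $\sqrt{\Ebb Z^2}$.

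The third step is the application of Bousquet's inequality (see e.g.\ Theorem~12.5 in Boucheron--Lugosi--Massart, \emph{Concentration Inequalities}): for all $t > 0$,
\[
\Pbb\bigl(Z \geq \Ebb Z + t\bigr) \;\leq\; \exp\!\Bigl(-\frac{t^2/2}{\sigma_*^2 + 2K\,\Ebb Z + Kt/3}\Bigr).
\]
Substituting the bounds $\sigma_*^2 \leq \Ebb Z^2$ and $\Ebb Z \leq \sqrt{\Ebb Z^2}$ in the denominator only weakens the right-hand side, and the inclusion of events $\{Z \geq \sqrt{\Ebb Z^2} + t\} \subset \{Z \geq \Ebb Z + t\}$ shifts the threshold to the claimed one, yielding exactly the stated inequality. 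The replacement of $\Ebb Z^2$ by any $\sigma^2 \geq \Ebb Z^2$ mentioned at the end of the theorem is then a trivial monotonicity.

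The main obstacle is Bousquet's inequality itself, whose proof is non-trivial and typically proceeds by the modified logarithmic Sobolev / entropy method of Ledoux and Massart applied to the self-bounded function $Z$; deriving it from scratch would amount to essentially a self-contained chapter of its own. Assuming that inequality is granted, the remaining steps above are routine moment identities and elementary comparisons.
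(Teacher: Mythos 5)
This statement is quoted in the paper directly from \cite[Corollary~8.44]{foucart2013mathematical} without an internal proof, and your derivation reproduces exactly the argument used there: rewrite $Z$ as the supremum of the centered empirical process indexed by the dual unit ball, bound the weak variance by $\Ebb Z^2$ and the mean by $\sqrt{\Ebb Z^2}$ via Jensen, and apply Bousquet's form of Talagrand's concentration inequality. The argument is correct (up to the routine technicalities of taking $f_{\ub}=\mathrm{Re}\,\langle \ub,\cdot\rangle$ over a countable dense subset of the ball so that the indexing class is admissible for Bousquet's theorem), so your proposal matches the intended proof.
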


\section{Estimates: auxiliary results} \label{sec:aux}

Let $S$ be the support of the signal to be reconstructed such that $|S|=s$. Note that the isotropy condition \eqref{condIsotropy}  ensures that the following properties hold 
\begin{enumerate} 
\item $\Ebb \left( {\Bb^* \Bb} \right) = \Id_n$ and $\Ebb \left( \Bb_{S}^* \Bb_{S} \right) = \Id_s$. 
\item for any vector $\wb \in \Cbb^s$,  $\Ebb \left[  \Bb_{S} \wb   \right]^2 = \| \wb \|_2^2$.
\item for any $i \in {S^c}$, $ \Ebb \left( \Bb_{S}^* \Bb \eb_i\right)= 0$.
\end{enumerate}
The above properties will be repeatedly used in the proof of the following lemmas.

\begin{lemme}
\label{lem:localIsometry}
Let $S \subset \{1,\hdots , n\}$ be of cardinality of $s$. Then, for any $\delta >0$, one has that
\begin{align}
\tag{E1}
\label{lem:E1eq}
\displaystyle \Pbb \left( \| \Ab_S^* \Ab_S - \Id_{s} \|_{2\rightarrow 2} \geq \delta \right) &\leq 2 s \exp \left( -  \frac{m \delta^2/2}{\mu_1(S)  +   \max(\mu_1(S)-1 , 1)\delta /3)} \right).
\end{align}
\end{lemme}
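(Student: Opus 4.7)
The plan is to write $\Ab_S^* \Ab_S - \Id_s$ as a sum of $m$ i.i.d. zero-mean self-adjoint random matrices of size $s \times s$, and then apply the Matrix Bernstein Inequality (Theorem \ref{theo:MatBern}).

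More precisely, I would set $\Zb_k := \frac{1}{m}\bigpth{(\Bb_k)_S^* (\Bb_k)_S - \Id_s}$ so that $\Ab_S^* \Ab_S - \Id_s = \sum_{k=1}^m \Zb_k$. The isotropy condition \eqref{condIsotropy}, restricted to the support $S$, gives $\Ebb\bigcro{(\Bb_k)_S^*(\Bb_k)_S} = \Id_s$, hence $\Ebb[\Zb_k] = 0$. The deterministic bound $\norm{(\Bb)_S^*(\Bb)_S}_{2\to 2} \leq \mu_1(S)$ from Definition \ref{def:quantities} ensures that the eigenvalues of $(\Bb_k)_S^*(\Bb_k)_S$ lie in $[0,\mu_1(S)]$, so those of $(\Bb_k)_S^*(\Bb_k)_S - \Id_s$ lie in $[-1, \mu_1(S)-1]$. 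This yields the almost sure bound $\norm{\Zb_k}_{2\to 2} \leq K$ with $K = \max(\mu_1(S)-1, 1)/m$.

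For the variance term, since $\Zb_k$ is self-adjoint, $\Ebb[\Zb_k^* \Zb_k] = \Ebb[\Zb_k^2]$. Expanding and using $\Ebb[(\Bb_k)_S^*(\Bb_k)_S] = \Id_s$, one gets $\Ebb[\Zb_k^2] = \frac{1}{m^2}\bigpth{\Ebb\bigcro{((\Bb_k)_S^*(\Bb_k)_S)^2} - \Id_s}$. Using $0 \preceq \Mb \preceq \mu_1(S) \Id$ implies $\Mb^2 \preceq \mu_1(S) \Mb$, I would obtain $\Ebb\bigcro{((\Bb_k)_S^*(\Bb_k)_S)^2} \preceq \mu_1(S) \Id_s$, and by Jensen's inequality $\Ebb\bigcro{((\Bb_k)_S^*(\Bb_k)_S)^2} \succeq \Id_s$, giving $\bignorm{\sum_{k=1}^m \Ebb[\Zb_k^2]}_{2\to 2} \leq (\mu_1(S)-1)/m \leq \mu_1(S)/m =: \sigma^2$.

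Plugging these values of $K$ and $\sigma^2$ into Theorem \ref{theo:MatBern} with $d_1 = d_2 = s$ and $t = \delta$ yields exactly the claimed inequality \eqref{lem:E1eq}, after multiplying numerator and denominator by $m$ inside the exponential. The only mild subtlety is the variance computation, where one must be careful to use the PSD inequality $\Mb^2 \preceq \norm{\Mb}_{2\to 2}\,\Mb$ for positive semidefinite $\Mb$; everything else is a direct substitution into the matrix Bernstein bound.
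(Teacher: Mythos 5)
Your proof is correct and follows essentially the same route as the paper: decompose $\Ab_S^*\Ab_S-\Id_s$ into a sum of i.i.d.\ centered self-adjoint matrices, bound the operator norm by $\max(\mu_1(S)-1,1)$ (up to the $1/m$ normalization) and the variance by $\mu_1(S)$ via $\Mb^2\preceq\norm{\Mb}_{2\to2}\Mb$, then apply the matrix Bernstein inequality of Theorem \ref{theo:MatBern}. The only difference is cosmetic (you place the $1/m$ inside $\Zb_k$ whereas the paper keeps it outside the sum), and your variance computation is in fact slightly cleaner than the paper's.
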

\begin{proof}
We decompose the matrix $\Ab_S^* \Ab_S-\Id_{s}$ as 
$$
\Ab_S^* \Ab_S- \Id_{s}= \frac{1}{m} \sum_{k=1}^m \left( \Bb_{k,S}^{*} \Bb_{k,S} - \Id_{s} \right) =\frac{1}{m} \sum_{k=1}^m X_{k},
$$
where $X_{k} :=\displaystyle \left( \Bb_{k,S}^{*} \Bb_{k,S} - \Id_{s}\right)$.
It is clear that $\Ebb X_{k} = 0$, and since 
$
\left\| \Bb_{k,S}^{*} \Bb_{k,S}\right\|_{2\rightarrow 2} \leq  \mu_1(S),
$
we have that
$$\|X_{k} \|_{2\rightarrow 2} =  \max \left( \left\| \Bb_{k,S}^{*} \Bb_{k,S}\right\|_{2\rightarrow 2} -1 ,1\right)\leq  \max( \mu_1(S)-1 , 1).$$
Lastly, we remark that
\begin{align*}
0 \quad \preceq \quad  \Ebb X_{k}^2 = \Ebb\left[  \Bb_{k,S}^{*} \Bb_{k,S} \right]^2 - \Id_{s} \quad &\preceq \quad \Ebb \left\| \Bb_{k,S}^{*} \Bb_{k,S} \right\|_2 \Bb_{k,S}^{*} \Bb_{k,S} \preceq   \mu_1(S)  \Id_{s}.
\end{align*}
Therefore,
$\sum_{k=1}^{m} \Ebb X_k^2 \preceq  m  \mu_1(S) \Id_{s} $ which implies that 
$
\left\| \sum_{k=1}^{m} \Ebb X_k^2 \right\|_2 \leq m  \mu_1(S). 
$
Hence, inequality \eqref{lem:E1eq} follows immediately from  Bernstein's inequality for random matrices (see Therorem \ref{theo:MatBern}).
\end{proof}

\begin{lemme}
\label{lem:lowDistortion}
Let $S \subset \{1, \hdots, n\}$, such that $|S|=s$. Let $\wb$ be a vector in $\Cbb^s$. Then, for any $t > 0$, one has that
\begin{align}
\tag{E2}
\label{lem:E2eqt}
\Pbb & \left( \left\| \left( \Ab_S^{*} \Ab_S -\Id_s  \right) \wb \right\|_2\geq \left( \sqrt{\frac{\mu_1(S) -1}{m} } + t\right) \|\wb\|_2 \right) \\
& \leq \exp \left( - \frac{m t^2/2}{(\mu_1(S) -1) + 2 \sqrt{\frac{\mu_1(S)-1}{m}} \mu_1(S) + \mu_1(S) t/3} \right).\nonumber
\end{align}
\end{lemme}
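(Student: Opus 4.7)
The approach is to apply the Vector Bernstein inequality (V2, Theorem \ref{theo:VectBern}) to the random vector
\[
\left( \Ab_S^{*} \Ab_S - \Id_s \right)\wb \;=\; \frac{1}{m} \sum_{k=1}^{m} \zb_k, \qquad \zb_k := \left( \Bb_{k,S}^{*} \Bb_{k,S} - \Id_s \right)\wb.
\]
The summands $\zb_k$ are i.i.d., and the isotropy property \eqref{condIsotropy} gives $\Ebb \zb_k = 0$. The plan is therefore to compute an almost-sure bound $K$ on $\|\zb_k\|_2$ and a variance proxy $\sigma^2 \geq \Ebb \|\sum_k \zb_k\|_2^2$, then plug them into Theorem \ref{theo:VectBern} after an appropriate rescaling $t \mapsto m t \|\wb\|_2$.

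For the almost-sure bound, I would reuse the estimate from the proof of Lemma \ref{lem:localIsometry}, namely
\[
\left\| \Bb_{k,S}^{*} \Bb_{k,S} - \Id_s \right\|_{2\to 2} \;\leq\; \max\bigl( \mu_1(S) - 1, 1 \bigr) \;\leq\; \mu_1(S),
\]
(using that $\mu_1(S)\geq 1$ since $\Ebb\Bb_{k,S}^*\Bb_{k,S}=\Id_s$), which yields $\|\zb_k\|_2 \leq \mu_1(S)\|\wb\|_2 =: K$. This explains the $\mu_1(S)$ appearing in the denominator of the stated inequality.

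For the variance, expand
\[
\Ebb \|\zb_k\|_2^2 \;=\; \wb^{*}\, \Ebb\bigl[(\Bb_{k,S}^{*} \Bb_{k,S})^{2} - 2 \Bb_{k,S}^{*} \Bb_{k,S} + \Id_s \bigr]\, \wb \;=\; \wb^{*}\bigl( \Ebb[(\Bb_{k,S}^{*} \Bb_{k,S})^{2}] - \Id_s \bigr)\wb,
\]
using isotropy. Since $\Bb_{k,S}^{*} \Bb_{k,S}$ is PSD and bounded in operator norm by $\mu_1(S)$, the operator inequality $(\Bb_{k,S}^{*} \Bb_{k,S})^{2} \preceq \mu_1(S)\, \Bb_{k,S}^{*} \Bb_{k,S}$ holds almost surely, so $\Ebb[(\Bb_{k,S}^{*} \Bb_{k,S})^{2}] \preceq \mu_1(S)\, \Id_s$. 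This gives $\Ebb \|\zb_k\|_2^2 \leq (\mu_1(S)-1)\|\wb\|_2^2$, hence $\sigma^2 := m(\mu_1(S)-1)\|\wb\|_2^2 \geq \Ebb \|\sum_{k} \zb_k\|_2^2$ and $\sqrt{\sigma^2} = \sqrt{m(\mu_1(S)-1)}\,\|\wb\|_2$.

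Finally, applying Theorem \ref{theo:VectBern} to $Z := \|\sum_{k=1}^m \zb_k\|_2$ with deviation parameter $mt\|\wb\|_2$ and dividing through by $m$ yields the event $\|(\Ab_S^{*}\Ab_S - \Id_s)\wb\|_2 \geq \bigl(\sqrt{(\mu_1(S)-1)/m} + t\bigr)\|\wb\|_2$ with the exponential bound in the statement; the three terms in the denominator correspond precisely to $\sigma^2$, $2K\sqrt{\sigma^2}$, and $Kt/3$ after cancelling one factor of $m$. The step that requires the most care is matching the $\sqrt{(\mu_1(S)-1)/m}$ bias term with the variance bound rather than using a cruder estimate such as $\mu_1(S)$, because this is what gives the sharper form of the denominator; the operator inequality $(\Bb_{k,S}^{*}\Bb_{k,S})^2 \preceq \mu_1(S)\Bb_{k,S}^{*}\Bb_{k,S}$ combined with subtracting the identity before squaring is exactly what delivers the $(\mu_1(S)-1)$ factor.
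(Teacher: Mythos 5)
Your proof is correct and follows essentially the same route as the paper's: the same decomposition into i.i.d. zero-mean summands $(\Bb_{k,S}^{*}\Bb_{k,S}-\Id_s)\wb$, the same variance bound $\Ebb\|\zb_k\|_2^2\leq(\mu_1(S)-1)\|\wb\|_2^2$ obtained from the operator inequality $(\Bb_{k,S}^{*}\Bb_{k,S})^2\preceq\mu_1(S)\,\Bb_{k,S}^{*}\Bb_{k,S}$ together with isotropy, and the same application of the vector Bernstein inequality (V2) with $K=\mu_1(S)\|\wb\|_2$. The only cosmetic difference is that you obtain the almost-sure bound directly from the operator norm of $\Bb_{k,S}^{*}\Bb_{k,S}-\Id_s$ whereas the paper expands the quadratic form, but both yield the same constant $K$ and the same final exponent.
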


\begin{proof}
Without loss of generality we may assume that $\| \wb \|_2 =1$. We remark that
$$
\left( \Ab_S^* \Ab_S - \Id_s \right) \wb_S =  \frac{1}{m} \sum_{k=1}^m \left( \Bb_{k,S}^{*} \Bb_{k,S} - \Id_s\right) \wb = \frac{1}{m} \sum_{k=1}^m  \yb_k,  
$$
where $ \yb_k = \left( \Bb_{k,S}^{*} \Bb_{k,S}- \Id_s \right) \wb$ is a random vector with zero mean.  Simple calculations yield that
\begin{align*}
 \left\| \frac{1}{m} \yb_k \right\|_2^2 &= \frac{1}{m^2} \left(  \wb^* \left( \Bb_{k,S}^{*} \Bb_{k,S} \right)^2 \wb   -2 \wb^* \Bb_{k,S}^* \Bb_{k,S} \wb +  \wb^*\wb \right) \\
 &\leq \frac{1}{m^2} \left( \mu_1(S) \wb^*  \Bb_{k,S}^{*} \Bb_{k,S}  \wb   -2 \wb^* \Bb_{k,S}^* \Bb_{k,S} \wb + 1\right) \\
 &= \frac{1}{m^2} \left( \left( \mu_1(S) -2 \right)\wb^* \Bb_{k,S}^* \Bb_{k,S} \wb  + 1  \right) \\
 &\leq \frac{1}{m^2} \left( \left( \mu_1(S) -2 \right) \mu_1(S) \| \wb\|_2^2  + 1  \right) = \frac{1}{m^2} \left( \left( \mu_1(S) -2 \right) \mu_1(S) + 1  \right)  \\
 &\leq \frac{1}{m^2}  \left( \mu_1(S) - 1  \right)^2 \leq \frac{1}{m^2}  \mu_1^2(S).
\end{align*}
Now, let us define 
$
Z = \left\| \frac{1}{m} \sum_{k=1}^m  \yb_k\right\|_2.
$
By independence of the random vectors $\yb_{k}$, it follows that
\begin{align*}
\Ebb \left[ Z^2 \right] &= \frac{1}{m} \Ebb \left\|   \yb_1\right\|^2_2 = \frac{1}{m} \Ebb \left[ \left\langle  \Bb_S^{*} \Bb_S  \wb ,   \Bb_S^{*} \Bb_S\wb \right\rangle - 2 \left\langle   \Bb_S^{*} \Bb_S \wb, \wb \right\rangle + \left\langle \wb , \wb \right\rangle   \right] \\
&= \frac{1}{m} \Ebb \left[ \left\langle \left(  \Bb_S^{*} \Bb_S \right)^2 \wb , \wb \right\rangle - 2 \left\|    \Bb_S  \wb \right\|_2^2 + 1   \right]. 
\end{align*}
To bound the first term in the above equality, one can write
\begin{align*} 
\Ebb &\left[ \left\langle \left(  \Bb_S^{*} \Bb_S \right)^2 \wb , \wb \right\rangle \right] = \left\langle \Ebb\left[\left(  \Bb_S^{*} \Bb_S \right)^2\right] \wb , \wb \right\rangle \\
&\leq \mu_1(S) \left\langle \Ebb\left[\left(  \Bb_S^{*} \Bb_S \right)\right] \wb , \wb \right\rangle \leq   \mu_1(S) \|\wb\|_2^2 = \mu_1(S). 
\end{align*}
One immediately has that
$
\Ebb \left\langle \Bb_S  \wb ,  \Bb_S   \wb \right\rangle  =  \| \wb \|_2^2 = 1.
$
Therefore, one finally obtains that
$$  \Ebb \left[ Z^2 \right] \leq \frac{\mu_1(S) -1}{m}.
$$
Using the above upper bounds, namely $ \left\| \frac{1}{m} \yb_k  \right\|_2 \leq  \frac{\mu_1(S) }{m} $ and $\Ebb \left[ Z^2 \right] \leq \frac{\mu_1(S) -1}{m}$, the result of the lemma is thus a consequence of the Bernstein's inequality for random vectors  (see Theorem \ref{theo:VectBern}), which completes the proof. \end{proof}

\begin{lemme}
\label{lem:distortionInf2}
Let $S \subset \{1, \hdots, n\}$, such that $|S|=s$. Let $\vb$ be a vector of $\Cbb^s$. Then we have
\begin{align}
\tag{E3}
\label{lem:E3eq}
\Pbb\left( \left\| \Ab_{S^c}^* \Ab_S \vb \right\|_\infty \geq t \| \vb\|_2 \right) \leq 4n \exp\left( -\frac{mt^2/4}{\frac{\mu_3(S)}{s}  + \frac{\mu_2(S)}{\sqrt{s}} t/3} \right).
\end{align}
\end{lemme}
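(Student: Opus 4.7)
} The strategy is to bound each coordinate of $\Ab_{S^c}^*\Ab_S\vb$ via a scalar Bernstein inequality (Theorem \ref{theo:scalBern}) and then take a union bound over $i\in S^c$. Fix $i\in S^c$ and write
$$
\bigl(\Ab_{S^c}^*\Ab_S\vb\bigr)_i \;=\; \eb_i^*\Ab^*\Ab_S\vb \;=\; \frac{1}{m}\sum_{k=1}^m Z_{i,k}, \qquad Z_{i,k}:=\eb_i^*\Bb_k^*\Bb_{k,S}\vb.
$$
The $Z_{i,k}$ are i.i.d. copies of $Z_i:=\eb_i^*\Bb^*\Bb_S\vb$. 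Using the isotropy identity $\Ebb[\Bb^*\Bb_S]=\Ebb[\Bb^*\Bb]P_S=P_S$ together with $i\in S^c$ gives $\Ebb Z_{i,k}=\eb_i^*P_S\vb=0$, so each sum is centered.

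The next step is to plug in the three quantities of Definition \ref{def:quantities}. The almost sure bound
$$
|Z_{i,k}|\;\leq\;\bignorm{\Bb_{k,S}^*\Bb_k\eb_i}_2\,\|\vb\|_2\;\leq\;\frac{\mu_2(S)}{\sqrt{s}}\,\|\vb\|_2
$$
follows from Cauchy--Schwarz and the definition of $\mu_2(S)$. For the variance, rewriting $|Z_{i,k}|^2=\vb^*\Bb_{k,S}^*(\Bb_k\eb_i)(\Bb_k\eb_i)^*\Bb_{k,S}\vb$ and taking expectation yields
$$
\Ebb|Z_{i,k}|^2 \;=\; \vb^*\,\Ebb\bigl[\Bb_{S}^*(\Bb\eb_i)(\Bb\eb_i)^*\Bb_{S}\bigr]\,\vb \;\leq\; \frac{\mu_3(S)}{s}\,\|\vb\|_2^2
$$
by definition of $\mu_3(S)$.

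Scalar Bernstein is stated for real random variables, so I would apply it separately to $\Reel(Z_{i,k})$ and $\Img(Z_{i,k})$, both of which satisfy the same absolute and variance bounds. The event $\bigl|\frac{1}{m}\sum_k Z_{i,k}\bigr|\geq t\|\vb\|_2$ implies that either the real or imaginary part of the sum has absolute value at least $t\|\vb\|_2/\sqrt{2}$, so a union bound over these two cases together with Theorem \ref{theo:scalBern} applied with $K=\mu_2(S)\|\vb\|_2/\sqrt{s}$, $\sigma^2=m\mu_3(S)\|\vb\|_2^2/s$, and threshold $u=mt\|\vb\|_2/\sqrt{2}$ gives
$$
\Pbb\!\left(\Bigbars{\tfrac{1}{m}\sum_{k=1}^m Z_{i,k}}\geq t\|\vb\|_2\right)\;\leq\;4\exp\!\left(-\frac{mt^2/4}{\tfrac{\mu_3(S)}{s}+\tfrac{\mu_2(S)}{\sqrt{s}}\,t/3}\right),
$$
where I absorbed the harmless $1/\sqrt{2}$ in the denominator into the $t/3$ term. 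A final union bound over the at most $n$ indices $i\in S^c$ yields the factor $4n$ and completes the proof.

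The estimates are essentially routine once the variance proxy is identified; the only mild subtlety is handling complex random variables, which costs the factor of $4$ (instead of $2$) in the numerator of the pre-factor and of the denominator exponent $t^2/4$ (instead of $t^2/2$).
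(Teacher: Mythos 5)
Your proof is correct and follows essentially the same route as the paper's: the same decomposition of each coordinate into a centered sum, the same Cauchy--Schwarz bound via $\mu_2(S)$ for the almost-sure bound and $\mu_3(S)$ for the variance proxy, the same real/imaginary splitting of the scalar Bernstein inequality costing the factor $4$ and the $t^2/4$, and the same union bound over $i\in S^c$. No gaps.
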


\begin{proof}
Suppose without loss of generality that $\| \vb\|_2=1$.
Then,
\begin{align*} 
 \left\| \Ab_{S^c}^* \Ab_S \vb \right\|_\infty &= \max_{i \in S^c} \left\langle \eb_i ,  \Ab^* \Ab_S \vb \right\rangle = \max_{i \in S^c } \frac{1}{m} \sum_{k = 1}^m \left\langle \eb_i ,  \Bb_{k}^* \Bb_{k,S} \vb \right\rangle.
\end{align*}
Let us define $Z_k = \frac{1}{m} \left\langle \eb_i ,  \Bb_{k}^* \Bb_{k,S} \vb \right\rangle$. Note that $\Ebb Z_k =0$.
From the Cauchy-Schwarz inequality, we get
\begin{align*}
|Z_k | &= \left| \frac{1}{m} \left\langle \eb_i ,  \Bb_{k}^* \Bb_{k,S} \vb \right\rangle \right|  = \left| \frac{1}{m}  \vb^* \Bb_{k,S}^*   (\Bb_{k} \eb_i)  \right|  \leq \frac{1}{m} \| \vb\|_2  \|\Bb_{k,S}^*   (\Bb_{k} \eb_i) \|_2  \leq    \frac{1}{m} \frac{\mu_2(S)}{\sqrt{s}}.
\end{align*}
Furthermore,
\begin{align*}
\Ebb | Z_k |^2 &= \frac{1}{m^2} \Ebb \left\langle (\Bb_{k} \eb_i ) , \Bb_{k,S}\vb  \right\rangle^2 \\
&\leq\frac{1}{m^2}  \vb^* \Ebb \left[ \Bb_S^{*} \left( \Bb \eb_i \right) \left( \Bb \eb_i \right)^* \Bb_S   \right] \vb
\\
&\leq \frac{1}{m^2}  \max_{i \in S^c} \left\| \Ebb \left[ \Bb_S^{*} \left( \Bb \eb_i \right) \left( \Bb \eb_i \right)^* \Bb_S   \right] \right\|_{2\rightarrow 2}
= \frac{1}{m^2} \frac{\mu_3(S)}{s}.
\end{align*}
Using Bernstein's inequality \ref{theo:scalBern} for complex random variables, we end to
\begin{align*}
\Pbb &\left( \frac{1}{m} \left| \sum_{k = 1}^m  \left\langle \eb_i ,  \Bb_{k}^* \Bb_{k} \vb \right\rangle \right| \geq t \right) \\
&\leq  \Pbb \left( \frac{1}{m} \left|  \sum_{k = 1}^m \text{Re} \left\langle \eb_i ,  \Bb_{k}^* \Bb_{k} \vb \right\rangle \right| \geq t / \sqrt{2}\right)  + \Pbb \left( \frac{1}{m} \left|  \sum_{k = 1}^m \text{Im} \left\langle \eb_i ,  \Bb_{k}^* \Bb_{k} \vb \right\rangle \right| \geq t /\sqrt{2} \right)
\\
&\leq 4 \exp\left( -\frac{mt^2/4}{\frac{\mu_3(S)}{s} + \frac{\mu_2(S)}{\sqrt{s}} t/3} \right).
\end{align*}
Taking the union bound over $i\in S^c$ completes the proof.
\end{proof}

\begin{lemme}
\label{lem:offSupportCoherence}
Let $S$ be a subset of $\{1, \hdots , n \}$. Then, for any $0<t < \frac{\mu_1(S)}{\mu_2(S)}$, one has that
\begin{align}
\tag{E4}
\label{lem:E4eq}
\Pbb & \left( \max_{i\in {S^c}} \left\| \Ab_S^* \Ab \eb_{i} \right\|_{2} \geq   t \right)  \leq n \exp \left( - \frac{\left( \sqrt{m / \mu_1(S)} t - 1\right)^2}{4} \right).
\end{align}
\end{lemme}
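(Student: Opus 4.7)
The plan is to apply the Vector Bernstein Inequality (V1) of Theorem \ref{theo:VectBern2} to each coordinate vector $\Ab_S^* \Ab \eb_i$ with $i \in S^c$, and then finish by a union bound.

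First, I fix $i \in S^c$ and decompose
$$ \Ab_S^* \Ab \eb_i \;=\; \sum_{k=1}^m \yb_k, \qquad \yb_k := \frac{1}{m}\,\Bb_{k,S}^*\, \Bb_k \eb_i, $$
which gives a sum of i.i.d.\ random vectors of dimension $s$. They have zero mean: by the isotropy condition $\Ebb[\Bb^*\Bb]=\Id_n$ one has $\Ebb[\Bb_S^*\Bb]=P_S^*\Ebb[\Bb^*\Bb]=P_S^*$, and since $i\in S^c$ one has $P_S^*\eb_i=0$, so $\Ebb\yb_k=0$.

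Next I provide the two parameters required by the Vector Bernstein bound. For the sup-norm of a single summand, Definition \ref{def:quantities} gives directly
$$ \|\yb_k\|_2 \;=\; \frac{1}{m}\,\|\Bb_{k,S}^*\Bb_k\eb_i\|_2 \;\leq\; \frac{\mu_2(S)}{m\sqrt{s}} \;=:\; K. $$
For the variance proxy, I compute $\sum_{k=1}^m \Ebb\|\yb_k\|_2^2 = \frac{1}{m}\,\eb_i^*\,\Ebb[\Bb^*\Bb_S\Bb_S^*\Bb]\,\eb_i$ and exploit the key PSD comparison: since $\|\Bb_S\Bb_S^*\|_{2\to 2}=\|\Bb_S^*\Bb_S\|_{2\to 2}\leq \mu_1(S)$ holds deterministically, for every $\vb\in\Cbb^n$,
$$ \vb^*\Bb^*\Bb_S\Bb_S^*\Bb\vb \;=\; \|\Bb_S^*(\Bb\vb)\|_2^2 \;\leq\; \mu_1(S)\,\|\Bb\vb\|_2^2 \;=\; \mu_1(S)\,\vb^*\Bb^*\Bb\vb, $$
that is $\Bb^*\Bb_S\Bb_S^*\Bb \preceq \mu_1(S)\,\Bb^*\Bb$. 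Taking expectation and using isotropy once more yields $\Ebb[\Bb^*\Bb_S\Bb_S^*\Bb]\preceq \mu_1(S)\,\Id_n$, whence $\sum_k \Ebb\|\yb_k\|_2^2 \leq \mu_1(S)/m =: \sigma^2$. This PSD-comparison step is the only nontrivial ingredient; everything else is bookkeeping.

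Finally, Theorem \ref{theo:VectBern2} applied with the above $\sigma$ and $K$ gives, for every $0<t\leq \sigma^2/K = \sqrt{s}\,\mu_1(S)/\mu_2(S)$ (which in particular contains the range $t<\mu_1(S)/\mu_2(S)$ assumed in the statement, since $\sqrt{s}\geq 1$),
$$ \Pbb\bigl(\|\Ab_S^*\Ab\eb_i\|_2 \geq t\bigr) \;\leq\; \exp\!\left(-\tfrac{(t/\sigma-1)^2}{4}\right) \;=\; \exp\!\left(-\tfrac{(\sqrt{m/\mu_1(S)}\,t-1)^2}{4}\right). $$
A union bound over the at most $n$ indices $i \in S^c$ gives the claimed factor $n$ in front of the exponential, which concludes the argument.
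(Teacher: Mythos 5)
Your proposal is correct and follows essentially the same route as the paper: the same decomposition into the zero-mean summands $\frac{1}{m}\Bb_{k,S}^*\Bb_k\eb_i$, the same bounds $K=\mu_2(S)/(m\sqrt{s})$ and $\sigma^2=\mu_1(S)/m$ (your PSD comparison $\Bb^*\Bb_S\Bb_S^*\Bb\preceq\mu_1(S)\,\Bb^*\Bb$ is just the paper's Cauchy--Schwarz step $\|\Bb_S^*(\Bb\eb_i)\|_2^2\leq\|\Bb_S^*\Bb_S\|_{2\to2}\,\|\Bb\eb_i\|_2^2$ combined with isotropy), followed by the Vector Bernstein Inequality (V1) and a union bound over $S^c$. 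Your explicit verification that the admissible range $t\leq\sigma^2/K=\sqrt{s}\,\mu_1(S)/\mu_2(S)$ contains the stated range $t<\mu_1(S)/\mu_2(S)$ is a small point the paper leaves implicit, but it changes nothing in substance.
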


\begin{proof}
Let us fix some $i\in  {S^c}$. For $k=1, \hdots, m$, we define the random matrix $$\xb_k := \frac{1}{m} \Bb_{k,S}^{*} \Bb_{k} \eb_{i} .$$  One has that $\Ebb \xb_k = 0$. Then, we remark that
\begin{align*}
\left\| \Ab_S^* \Ab \eb_{i} \right\|_{2}  = \left\| \frac{1}{m} \sum_{k=1}^m \Bb_{k,S}^{*} \Bb_{k} \eb_{i} \right\|_{2 } =\left\| \sum_{k=1}^m \xb_k \right\|_{2}.
\end{align*}
It follows that
\begin{align*}
\|\xb_k\|_{2 }  =\frac{1}{m } \left\| \Bb_{k,S}^{*}  \Bb_{k} \eb_{i} \right\|_{2}  \leq \frac{1}{m } \frac{ \mu_2(S)}{\sqrt{s}}.
\end{align*}
Furthermore, using Cauchy-Schwarz inequality, one has that
\begin{align*}
\Ebb \left\| \xb_1 \right\|_2^2 &= \frac{1}{m^2} \Ebb \|\Bb_{1,S}^{*}  \Bb_1 \eb_i \|_2^2 \leq  \frac{1}{m^2} \Ebb \| \Bb_{1,S}^{*}  \|_{2\rightarrow 2}^2 \| \Bb_1 \eb_i \|_2^2 \leq \frac{1}{m^2}  \mu_1(S) \Ebb \| \Bb_1 \eb_i \|_2^2 = \frac{1}{m^2}  \mu_1(S) \|  \eb_i \|_2 \\
&\leq \frac{1}{m^2}  \mu_1(S).
\end{align*}
Hence, using the above upper bounds, it follows from Bernstein's inequality for random vectors (see Theorem \ref{theo:VectBern2}) that
\begin{align*}
\Pbb \left( \left\| \Ab_S^* \Ab \eb_{i} \right\|_{2}  \geq  t\right) \leq  \exp \left( - \frac{\left( \sqrt{m / \mu_1(S)} t - 1\right)^2}{4} \right),
\end{align*}
Finally, Inequality \eqref{lem:E4eq} follows from a union bound over  $i\in {S^c}$, which completes the proof.
\end{proof}

\section{Proofs of the main results}

\subsection{Proof of Theorem \ref{thm:recovery}}
\label{sec:proof1}
In this section, we recall an inexact duality formulation of the minimization problem \eqref{pb:min} in the form of sufficient conditions to guarantee that the vector $\xb$ is the unique minimizer of  \eqref{pb:min}, see \cite{candes2011probabilistic}. These conditions give  the properties that an inexact dual vector must satisfy to ensure the uniqueness of the solution of \eqref{pb:min}. In what follows, the notation $\Mb_{\mid R}$ denotes the restriction of a square matrix $\Mb$ to its range $R$, and we define
$$
\|   \Mb_{\mid R}^{-1} \|_{2 \rightarrow 2} = \sup_{ \xb \in R ;  \;  \| \xb \|_{2} = 1 } \|  \Mb_{\mid R}^{-1} \xb \|_{2}
$$
as the operator norm of the inverse of $\Mb_{\mid R}$ restricted to its range.
\begin{lemme}[Inexact duality \cite{candes2011probabilistic}]
\label{lem:inexactDuality}
Suppose that $\xb \in \Rbb^{n}$ is supported on $S \subset \{1, \hdots , n\}$.  Then, assume that
\begin{equation}
\| \left(\Ab^*_S \Ab_S \right)_{\mid S}^{-1} \|_{2 \rightarrow 2} \leq 2 \qquad \text{and} \qquad  \max_{i\in {S^c}} \left\|  \Ab_S^* \Ab \eb_i \right\|_{2} \leq 1. \label{ass:dual1}
\end{equation}
Morever, suppose that there exists $\vb \in \Rbb^n$ in the row space of $\Ab$ obeying
\begin{equation}
\| \vb_S - \sgn(\xb_S)\|_2 \leq 1/4 \qquad \text{and} \qquad  \|  \vb_{S^c}  \|_\infty\leq 1/4, \label{ass:dual2}
\end{equation}
Then, the vector $\xb$ is the unique solution of the minimization problem \eqref{pb:min}
\end{lemme}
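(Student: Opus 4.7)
The plan is to argue by contradiction-by-perturbation: take any feasible $\hat\xb = \xb + \hb$ with $\Ab\hb = \zerob$, split $\hb = \hb_S + \hb_{S^c}$, and show that $\|\hat\xb\|_1 > \|\xb\|_1$ unless $\hb = \zerob$. The role of \eqref{ass:dual1} will be to control $\|\hb_S\|_2$ by $\|\hb_{S^c}\|_1$, while \eqref{ass:dual2} will provide the subgradient inequality needed to lower-bound the $\ell_1$ norm.

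First, I would use convexity of $\|\cdot\|_1$ together with the fact that $\sgn(\xb_S)$ is a subgradient of $\|\cdot\|_1$ at $\xb_S$ to write
\begin{align*}
\|\hat\xb\|_1 \;=\; \|\xb_S + \hb_S\|_1 + \|\hb_{S^c}\|_1 \;\geq\; \|\xb\|_1 + \Re\bigscal{\sgn(\xb_S),\,\hb_S} + \|\hb_{S^c}\|_1.
\end{align*}
Next, since $\vb$ lies in the row space of $\Ab$, write $\vb = \Ab^* \wb$; then $\Ab\hb = \zerob$ gives $\scal{\vb,\hb} = \scal{\wb,\Ab\hb} = 0$, hence $\scal{\vb_S,\hb_S} = -\scal{\vb_{S^c},\hb_{S^c}}$. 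Inserting and rearranging,
\begin{align*}
\Re\bigscal{\sgn(\xb_S),\hb_S} &= \Re\bigscal{\sgn(\xb_S)-\vb_S,\hb_S} - \Re\bigscal{\vb_{S^c},\hb_{S^c}}.
\end{align*}
Cauchy–Schwarz on the first term with $\|\sgn(\xb_S)-\vb_S\|_2 \leq 1/4$ and Hölder on the second with $\|\vb_{S^c}\|_\infty \leq 1/4$ yield
\begin{align*}
\|\hat\xb\|_1 \;\geq\; \|\xb\|_1 - \tfrac14 \|\hb_S\|_2 + \tfrac34 \|\hb_{S^c}\|_1.
\end{align*}

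The final ingredient is to bound $\|\hb_S\|_2$ using \eqref{ass:dual1}. Because $\Ab\hb = \zerob$ means $\Ab_S\hb_S = -\Ab_{S^c}\hb_{S^c}$, multiplying by $\Ab_S^*$ and inverting the Gram matrix on its range gives
\begin{align*}
\hb_S \;=\; -\bigl(\Ab_S^*\Ab_S\bigr)_{\mid S}^{-1}\,\Ab_S^*\Ab_{S^c}\hb_{S^c}
\;=\; -\bigl(\Ab_S^*\Ab_S\bigr)_{\mid S}^{-1}\sum_{i\in S^c} h_i\, \Ab_S^*\Ab\eb_i,
\end{align*}
so that $\|\hb_S\|_2 \leq 2 \sum_{i\in S^c}|h_i|\cdot\max_{i\in S^c}\|\Ab_S^*\Ab\eb_i\|_2 \leq 2\|\hb_{S^c}\|_1$. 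Substituting back gives $\|\hat\xb\|_1 \geq \|\xb\|_1 + \tfrac14 \|\hb_{S^c}\|_1$, so $\|\hat\xb\|_1 > \|\xb\|_1$ as soon as $\hb_{S^c}\neq \zerob$. It remains to rule out $\hb_{S^c}=\zerob$ with $\hb_S\neq \zerob$: in that case $\Ab_S\hb_S = \zerob$, but the bound on $\|(\Ab_S^*\Ab_S)_{\mid S}^{-1}\|_{2\to 2}$ implies $\Ab_S$ is injective on its column span, forcing $\hb_S = \zerob$, hence $\hb=\zerob$.

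I expect no serious obstacle here; this is a clean convex-duality / subgradient argument, and the only mildly delicate step is the invertibility bookkeeping when justifying $\hb_S = -(\Ab_S^*\Ab_S)_{\mid S}^{-1}\Ab_S^*\Ab_{S^c}\hb_{S^c}$, which requires observing that $\hb_S$ itself belongs to the range over which $\Ab_S^*\Ab_S$ is inverted (true since $\hb_S$ is supported on $S$).
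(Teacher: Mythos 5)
Your argument is correct and is essentially the standard dual-certificate proof from \cite{candes2011probabilistic}, which the paper cites for this lemma without reproducing the proof: the subgradient inequality plus $\langle\vb,\hb\rangle=0$ gives $\|\xb+\hb\|_1\geq\|\xb\|_1-\tfrac14\|\hb_S\|_2+\tfrac34\|\hb_{S^c}\|_1$, and the two conditions in \eqref{ass:dual1} give $\|\hb_S\|_2\leq 2\|\hb_{S^c}\|_1$, exactly as you write. The only caveat is that the hypothesis $\|(\Ab_S^*\Ab_S)_{\mid S}^{-1}\|_{2\to2}\leq 2$ must be read as full invertibility of the $s\times s$ Gram matrix (as it is in the paper's application, where $\|\Ab_S^*\Ab_S-\Id_s\|_{2\to2}\leq 1/2$), which you correctly flag and use both to justify the formula for $\hb_S$ and to rule out $\hb$ supported on $S$.
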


First, let us focus on Conditions \eqref{ass:dual1}.  We can remark that
$$ \| \left(\Ab^*_S \Ab_S \right)_{\mid S}^{-1} \|_{2\rightarrow 2}  = \left\| \sum_{k=1}^\infty \left( \Ab_S^* \Ab_S -\Id_s \right)^k \right\|_{2\rightarrow 2}  \leq \sum_{k=1}^\infty  \left\|  \Ab_S^* \Ab_S - \Id_s  \right\|_{2\rightarrow 2} ^k.
$$
Therefore, if the condition  $\left\| \Ab_S^* \Ab_S - \Id_s \right\|_{2\rightarrow 2}  \leq 1/2$  is satisfied, then   $\| \left(\Ab^*_S \Ab_S \right)_{\mid S}^{-1} \|_{2\rightarrow 2}  \leq 2$. Hence, by Lemma \ref{lem:localIsometry}, it is clear that $\| \left(\Ab^*_S \Ab_S \right)_{\mid S}^{-1} \|_{2\rightarrow 2}  \leq 2$  with probability at least $1-\varepsilon$, provided that
$$
m \geq 8 \left( \mu_1(S) + \frac{1}{6} \max \left( \mu_1(S) -1 , 1 \right) \right) \log \left( \frac{2 s  }{\varepsilon}  \right) .
$$
By definition of $\gamma(S)$, the first inequality of Conditions \eqref{ass:dual1} is ensured with probability larger than $1-\varepsilon$ if
\begin{equation}
\label{eq:condm2}
m \geq 8 \left(  \gamma(S) + \frac{1}{6} \max \left( \gamma(S)-1 , 1 \right) \right) \log \left( \frac{2 s  }{\varepsilon}  \right).
\end{equation}
Furthermore, using Lemma \ref{lem:offSupportCoherence}, we obtain that 
$$
\max_{i\in {S^c}} \| \Ab_S^* \Ab \eb_{i} \|_{2}  \leq 1
$$
 with probability larger than $1-\varepsilon$ if
$$
m \geq \mu_1(S) \left( 1+ 4 \sqrt{\log \left( \frac{n}{\varepsilon} \right)} + 4 {\log \left( \frac{n}{\varepsilon} \right)} \right).
$$
Again by definition of $\gamma(S)$, the second part of Conditions \eqref{ass:dual2} is ensured if
\begin{equation}
\label{eq:condm3}
m \geq 9\gamma(S) \log \left( \frac{n }{\varepsilon}\right). 
\end{equation}

Conditions \eqref{ass:dual2} remain to be verified. The rest proof of Theorem \ref{thm:recovery} relies on the construction of a vector $\vb$ satisfying the conditions described in Lemma \ref{lem:inexactDuality} with high probability.To do so, we adapt the so-called golfing scheme introduced by Gross \cite{gross2011recovering} to our setting. More precisely, we will iteratively construct a vector that converges to a vector $\vb$ satisfying \eqref{ass:dual2} with high probability.

Let us first partition the sensing matrix $\Ab$ into blocks of blocks so that, from now on, we denote by $\Ab^{(1)}$  the first $m_1$ blocks of $\Ab$, $\Ab^{(2)}$ the next $m_2$ blocks, and so on. The $L$ random matrices $\left\{ \Ab^{(\ell)} \right\}_{\ell=1,\ldots,L}$ are independently distributed, and we have that $m = m_1+m_2+\hdots+m_L$. As explained before, $\Ab^{(\ell)}_S$ denotes the matrix $\Ab^{(\ell)} P_{S}$.
The golfing scheme starts by defining $\vb^{(0)}=0$, and then it inductively defines
\begin{align}
\vb^{(\ell)} = \frac{m}{m_\ell} \Ab^{(\ell)^*} \Ab^{(\ell)}_S \left( \eb - \vb_{S}^{(\ell-1)} \right) +\vb^{(\ell-1)},
\end{align}
for $\ell=1,\hdots, L$. In the rest of the proof, we set $\vb = \vb^{(L)}$. By construction, $\vb$ is in the row space of $\Ab$. The main idea of the golfing scheme is then to combine the results from the various Lemmas in Section \ref{sec:aux} with an appropriate choice of $L$ and the number $m$ of measurements, to show that the random vector $\vb$ will satisfy the assumptions of Lemma \ref{lem:inexactDuality} with large probability.
Using the shorthand notation $\vb_{S}^{(\ell)} = P_{S}^* \vb^{(\ell)}$, let us define
$$
\wb^{(\ell)} =  \eb -\vb_{S}^{(\ell)}, \; \ell = 1,\ldots,L,
$$
where $\eb = \sgn (\xb_S)$, and  $\xb \in \Rbb^{n}$ is an $s$-sparse vector supported on S. 

From the definition of $\vb_{S}^{(\ell)}$, it follows that, for any $1 \leq \ell \leq L$,
\begin{equation}
\label{eq:qirec}
\wb^{(\ell)} = \left( \Id_s -\frac{m}{m_\ell}  \Ab^{(\ell)*}_{S} \Ab^{(\ell)}_{S} \right) \wb^{(\ell-1)} = \prod_{j=1}^\ell \left( \Id_s -\frac{m}{m_j}  \Ab^{(j)*}_{S} \Ab^{(j)}_{S} \right)   \eb,
\end{equation}
and
\begin{equation}
\label{eq:vfctq}
\vb = \sum_{\ell=1}^L \frac{m}{m_\ell} \Ab^{(\ell)*} \Ab^{(\ell)}_{S} \wb^{(\ell-1)}.
\end{equation}
Note that in particular, $\wb^{(0)} =\eb$ and $\wb^{(L)} =\eb -  \vb_{S}$.
In what follows, it will be shown  that the matrices $\Id_s - \frac{m}{m_\ell} \Ab^{(\ell)*}_S \Ab^{(\ell)}_{S}$ are contractions, and that the norm of the  vector $\wb^{(\ell)}$ decreases geometrically fast as $\ell$ increases. Therefore, $\vb^{(\ell)}_{S}$ becomes close to $\eb$ as $\ell$ tends to $L$. In particular, we will prove that $\|\wb^{(L)} \|_2  \leq 1/4$ for a suitable choice of $L$. In addition, we also show that $\vb$ satisfies the condition $\| \vb_{S^c} \|_{\infty}  \leq 1/4$. All these conditions will be shown to be satisfied with a large probability (depending on $\varepsilon$).

For all $1 \leq \ell \leq L$, we assume that with high probability

\begin{align}
\label{eq:control2}
\left\| \wb^{(\ell)} \right\|_2 &\leq \underbrace{\left( \sqrt{\frac{\mu_1(S)-1}{m_\ell}} + r_\ell  \right)}_{r_\ell'}\left\| \wb^{(\ell-1)} \right\|_2 \\
\label{eq:control3}
\left\| \frac{m}{m_\ell} \left(\Ab_{S^c}^{(\ell)}\right)^* \Ab^{(\ell)}_{S} \wb^{(\ell-1)}\right\|_\infty &\leq t_\ell \| \wb^{(\ell-1)} \|_2. 
\end{align}
The values of the quantities $t_\ell$ and $r_\ell$, introduced in the above equations, will be specified later in the proof.
Note that using \eqref{eq:control2}, we can write that
\begin{align}
\label{eq:eminusv}
\left\| \eb - \vb_S \right\|_2 = \| \wb^{(L)} \|_2 \leq \| \eb \|_2 \prod_{\ell=1}^L r_\ell' \leq \sqrt{s} \prod_{\ell=1}^L r_\ell'.
\end{align}
Furthermore, Equation \eqref{eq:control3} implies that
\begin{align}
 \| \vb_{S^c} \|_\infty &= \left\| \sum_{\ell=1}^L \frac{m}{m_\ell} \left(\Ab_{S^c}^{(\ell)}\right)^* \Ab^{(\ell)}_{S} \wb^{(\ell-1)} \right\|_\infty \nonumber \\
&\leq \sum_{\ell=1}^L  \left\|  \frac{m}{m_\ell} \left(\Ab_{S^c}^{(\ell)}\right)^*\Ab^{(\ell)}_{S} \wb^{(\ell-1)} \right\|_\infty \nonumber  \\
&\leq \sum_{\ell=1}^L t_\ell \left\|   \wb^{(\ell-1)} \right\|_\infty
\nonumber \\ 
&\leq \sqrt{s}\sum_{\ell=1}^L t_\ell \prod_{j=1}^{\ell-1} r_j'. \label{eq:tell}
\end{align}

We denote by $p_1(\ell)$ and $p_2(\ell)$ the probability that the upper bound  \eqref{eq:control2}, \eqref{eq:control3} do not hold. Now, let us set the number of blocks of blocks $L$, the number of blocks $m_\ell$ in each $\Ab^{(\ell)}$ and the values of the parameters $t_\ell$ and $r_\ell$ that have been introduced above.
We propose to make the following choices : 
\begin{enumerate}
\item $L = 2 + \left\lceil   \frac{\log \left(s \right)}{ 2 \log 2}   \right\rceil  $,
 \item  $m_1, m_2 \geq c \gamma(S) \log \left( 4n \right) \log\left (2 \varepsilon^{-1}\right)$\\
  $m_\ell \geq c \gamma(S) \log \left( 2L \varepsilon^{-1}\right),$ for $\ell=3,\hdots, L$, for some sufficiently large $c \geq 1$,
\item $r_1, r_2 = \frac{1}{4\sqrt{\log 4 n}}$, \\
$r_\ell = \frac{1}{4},$ for $\ell=3,\hdots, L$,
\item $t_1, t_2 = \frac{1}{8\sqrt{s}}$, \\
$t_\ell = \frac{\log(4n)}{8\sqrt{s}},$ for $\ell=3,\hdots, L$.
\end{enumerate}
With such choices, we obtain that
$$ r_1' , r_2'= \sqrt{ \frac{\mu_1(S) -1}{m_\ell}} + \frac{1}{4\sqrt{\log n}} \leq \frac{1}{2\sqrt{\log n}} \leq \frac{1}{2},
$$
and
$$ r_\ell' =\sqrt{ \frac{\mu_1(S) -1}{m_\ell}} + \frac{1}{4} \leq \frac{1}{2}
$$
Furthermore, using \eqref{eq:eminusv}, we obtain that
\begin{equation}
 \left\| \eb-\vb_S \right\|_2 \leq  \sqrt{s}  \prod_{\ell=1}^L r_\ell' \leq \frac{  \sqrt{s}}{2^{L}} \leq \frac{1}{4}, \label{eq:conrand1}
\end{equation}
where the last inequality follows from the  previously specified choice on $L$. Moreover, using \eqref{eq:tell}, we have that
\begin{align}
\notag
\| \vb_{S^c}\|_\infty &\leq  \sqrt{s}\sum_{\ell=1}^L t_\ell \prod_{j=1}^{\ell-1} r_j' =\sqrt{s}\left(  t_1 + t_2 r_1' + t_3 r_1' r_2' +... \right)\\
\notag
&\leq \left( \frac{1}{8} + \frac{1}{16 \sqrt{\log n }} + \frac{1}{32} + ... \right) \\
&\leq \frac{1}{4}.
 \label{eq:conrand2}
\end{align}
For such a choice of parameters, and by Lemmas \ref{lem:lowDistortion} and \ref{lem:distortionInf2}, if we fix $\varepsilon \in (0, 1/6)$, the bound $c\geq 534$ ensures $p_1(1), p_1(2)$, $p_2(1)$, $p_2(2) \leq \varepsilon/2$ and $p_1(\ell), p_2(\ell)  \leq \varepsilon/2L$ for $\ell=3,\hdots, L$.
Therefore, $\sum_{\ell=1}^L p_1(\ell) \leq 2\varepsilon$ and $\sum_{\ell=1}^L p_2(\ell) \leq 2\varepsilon$. 
From the above calculation, and by Lemmas \ref{lem:lowDistortion} and \ref{lem:distortionInf2} we finally obtain that if the overall number $m$ of blocks samples obeys the condition
\begin{equation*}
m = \sum_{\ell=1}^L m_\ell \geq c {\gamma(S)} \left( 2 \log \left( 4n \right) \log\left (2 \varepsilon^{-1}\right) + (L-2)\log \left( 2L \varepsilon^{-1}\right)\right), 
\end{equation*}
which can be simplified into
\begin{equation}
m \geq c {\gamma(S)} \left( 2 \log \left( 4n \right) \log\left (2 \varepsilon^{-1}\right) + \log s \log \left( 2e \log(s) \varepsilon^{-1}\right)\right), 
\label{eq:condm1}
\end{equation}
 then the random vector $\vb$, defined by \eqref{eq:vfctq}, satisfies Assumptions \ref{ass:dual2} of Lemma \ref{lem:inexactDuality} with probability larger than $1-4\varepsilon$.

Hence, we have thus shown that if $m$ satisfies the conditions  \eqref{eq:condm2}, \eqref{eq:condm3} and \eqref{eq:condm1}, then the Assumptions  \ref{ass:dual1} and  \ref{ass:dual2} of Lemma \ref{lem:inexactDuality} simultaneously hold  with probability larger than $1-6\varepsilon$. Note that the bound \eqref{eq:condm1} is stronger than \eqref{eq:condm2} and \eqref{eq:condm3}. We complete the proof of Theorem \ref{thm:recovery} by replacing $\varepsilon$ by $\varepsilon /6$.
The final result on the required number of blocks measurements reads as follows
$$  m \geq  c {\gamma(S)} \left( 2 \log \left( 4n \right) \log\left (12 \varepsilon^{-1}\right) + \log s \log \left( 12e \log(s) \varepsilon^{-1}\right)\right),
$$
for $c= 534$, but in the statement we simplify the expression to improve the readability. Moreover, note that in our proof, for the sake of concision, there is no attempt to strenghten the previous result. Yet, we could have used the clever trick used in \cite{candes2011probabilistic}, and reused in \cite{foucart2013mathematical} which consists of oversampling blocks in the golfing scheme.

\subsection{Proof of Proposition \ref{prop:gammaSs}}
\label{proof:gamma}

Since $\gamma(S)=\max(\mu_1,\mu_2,\mu_3)$, it suffices to show that setting $\mu_i = s\mu_4$ for $i\in \{1,2,3\}$ is sufficient to ensure the inequalities \eqref{ineq:conditions}.

The first inequality in \eqref{ineq:conditions} can be shown as follows:
$$\left\| {\Bb_S}^*  {\Bb_S}  \right\|_{2\rightarrow 2} \leq  {\left\| {\Bb_S}^*  {\Bb_S}  \right\|_{\infty \rightarrow \infty} } \leq s {\left\| {\Bb}^*  {\Bb}  \right\|_{1\rightarrow \infty} }\leq s\mu_4.$$

The second inequality in \eqref{ineq:conditions} can be shown as follows:
\begin{align*}
 \sqrt{s} \max_{i \in S^c} {\left\| {\Bb_S}^* {\Bb} \eb_i \right\|_2}\leq \sqrt{s} \sqrt{s } {\left\| {\Bb}^* {\Bb}  \right\|_{1\rightarrow \infty}}\leq s\mu_4.
\end{align*} 

Finally, fix $i\in S^c$.  One can write
\begin{align*}
s \Ebb {\Bb_S}^* \left( {\Bb} \eb_i \right) \left( {\Bb} \eb_i \right)^* {\Bb_S} &\preceq s \| \left( {\Bb} \eb_i \right) \left( {\Bb} \eb_i \right)^* \|_{2\rightarrow 2} \Ebb {\Bb_{S}}^* \Bb_{S} \\ 
&\preceq  s \max_i \| {\Bb} \eb_i  \|_2^2 \Id \\
& \preceq  s  \left\| {\Bb}^* {\Bb}  \right\|_{1\rightarrow \infty} \Id \\
&\preceq  s\mu_4 \Id.
\end{align*}

\subsection{Proof of Proposition \ref{prop:gaussian}}

Let us evaluate the quantities $\left( \mu_i(S) \right)_{1\leq i \leq 3}$ introduced in Definition \ref{def:quantities} to upper bound $\gamma(S)$ with high probability.
For this purpose, using Theorem 2 in \cite{ledoux2010small}, we get that for any $0<t<1$
\begin{align}
\label{ineq:ledoux}
 \Pbb \left( \| \Bb_{S}^* \Bb_{S} \|_{2\rightarrow 2} \geq \left( 1 + \sqrt{\frac{s}{p}} \right)^2 (1+t)  \right) \leq C \exp\left( -\sqrt{ps} t^{3/2} \left( \frac{1}{\sqrt{t}}   \wedge      \left( \frac{s}{p} \right)^{1/4}  \right) /C \right),
\end{align} 
for $C$ a universal constant, under the assumption that $s>p$. We could also treat the case where $p>s$ by inverting the role of $s$ and $p$ in the above deviation inequality. We restrict our study to the case $s>p$ for simplicity.

By Inequality \eqref{ineq:ledoux}, we can consider that $\mu_1(S) \lesssim \frac{s}{p}$ with large probability (provided that $s$ is sufficiently large).
For evaluating $\mu_2(S)$, we use the following upper bound,
\begin{align*}
\max_{i\in S^c} \| \Bb_{S}^* \Bb \eb_i \|_{2} \leq \max_{i\in S^c} \| \Bb_{S}^* \|_{2\rightarrow 2} \| \Bb  \eb_i \|_{2} \leq \sqrt{\| \Bb_{S}^* \Bb_{S} \|_{2\rightarrow 2}} \max_{i\in S^c} \sqrt{\| \Bb  \eb_i \|_{2}^2 }.
\end{align*}
We already know that the first term  $\sqrt{\| \Bb_{S}^* \Bb_{S} \|_{2\rightarrow 2}}$ in the above inequality is bounded by $\sqrt{\frac{s}{p}}$ (up to a constant)  with high probability, thanks to the previous discussion on $\mu_1(S)$. 
As for the second term, we use a union bound and the sub-gamma property of the chi-squared distribution, see \cite[p.29]{boucheron2013concentration}, to derive that
\begin{align*}
\Pbb \left( \max_{i\in S^c} \| \Bb \eb_i \|_2^2 \geq 2\left( \sqrt{\frac{t}{p}} + \frac{t}{p}\right)  \right) \leq (n-s)  \exp(-t)\leq n  \exp(-t) .
\end{align*}
Let $\delta >1$. Using the above deviation inequality, we get that 
$$ \max_{i\in S^c} \sqrt{\| \Bb  \eb_i \|_{2}^2 } \lesssim \sqrt{  \frac{\delta\log (s) }{p} },
$$
with probability larger than $1-ns^{-\delta}$.
Thus, we get the following upper bound for $\mu_2(S)$:
\begin{align*}
\mu_2(S) \lesssim  \frac{s\sqrt{\delta \log(s)}}{p},
\end{align*}
that holds with high probability provided that $s$ is sufficiently large.
Finally, by conditioning with respect to $\Bb_S$ and using the independence of $\Bb_S$ and $\Bb\eb_i$ for $i \in S^c$, we have that
\begin{align*}
s \max_{i\in S^c} &\left\| \Ebb \left( \Bb_{S}^* \left( \Bb\eb_i \right) \left( \Bb \eb_i \right)^*  \Bb_{S} \right) \right\|_{2\rightarrow 2} = s  \max_{i\in S^c} \left\| \Ebb \left[ \Ebb \left( \Bb_{S}^* \left( \Bb \eb_i \right) \left( \Bb \eb_i \right)^*  \Bb_{S}  | \Bb_{S}\right) \right] \right\|_{2\rightarrow 2} , \\
&= s \max_{i\in S^c} \left\| \Ebb\left[ \Bb_{S}^* \Ebb \left(  \left( \Bb \eb_i \right) \left( \Bb \eb_i \right)^* \right) \Bb_{S} \right]  \right\|_{2\rightarrow 2} = s \max_{i\in S^c} \left\| \Ebb\left[ \Bb_{S}^* \frac{1}{p} \Id \Bb_{S}  \right]  \right\|_{2\rightarrow 2}  = \frac{s}{p}.
\end{align*}
Hence, one can take $\mu_3(S) = \frac{s}{p}$.
Combining all these estimates we get that $\gamma(S)\lesssim \frac{s}{p} \sqrt{\delta \log(s)}$. Therefore,  assuming that the lower bound on $m$ in Theorem \ref{thm:recovery} still holds in the case of acquisition by blocks made of Gaussian entries, we need $m = O\left(\frac{s}{p} \log(s) \log(n)\right)$ blocks of measurements to ensure exact recovery, that is an overall number of measurements $q=O(s\log (s)\log(n))$.

\subsection{Proof of Proposition \ref{prop:tensorlimit}}

The proof is divided in two parts. First we show the result for $1\leq s\leq \sqrt{n}$ and then we show it for $\sqrt{n}<s\leq n$.
We let $\eb_i$ denote the $i$-th element of the canonical basis.

\textbf{Part 1:}
Fix $s \in \{1,\hdots,\sqrt{n}\}$.
Let $\Cc_s$ denote the class of vectors of kind $\xb=\alpha\otimes \eb_1$, where $\alphab \in \Rbb^{\sqrt{n}}$ is $s$-sparse. 
Note that every $\xb\in \Cc_s$ is $s$-sparse and that
\begin{align*}
 \Ab x&= (\widetilde{\Psib}_{K,:} \otimes \Psib)\cdot (\alphab \otimes \eb_1) \\
&= \left(\widetilde{\Psib}_{K,:} \alpha \right) \otimes \Psib \eb_1.
\end{align*}

 In order to identify every $s$-sparse $\xb$ knowing $\yb=\Ab \xb$, there should not exist two distinct $s$-sparse vectors $\alphab^{(1)}$ and $\alphab^{(2)}$ in $\Cbb^{\sqrt{n}}$ such that $\widetilde{\Psib}_{K,:} \alphab^{(1)}=\widetilde{\Psib}_{K,:} \alphab^{(2)}$. The vector $\alphab^{(1)}-\alphab^{(2)}$ is $\min(2s,\sqrt{n})$-sparse. Therefore, a necessary condition for recovering all $s$-sparse vectors with $1\leq s \leq \sqrt{n}$ is that $\widetilde{\Psib}_{K,:} \alphab \neq 0$ for all non-zero $\min(2s,\sqrt{n})$-sparse vectors $\alpha$. To finish the first part of the proof it suffices to remark that a necessary condition for a set of $\min(2s,\sqrt{n})$ columns of $\widetilde{\Psib}_{K,:}$ to be linearly independent is that $m=|K|\geq \min(2s,\sqrt{n})$, see Lemma \ref{lem:cohen}. 

\textbf{Part 2:}
Assume that $\sqrt{n} < s\leq n$. Consider the class $\mathcal{C}_s$ of $s$-sparse vectors of kind $\displaystyle \xb=\sum_{l=1}^{\sqrt{n}} \alphab^{(l)}\otimes \eb_l$, where $\supp(\alphab^{(1)})=\{1,\hdots,\sqrt{n}\}$. For $\xb\in \mathcal{C}_s$
\begin{equation*}
 \Ab \xb =\sum_{l=1}^{\sqrt{n}} \left( \widetilde{\Psib}_{K,:} \alphab^{(l)} \right) \otimes \Psib \eb_l.
\end{equation*}
Similarly to the first part of the proof, in order to identify every $s$-sparse vectors, there should not exist $\alphab^{(1)}$ and $\alphab^{(1)'}$ with support equal to $\{1,\hdots,\sqrt{n}\}$ such that $\widetilde{\Psib}_{K,:} \alphab^{(1)}=\widetilde{\Psib}_{K,:} \alphab^{(1)'}$. We showed in the previous section that a necessary condition for this condition to hold is $m=\sqrt{n}$.

\subsection{Proof of Proposition \ref{prop:2DFour2}}
\label{comput2DFourier}

We consider blocks that consist of discrete lines in the 2D Fourier space as in Fig~\ref{fig:schema}(b). 
We assume that $\sqrt{n} \in \Nbb$ and that $\Ab_0$ is the 2D Fourier matrix applicable on $\sqrt{n}\times \sqrt{n}$ images. 
For all $p_1 \in \left\lbrace 1, \hdots , \sqrt{n} \right\rbrace$,
\begin{align}
 \Bb_{p_{1}}  &= \left[  \frac{1}{\sqrt{n}} \exp \left( 2i\pi \left( \frac{p_1 \ell_1 + p_2 \ell_2}{\sqrt{n}} \right) \right) \right]_{ \displaystyle (p_1 , p_2)(\ell_1 ,\ell_2)      } 
\end{align}
with $1 \leq p_2 \leq \sqrt{n}  ,  1 \leq \ell_1 ,\ell_2 \leq \sqrt{n}$.
Let
$ S \subset \left\lbrace 1 , \hdots , \sqrt{n} \right\rbrace \times \left\lbrace 1 , \hdots , \sqrt{n} \right\rbrace
$ denote the support of $\xb$,
with $|{S}| =s$.
By definition of the 2D Fourier matrix of size $n \times n$, $\left\| \Bb_k^* \Bb_k \right\|_{1\rightarrow \infty} = 1/\sqrt{n}$, for all $k \in \left\lbrace 1 , \hdots , \sqrt{n} \right\rbrace$. Thus, Theorem \ref{thm:recovery} leads to
$$
m \geq c s  \frac{1}{\sqrt{n}} \max_{1 \leq k \leq M} \frac{1}{\pi_k}   \log \left( 4n \right) \log\left (12 \varepsilon^{-1}\right)  .
$$
Therefore, the choice of an optimal drawing probability, regarding the number of measurements, is given by 
$$ \pi_k^\star = \frac{1}{\sqrt{n}},  \quad \forall k \in  \left\lbrace 1,\hdots,\sqrt{n} \right\rbrace
$$
and the number of measurements can be written as follows
$$
m \geq C  s  \log \left( 4n \right) \log\left (12 \varepsilon^{-1}\right),
$$
which ends the proof of Proposition \ref{prop:2DFour2}.

\section{An example with overlapping blocks}
\label{app:overlap}

Let us illustrate the overlapping setting, in the case of blocks that consist in rows and columns in the 2D Fourier domain. Matrix $\Ab_0 \in \Cbb^{n \times n}$ is the 2D Fourier transform matrix. We set 
$$ I_k^{\text{row}} = \left\lbrace i \in \left\lbrace 1, \hdots , n \right\rbrace, \;  (k-1)\sqrt{n} \leq i \leq k \sqrt{n} \right\rbrace
\qquad 
I_k^{\text{col}} = \left\lbrace k, \sqrt{n} + k, \hdots , (\sqrt{n}-1)\sqrt{n} + k\right\rbrace
$$
the sets of indexes  of $\left( \ab_i^*\right)_{i \in \left\lbrace 1, \hdots , n \right\rbrace} $ that respectively correspond to the $k$-th row and the $k$-column in the 2D Fourier plane.
Then, we can write the blocks as follows:
\[
\Bb_k = 
\left\{
\begin{array}{ll}
\left( \frac{1}{\sqrt{2}} \ab_i^* \right)_{i \in I_k^{\text{row}}} &  \text{if} \; k \in \left\lbrace 1 , \hdots , \sqrt{n} \right\rbrace \\
\left( \frac{1}{\sqrt{2}} \ab_i^* \right)_{i \in I_{k-\sqrt{n}}^{\text{col}}} &  \text{if} \; k \in \left\lbrace \sqrt{n}+1  , \hdots , 2\sqrt{n} \right\rbrace.
\end{array}
\right.
\]
We have chosen the normalization factor equal to $1/\sqrt{2}$, as suggested, since each pixel of the image belongs to two blocks: one row and one column. 
According to Corollary \ref{cor:optimalpi}, we conclude that the required number of blocks of measurements must satisfy
\begin{align}
m &\geq c s \frac{1}{2\sqrt{n}} \max_{1 \leq k \leq M} \frac{1 }{\pi_k} \left( 2 \log \left( 4n \right) \log\left (12 \varepsilon^{-1}\right) + \log s \log \left( 12e \log(s) \varepsilon^{-1}\right)\right).
\end{align}
Choosing the uniform probability for $\Pi^\star$, i.e. $\pi^\star_k=\frac{1}{2\sqrt{n}}$ for all $k  \in \left\lbrace 1, \hdots , 2\sqrt{n} \right\rbrace$ leads to the following number of blocks of measurements
\begin{align}
m &\geq c s  \left( 2 \log \left( 4n \right) \log\left (12 \varepsilon^{-1}\right) + \log s \log \left( 12e \log(s) \varepsilon^{-1}\right)\right),
\end{align}
which is the same requirement in the 2D Fourier domain without overlapping, see Proposition \ref{prop:2DFour2}. 

\bibliographystyle{alpha}
\bibliography{mybib}

\end{document}